\def\B{\mathscr B}
\def\C{\mathbb C}
\def\d{\mathrm{d}}
\def\D{\mathscr D}
\def\dom{\mathcal D}
\def\G{\mathcal G}
\def\H{\mathcal H}
\def\h{\frak H}
\def\lone{\mathop{\mathrm{L}^1}\nolimits}
\def\N{\mathbb N}
\def\R{\mathbb R}
\def\S{\mathbb S}
\def\U{\mathrm U}
\def\Z{\mathbb Z}
\def\e{\mathop{\mathrm{e}}\nolimits}
\def\Ran{\mathop{\mathrm{Ran}}\nolimits}
\def\Span{\mathop{\mathrm{Span}}\nolimits}
\DeclareMathOperator*{\slim}{s\hspace{0.1pt}-\hspace{0.1pt}lim}
\DeclareMathOperator*{\wlim}{w\hspace{0.1pt}-\hspace{0.1pt}lim}
\newtheorem{Theorem}{Theorem}[section]
\newtheorem{Remark}[Theorem]{Remark}
\newtheorem{Example}[Theorem]{Example}
\newtheorem{Lemma}[Theorem]{Lemma}
\newtheorem{Corollary}[Theorem]{Corollary}
\newtheorem{Assumption}[Theorem]{Assumption}
\begin{document}

%--------------------------------------------------------------------------------------
% Title
%--------------------------------------------------------------------------------------

\title{Stationary scattering theory for unitary operators\\
with an application to quantum walks}

\author{R. Tiedra de Aldecoa\footnote{Partially supported by the Chilean Fondecyt
Grant 1170008.}}

\date{\small}
\maketitle
\vspace{-1cm}

\begin{quote}
\begin{itemize}
\item[] Facultad de Matem\'aticas, Pontificia Universidad Cat\'olica de Chile,\\
Av. Vicu\~na Mackenna 4860, Santiago, Chile\\
E-mail: rtiedra@mat.uc.cl
\end{itemize}
\end{quote}

%--------------------------------------------------------------------------------------

\begin{abstract}
We present a general account on the stationary scattering theory for unitary operators
in a two-Hilbert spaces setting. For unitary operators $U_0,U$ in Hilbert spaces
$\H_0,\H$ and an identification operator $J:\H_0\to\H$, we give the definitions and
collect properties of the stationary wave operators, the strong wave operators, the
scattering operator and the scattering matrix for the triple $(U,U_0,J)$. In
particular, we exhibit conditions under which the stationary wave operators and the
strong wave operators exist and coincide, and we derive representation formulas for
the stationary wave operators and the scattering matrix. As an application, we show
that these representation formulas are satisfied for a class of anisotropic quantum
walks recently introduced in the literature.
\end{abstract}

\textbf{2010 Mathematics Subject Classification:} 46N50, 47A40, 81Q10, 81Q12.

\smallskip

\textbf{Keywords:} Unitary operators, stationary scattering theory, quantum walks.

%--------------------------------------------------------------------------------------
\tableofcontents
%--------------------------------------------------------------------------------------

%--------------------------------------------------------------------------------------
\section{Introduction and main results}\label{section_intro}
\setcounter{equation}{0}
%--------------------------------------------------------------------------------------

In recent years, there has been a surge of research activity on physical and
mathematical systems described by unitary operators. CMV matrices, Discrete quantum
walks, Koopman operators of dynamical systems, and Floquet operators are some examples
of classes of unitary operators having received a great deal of attention.
Accordingly, in order to have at disposal mathematical tools suited for the study of
unitary operators, various authors have undertaken the task to adapt to the unitary
setup spectral and scattering methods available in the self-adjoint setup. Several
approaches, such as CMV representation of unitary operators
\cite{CMV_2003,CMV_2005,CMV_2006,Sim_2007}, Mourre theory for unitary operators
\cite{ABC15_1,ABC15_2,ABCF06,Bou_2013,FRT13,RST_2018}, time-dependent scattering
theory for unitary operators
\cite{Bes_2011,FMSST_2019,Kap_2010,Kap_2012,RST_2019,ST_2019}, or commutator methods
for unitary dynamical systems \cite{RT19,Sim_2018,Tie15,Tie17_1,Tie18}, have already
been developed to some extent. On the other hand, not much as been done in the case of
stationary scattering theory for unitary operators (see \cite{KK70,Mor19,Yaf92} for
partial results). Our purpose in this paper is to fill in this gap by presenting a
first general account on the stationary scattering theory for unitary operators in a
two-Hilbert spaces setting. Additionally, as an illustration, we show that the theory
presented here applies to a class of anisotropic quantum walks recently introduced in
the literature.

Our main source of inspiration is the monograph \cite{Yaf92}, where the stationary
scattering theory of self-ajoint operators is presented in detail. But we also take
advantage of results on smooth operators for unitary operators, on resolvents of
unitary operators and on quantum walks established in \cite{ABCF06}, \cite{KK70} and
\cite{RST_2018,RST_2019}, respectively.

Here is a description of the content of the paper. In Section \ref{sec_res}, we
collect results on the resolvent and on smooth operators for a unitary operator $U_0$
in a separable Hilbert space $\H_0$. In Lemma \ref{lemma_weak}, we present some
properties of locally $U_0$-smooth and weakly locally $U_0$-smooth operators. In Lemma
\ref{lemma_F_0}, we collect results on decomposable operators in the direct integral
representation of the absolutely continuous part of $U_0$. In Lemma \ref{lemma_HS}, we
present some results on limits of resolvents in the Hilbert-Schmidt topology. Finally,
at the end of the section, we introduce a second unitary operator $U$ in a second
separable Hilbert space $\H$ and an identification operator $J:\H_0\to\H$, and recall
useful formulations of the second resolvent equation for the triple $(U,U_0,J)$.

In Section \ref{section_wave}, we define the stationary wave operators
$w_\pm(U,U_0,J)$ and the strong wave operators $W_\pm(U,U_0,J)$ for the triple
$(U,U_0,J)$, and we exhibit conditions under which both types of wave operators exist
and coincide. In Theorem \ref{thm_stat_wave}, we give conditions guaranteeing the
existence of the stationary wave operators, as well as their representation formulas.
In Theorem \ref{thm_strong_wave}, we present conditions for the existence of the
strong wave operators and their identity with the stationary wave operators. Finally,
in Example \ref{ex_trace_class}, we show that the assumptions of these theorems are
satisfied when the perturbation $V=JU_0-UJ$ is trace class.

In Section \ref{section_matrix}, we define the scattering operator $S(U,U_0,J)$ and
the scattering matrix $S(\theta)$ for the triple $(U,U_0,J)$, and we derive
representation formulas for the scattering matrix (the spectral parameter $\theta$
belongs to an appropriate subset of the spectrum of $U_0$). After proving some
preparatory lemmas, we establish in Theorem \ref{thm_S_matrix} the representation
formulas for the scattering matrix, and we explain how they simplify in the
one-Hilbert space case $\H_0=\H$ and $J=1_{\H_0}$.

Finally, in Section \ref{section_walks}, we apply the theory of the previous sections
to the class of anisotropic quantum walks introduced in \cite{RST_2018,RST_2019}.
First, we show in Lemma \ref{lemma_V_trace} that for these quantum walks the
perturbation $V$ decomposes as a product of Hilbert-Schmidt operators. Then we show in
Theorem \ref{thm_walks} that this implies that the strong wave operators and the
stationary wave operators of the quantum walks coincide, and that the representation
formulas for the stationary wave operators and the scattering matrix are satisfied.

As a conclusion, we point out that the representation formulas obtained in this paper
could be useful to establish various new results. For instance, one could use the
representation formulas of the wave operators as a first step toward the proof of
topological Levinson theorems for unitary scattering systems, as it was done for
self-adjoint scattering systems \cite{Ric_2016}. One could also develop a theory of
higher order resolvent estimates for unitary operators (not yet available in the
literature), and then apply it to the resolvent appearing in the representation
formula for the scattering matrix to infer smoothness or mapping properties of the
scattering operator of unitary scattering systems. This would allow in particular to
prove the existence of quantum time delay for explicit unitary scattering systems,
since mapping properties of the scattering operator are needed for it (see
\cite[Thm.~5.3]{ST_2019}). Finally, having at disposal general representation formulas
for the wave operators and scattering operator of unitary scattering systems could be
helpful to establish results on generalised eigenfunctions and eigenvalues of some
classes of unitary systems.

~\\
\noindent
{\bf Acknowledgements.} The author thanks M. Moscolari for his interest in the project
and a useful discussion about Lemma \ref{lemma_limits}(c).

%--------------------------------------------------------------------------------------
\section{Resolvents and smooth operators}\label{sec_res}
\setcounter{equation}{0}
%--------------------------------------------------------------------------------------

In this section, we collect results on resolvents and smooth operators for unitary
operators in a two-Hilbert spaces setting, starting with the case of one unitary
operator in one Hilbert space.

Let $\H_0$ be a separable Hilbert space with norm $\|\;\!\cdot\;\!\|_{\H_0}$ and inner
product $\langle\;\!\cdot,\;\!\cdot\;\!\rangle_{\H_0}$ linear in the first argument.
Let $U_0$ be a unitary operator in $\H_0$ with spectral decomposition
$$
U_0=\int_0^{2\pi}E^{U_0}(\d\theta)\e^{i\theta},
$$
where $E^{U_0}$ is a real spectral measure on the interval $[0,2\pi)$ (which we
identify with the complex unit circle $\S^1$ when necessary). Let
$\H_{\rm ac}(U_0)\subset\H_0$ be the subspace of absolute continuity of $U_0$,
$P_{\rm ac}(U_0)$ the projection onto $\H_{\rm ac}(U_0)$, and
$E^{U_0}_{\rm ac}:=P_{\rm ac}(U_0)E^{U_0}$ the absolutely continuous part of the
spectral measure $E^{U_0}$. Then, the resolvent of $U_0$ is defined as
$$
R_0(z):=\big(1-zU_0^*\big)^{-1},\quad z\in\C\setminus\S^1.
$$
The resolvent of $U_0$ can be written as a geometric series
\begin{equation}\label{eq_series}
R_0(z)=
\begin{cases}
\sum_{n\ge0}\big(zU_0^*\big)^n & \hbox{if $|z|<1$}\\
-\sum_{n\ge1}\big(z^{-1}U_0\big)^n & \hbox{if $|z|>1$,}
\end{cases}
\end{equation}
and it satisfies the identity $R_0({\bar z}^{-1})^*=-zU_0^*R_0(z)$ relating its values
inside and outside of $\S^1$\;\!:
\begin{center}
\includegraphics[width=155pt]{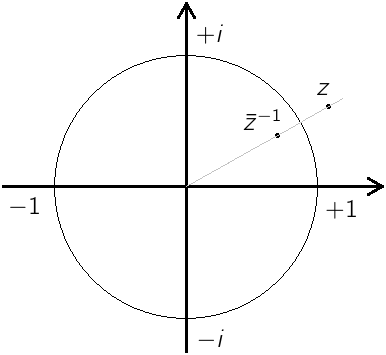}
\end{center}
The resolvent of $U_0$ also satisfies the following relation, which is the analogue of
the first resolvent equation for self-adjoint operators\;\!:
$$
R_0(z_1)-R_0(z_2)=(z_1-z_2)R_0(z_1)U_0^*R_0(z_2),
\quad z_1,z_2\in\C\setminus\S^1.
$$
Setting $z_1=r\e^{i\theta}$ and $z_2=r^{-1}\e^{i\theta}$ for
$r\in(0,\infty)\setminus\{1\}$ and $\theta\in[0,2\pi)$, one gets the relation
$$
R_0(r\e^{i\theta})-R_0(r^{-1}\e^{i\theta})
=\big(1-r^2\big)R_0(r\e^{i\theta})R_0(r\e^{i\theta})^*
=\big(1-r^2\big)\big|R_0(r\e^{i\theta})\big|^2,
$$
which suggests to define the bounded operator \cite[Eq.~(5.2)]{KK70}
\begin{equation}\label{eq_norm}
\delta_0(r,\theta)
:=\tfrac1{2\pi}\big(1-r^2\big)\big|R_0(r\e^{i\theta})\big|^2
\quad\hbox{with norm}\quad
\big\|\delta_0(r,\theta)\big\|_{\B(\H_0)}=\frac{1+r}{2\pi\;\!|1-r|}.
\end{equation}
Here and in the sequel, we write $\B(\H_1)$ for the set of bounded operators on a
Hilbert space $\H_1$, and we write $\B(\H_1,\H_2)$ for the set of bounded operators
from a Hilbert space $\H_1$ to a Hilbert space $\H_2$.

The operator $\delta_0(r,\theta)$ satisfies the symmetry relation
\begin{equation}\label{eq_minus}
\delta_0(r^{-1},\theta)=-\delta_0(r,\theta).
\end{equation}
Using the spectral decomposition of $U_0$ and Fubini's theorem, one gets for
$\varphi_0,\psi_0\in\H_0$
$$
\left\langle\int_0^{2\pi}\d\theta\,\delta_0(r,\theta)\varphi_0,
\psi_0\right\rangle_{\H_0}
=\left\langle\int_0^{2\pi}E^{U_0}(\d\theta')
\left(\tfrac1{2\pi}\int_0^{2\pi}\d\theta\,\big(1-r^2\big)
\big|1-r\e^{i(\theta-\theta')}\big|^{-2}\right)\varphi_0,\psi_0\right\rangle_{\H_0}.
$$
Since the integral of the Poisson kernel in the parenthesis is equal to $1$, namely,
$$
\tfrac1{2\pi}\int_0^{2\pi}\d\theta\,\big(1-r^2\big)
\big|1-r\e^{i(\theta-\theta')}\big|^{-2}=1,
\quad\theta'\in[0,2\pi),
$$
one obtains that
$
\big\langle\int_0^{2\pi}\d\theta\,\delta_0(r,\theta)\varphi_0,\psi_0\big\rangle_{\H_0}
=\langle\varphi_0,\psi_0\rangle_{\H_0}
$,
meaning that we have the (strong integral) identity
\begin{equation}\label{eq_int_delta}
\int_0^{2\pi}\d\theta\,\delta_0(r,\theta)=1,
\quad r\in(0,\infty)\setminus\{1\}.
\end{equation}
We also recall from \cite[Eq.~5.4]{KK70} and \cite[Eq.~1.11.4]{Yaf92} that for any
$\varphi_0,\psi_0\in\H_0$ we have the relations
\begin{equation}\label{eq_derivative}
\lim_{\varepsilon\searrow0}
\big\langle\delta_0\big((1-\varepsilon)^{\pm1},\theta\big)\varphi_0,
\psi_0\big\rangle_{\H_0}
=\pm\tfrac\d{\d\theta}\;\!\big\langle E^{U_0}_{\rm ac}\big((0,\theta]\big)\varphi_0,
\psi_0\big\rangle_{\H_0}
=\pm\tfrac\d{\d\theta}\;\!\big\langle E^{U_0}\big((0,\theta]\big)\varphi_0,
\psi_0\big\rangle_{\H_0}
\end{equation}
for a.e. $\theta\in[0,2\pi)$. But, we emphasise that these relations do not imply the
existence of the weak limits of $\delta_0\big((1-\varepsilon)^{\pm1},\theta\big)$ as
$\varepsilon\searrow0$. In fact, the divergence of the norm in \eqref{eq_norm} as
$r\to1$ shows that these limits do not exist. Yet, \eqref{eq_derivative} implies that
for each $\varphi_0\in\H_0$ there exists a function
$c_{\varphi_0}:[0,2\pi)\to[0,\infty)$ such that
\begin{equation}\label{eq_bound_res}
\pm\tfrac1{2\pi}\big(1-(1-\varepsilon)^{\pm2}\big)
\big\|R_0\big((1-\varepsilon)^{\pm1}\e^{i\theta}\big)\varphi_0\big\|_{\H_0}^2
=\big\langle\delta_0(1-\varepsilon,\theta)\varphi_0,\varphi_0\big\rangle_{\H_0}
\le c_{\varphi_0}(\theta)
\end{equation}
for all $\varepsilon\in(0,1)$ and a.e. $\theta\in[0,2\pi)$.

If $\G$ is an auxiliary separable Hilbert space, then an operator $T_0\in\B(\H_0,\G)$
is called locally $U_0$-smooth on a Borel set $\Theta\subset[0,2\pi)$ if there exists
$c_\Theta\ge0$ such that
\begin{equation}\label{def_U_smooth}
\sum_{n\in\Z}\big\|T_0\;\!U_0^nE^{U_0}(\Theta)\varphi\big\|_\G^2
\le c_\Theta\;\!\|\varphi\|_{\H_0}^2\quad\hbox{for all $\varphi\in\H_0$},
\end{equation}
and $T_0$ is called $U_0$-smooth if \eqref{def_U_smooth} is satisfied with
$\Theta=[0,2\pi)$. Similarly, $T_0$ is called weakly locally $U_0$-smooth on a Borel
set $\Theta\subset[0,2\pi)$ if the weak limit
\begin{equation}\label{def_weak}
\wlim_{\varepsilon\searrow0}T_0\;\!\delta_0(1-\varepsilon,\theta)E^{U_0}(\Theta)T_0^*
\hbox{~exists for a.e. $\theta\in[0,2\pi)$,}
\end{equation}
and $T_0$ is called weakly $U_0$-smooth if \eqref{def_weak} is satisfied with
$\Theta=[0,2\pi)$. Some properties of locally $U_0$-smooth and weakly locally
$U_0$-smooth operators are collected in the following lemma. We use the notation
$\G^*$ for the adjoint space of $\G$, and we set
$$
g_\pm(\varepsilon):=\tfrac1{2\pi}\big(1-(1-\varepsilon)^{\pm2}\big),
\quad\varepsilon\in(0,1).
$$

\begin{Lemma}[$U_0$-smooth operators]\label{lemma_weak}
Let $T_0\in\B(\H_0,\G)$ and let $\Theta\subset[0,2\pi)$ be a Borel set.
\begin{enumerate}
\item[(a)] $T_0$ is weakly locally $U_0$-smooth on $\Theta$ if and only if any of the
following estimates is satisfied for some function $c:[0,2\pi)\to[0,\infty)$ and all
$\varepsilon\in(0,1)$
\begin{align}
\big\|T_0\;\!\delta_0(1-\varepsilon,\theta)E^{U_0}(\Theta)T_0^*\big\|_{\B(\G)}
\le c(\theta),\quad\hbox{a.e. $\theta\in[0,2\pi)$,}\label{eq_estimate_1}\\
\big|g_\pm(\varepsilon)\big|^{1/2}\;\!
\big\|T_0R_0\big((1-\varepsilon)^{\pm1}\e^{i\theta}\big)
E^{U_0}(\Theta)\big\|_{\B(\H_0,\G)}
\le c(\theta)^{1/2},\quad\hbox{a.e. $\theta\in[0,2\pi)$,}
\label{eq_estimate_2}\\
\big|g_\pm(\varepsilon)\big|^{1/2}\;\!
\big\|T_0R_0\big((1-\varepsilon)^{\pm1}\e^{i\theta}\big)^*
E^{U_0}(\Theta)\big\|_{\B(\H_0,\G)}
\le c(\theta)^{1/2},\quad\hbox{a.e. $\theta\in[0,2\pi)$.}
\label{eq_estimate_3}
\end{align}
\item[(b)] If $T_0$ is weakly locally $U_0$-smooth on $\Theta$ and $\varphi_0\in\H_0$,
then the weak limit
$
\wlim_{\varepsilon\searrow0}
T_0\;\!\delta_0(1-\varepsilon,\theta)E^{U_0}(\Theta)\varphi_0
$
exists for a.e. $\theta\in[0,2\pi)$.
\item[(c)] Assume that for each $\varphi_0$ in a dense set $\D_0\subset\H_0$ the weak
limit $\wlim_{\varepsilon\searrow0}T_0\;\!\delta_0(1-\varepsilon,\theta)\varphi_0$
exists for a.e. $\theta\in[0,2\pi)$. Then, there exists a set
$\D_0'\subset\H_{\rm ac}(U_0)$ dense in $\H_{\rm ac}(U_0)$ such that
$$
\sum_{n\in\Z}\big\|T_0U_0^n\psi_0\big\|_\G^2<\infty
\hbox{~for all $\psi_0\in\D_0'$.}
$$
\item[(d)] If $T_0$ is locally $U_0$-smooth on $\Theta$, then $T_0$ is weakly locally
$U_0$-smooth on $\Theta$.
\item[(e)] If $T_0$ is locally $U_0$-smooth on $\Theta$, then
$\overline{E^{U_0}(\Theta)T_0^*\;\!\G^*}\subset\H_{\rm ac}(U_0)$.
\end{enumerate}
\end{Lemma}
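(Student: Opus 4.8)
The plan is to establish the operator‑norm characterisation (a) first, deduce (d) from it in one line, obtain (b) from (a) by a Cauchy–Schwarz splitting, dispatch (e) by a Fourier‑coefficient argument, and finally prove (c), which is the delicate part.

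\emph{Part (a).} Set $A_\varepsilon(\theta):=T_0\,\delta_0(1-\varepsilon,\theta)E^{U_0}(\Theta)T_0^*$. Since $\delta_0(1-\varepsilon,\theta)=g_+(\varepsilon)\,R_0\big((1-\varepsilon)\e^{i\theta}\big)R_0\big((1-\varepsilon)\e^{i\theta}\big)^*$ with $g_+(\varepsilon)>0$, and $R_0(\cdot)$, $E^{U_0}(\Theta)$ are functions of $U_0$ (hence commute, with $E^{U_0}(\Theta)^2=E^{U_0}(\Theta)$), one gets $A_\varepsilon(\theta)=Y^*Y\ge0$ for $Y:=|g_+(\varepsilon)|^{1/2}R_0\big((1-\varepsilon)\e^{i\theta}\big)^*E^{U_0}(\Theta)T_0^*$, so $\|A_\varepsilon(\theta)\|_{\B(\G)}^{1/2}=|g_+(\varepsilon)|^{1/2}\big\|T_0R_0((1-\varepsilon)\e^{i\theta})E^{U_0}(\Theta)\big\|_{\B(\H_0,\G)}$; using the symmetry \eqref{eq_minus}, the in/out identity $R_0({\bar z}^{-1})^*=-zU_0^*R_0(z)$ (with $z=(1-\varepsilon)\e^{i\theta}$) and unitarity of $U_0^*$ one checks this quantity also equals each expression in \eqref{eq_estimate_2}–\eqref{eq_estimate_3} with both signs, so all of \eqref{eq_estimate_1}–\eqref{eq_estimate_3} say exactly $\sup_\varepsilon\|A_\varepsilon(\theta)\|\le c(\theta)$ a.e. It remains to prove that the weak limit \eqref{def_weak} exists a.e.\ iff $\theta\mapsto\sup_{\varepsilon\in(0,1)}\|A_\varepsilon(\theta)\|_{\B(\G)}$ is a.e.\ finite. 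For "$\Rightarrow$": $\varepsilon\mapsto A_\varepsilon(\theta)$ is operator‑norm continuous on $(0,1)$ (analyticity of $R_0$), for each $\varphi,\psi\in\G$ the map $\varepsilon\mapsto\langle A_\varepsilon(\theta)\varphi,\psi\rangle_\G$ is bounded as $\varepsilon\to1$ (then $R_0\to1$) and convergent as $\varepsilon\searrow0$, so two uses of the uniform boundedness principle give $\sup_\varepsilon\|A_\varepsilon(\theta)\|<\infty$; this supremum, taken over rational $\varepsilon$, is measurable. For "$\Leftarrow$": by \eqref{eq_derivative} the matrix elements $\langle A_\varepsilon(\theta)\varphi,\psi\rangle_\G=\langle\delta_0(1-\varepsilon,\theta)E^{U_0}(\Theta)T_0^*\varphi,T_0^*\psi\rangle_{\H_0}$ converge a.e.\ for each fixed $\varphi,\psi$; running $\varphi,\psi$ over a countable dense set of $\G$ and invoking the uniform bound, a standard approximation argument upgrades this to weak convergence of $A_\varepsilon(\theta)$ for a.e.\ $\theta$.

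\emph{Parts (d) and (b).} If $T_0$ is locally $U_0$‑smooth on $\Theta$, expanding $R_0(z)E^{U_0}(\Theta)=\sum_{n\ge0}z^nU_0^{*n}E^{U_0}(\Theta)$ for $|z|<1$ and applying the triangle inequality and Cauchy–Schwarz gives $\|T_0R_0((1-\varepsilon)\e^{i\theta})E^{U_0}(\Theta)\|^2\le(2\pi g_+(\varepsilon))^{-1}c_\Theta$, hence $\|A_\varepsilon(\theta)\|\le c_\Theta/2\pi$ uniformly in $\varepsilon,\theta$, and (a) applies. For (b), fix $\varphi_0\in\H_0$; writing $\delta_0(1-\varepsilon,\theta)=g_+(\varepsilon)R_0R_0^*$ and moving operators across the inner product, $\langle T_0\delta_0(1-\varepsilon,\theta)E^{U_0}(\Theta)\varphi_0,\psi\rangle_\G=g_+(\varepsilon)\langle R_0^*\varphi_0,R_0^*E^{U_0}(\Theta)T_0^*\psi\rangle_{\H_0}$, which by Cauchy–Schwarz is $\le\langle\delta_0(1-\varepsilon,\theta)\varphi_0,\varphi_0\rangle_{\H_0}^{1/2}\,\|A_\varepsilon(\theta)\|_{\B(\G)}^{1/2}\,\|\psi\|_\G\le\big(c_{\varphi_0}(\theta)c(\theta)\big)^{1/2}\|\psi\|_\G$ by \eqref{eq_bound_res} and \eqref{eq_estimate_1}. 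Thus $\sup_\varepsilon\|T_0\delta_0(1-\varepsilon,\theta)E^{U_0}(\Theta)\varphi_0\|_\G<\infty$ a.e., and since \eqref{eq_derivative} makes the matrix elements converge a.e., the approximation argument of (a) again yields weak convergence.

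\emph{Part (e).} For $\psi\in\G$ put $\chi:=E^{U_0}(\Theta)T_0^*\psi$. Then $\langle U_0^n\chi,\chi\rangle_{\H_0}=\langle T_0U_0^nE^{U_0}(\Theta)T_0^*\psi,\psi\rangle_\G$, so $\sum_{n\in\Z}|\langle U_0^n\chi,\chi\rangle_{\H_0}|^2\le\|\psi\|_\G^2\,c_\Theta\|T_0\|^2\|\psi\|_\G^2<\infty$ by \eqref{def_U_smooth}. Hence the Fourier coefficients of the positive measure $\langle E^{U_0}(\cdot)\chi,\chi\rangle$ lie in $\ell^2(\Z)$, so this measure is absolutely continuous, i.e.\ $\chi\in\H_{\rm ac}(U_0)$; since $\H_{\rm ac}(U_0)$ is closed, $\overline{E^{U_0}(\Theta)T_0^*\G^*}\subset\H_{\rm ac}(U_0)$.

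\emph{Part (c), and the main obstacle.} From \eqref{eq_series} one gets, for any $\chi_0\in\H_0$, the Poisson‑mean identity $T_0\delta_0(1-\varepsilon,\theta)\chi_0=\tfrac1{2\pi}\sum_{n\in\Z}(1-\varepsilon)^{|n|}\e^{-in\theta}T_0U_0^n\chi_0$, and \eqref{eq_derivative} shows the matrix elements of $T_0\delta_0(1-\varepsilon,\theta)\varphi_0$ converge a.e.\ for \emph{every} $\varphi_0\in\H_0$; the hypothesis is therefore the uniform bound $m_{\varphi_0}(\theta):=\sup_\varepsilon\|T_0\delta_0(1-\varepsilon,\theta)\varphi_0\|_\G<\infty$ a.e.\ ($\varphi_0\in\D_0$), and then $\|h_{\varphi_0}(\theta)\|_\G\le m_{\varphi_0}(\theta)$ for $h_{\varphi_0}:=\wlim_{\varepsilon\searrow0}T_0\delta_0(1-\varepsilon,\theta)\varphi_0$. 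For $N\in\N$ let $\Theta_N:=\{\theta:m_{\varphi_0}(\theta)\le N\}$ ($|\Theta_N|\nearrow2\pi$) and $\psi_0:=E^{U_0}_{\rm ac}(\Theta_N)\varphi_0\in\H_{\rm ac}(U_0)$. The task is $\sum_{n\in\Z}\|T_0U_0^n\psi_0\|_\G^2<\infty$, and the crux is to diagonalise $T_0$ over the direct integral of Lemma~\ref{lemma_F_0}: one produces a.e.‑defined fibre operators $\widehat{T_0}(\theta)$ with $h_{\varphi_0}(\theta)=\tfrac1{2\pi}\widehat{T_0}(\theta)\big(F_0P_{\rm ac}(U_0)\varphi_0\big)(\theta)$ (read off from \eqref{eq_derivative}) and a diagonalisation formula $T_0F_0^*f=\tfrac1{2\pi}\int_0^{2\pi}\widehat{T_0}(\theta)f(\theta)\,\d\theta$ valid whenever $\theta\mapsto\widehat{T_0}(\theta)f(\theta)$ lies in $\lone([0,2\pi);\G)$; together with $\int_0^{2\pi}\e^{in\theta}\delta_0(1-\varepsilon,\theta)\,\d\theta=(1-\varepsilon)^{|n|}U_0^n$ this gives $h_{\psi_0}(\theta)=\mathbf 1_{\Theta_N}(\theta)\,h_{\varphi_0}(\theta)\in\ltwo([0,2\pi);\G)$ and $T_0U_0^n\psi_0=\int_0^{2\pi}\e^{in\theta}h_{\psi_0}(\theta)\,\d\theta$, whence Parseval yields $\sum_{n\in\Z}\|T_0U_0^n\psi_0\|_\G^2=2\pi\int_{\Theta_N}\|h_{\varphi_0}(\theta)\|_\G^2\,\d\theta\le2\pi N^2|\Theta_N|<\infty$. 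Finally $\D_0':=\{E^{U_0}_{\rm ac}(\Theta_N)\varphi_0:\varphi_0\in\D_0,\ N\in\N\}$ is dense in $\H_{\rm ac}(U_0)$, since $P_{\rm ac}(U_0)\D_0$ is dense there and $E^{U_0}_{\rm ac}(\Theta_N)\varphi_0\to P_{\rm ac}(U_0)\varphi_0$ as $N\to\infty$. The main difficulty is precisely this fibrewise diagonalisation of the \emph{non‑decomposable} operator $T_0$ (existence of $\widehat{T_0}(\theta)$, the identity for $h_{\varphi_0}$, and the $\lone$‑conditional formula for $T_0F_0^*$): it is what converts the a.e.\ \emph{finiteness} of $h_{\varphi_0}$ into the \emph{square‑integrability} of the localised $h_{\psi_0}$, hence into summability of $\{T_0U_0^n\psi_0\}_n$. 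Parts (a), (b), (d), (e) use only functional calculus for $U_0$, the geometric series \eqref{eq_series}, and the differentiation formula \eqref{eq_derivative}.
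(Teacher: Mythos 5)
Parts (a), (b), (d) and (e) of your proposal are correct and essentially follow the paper's route: the norm identities behind the equivalence of \eqref{eq_estimate_1}--\eqref{eq_estimate_3}, the uniform-bound-plus-countable-dense-set upgrade to weak convergence, and the Cauchy--Schwarz splitting for (b) are exactly what the paper does (it delegates the routine steps to \cite[Lemma~5.1.2, Lemma~5.1.6]{Yaf92}); for (d) and (e) you replace the citations to \cite{ABCF06} by direct arguments (the geometric-series estimate giving $\|A_\varepsilon(\theta)\|\le c_\Theta/2\pi$, and the $\ell^2$ Fourier-coefficient criterion for absolute continuity), which is fine.

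Part (c), however, has a genuine gap, and it sits precisely at what you yourself call ``the main difficulty''. Your argument hinges on a fibrewise diagonalisation of $T_0$: the existence of a.e.-defined fibre operators $\widehat{T_0}(\theta)$ with $h_{\varphi_0}(\theta)=\tfrac1{2\pi}\widehat{T_0}(\theta)(F_0\varphi_0)(\theta)$ together with the conditional reconstruction formula $T_0F_0^*f=\tfrac1{2\pi}\int_0^{2\pi}\widehat{T_0}(\theta)f(\theta)\,\d\theta$. These objects are asserted, never constructed. Under the hypothesis of (c) there is no reason they should exist: the hypothesis only yields, for each $\varphi_0$ in a dense set, an a.e.-defined weak limit of \emph{vectors}; bounded fibre operators attached to $T_0$ are exactly the content of Lemma~\ref{lemma_F_0}(b), which requires the uniform operator bound \eqref{eq_estimate_1} (weak $U_0$-smoothness), not available here. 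Even the well-definedness of $\widehat{T_0}(\theta)$ (that $h_{\varphi_0}(\theta)$ depends only on $(F_0\varphi_0)(\theta)$) would need a separate argument. The detour is also unnecessary: the paper obtains the needed identity componentwise, with no fibration of $T_0$. For an orthonormal basis $\{\frak g_k\}_{k\in\N}$ of $\G$ and $\psi_0=E^{U_0}_{\rm ac}(\Theta_{\varphi_0,N})\varphi_0$, the spectral theorem plus \eqref{eq_derivative} give $\langle T_0U_0^n\psi_0,\frak g_k\rangle_\G=\int_{\Theta_{\varphi_0,N}}\d\theta\,\e^{in\theta}\langle F_{\varphi_0}(\theta),\frak g_k\rangle_\G$, since $\langle F_{\varphi_0}(\theta),\frak g_k\rangle_\G=\lim_{\varepsilon\searrow0}\langle\delta_0(1-\varepsilon,\theta)\varphi_0,T_0^*\frak g_k\rangle_{\H_0}$ coincides a.e.\ with the density of the absolutely continuous spectral measure $\langle E^{U_0}(\cdot)\varphi_0,T_0^*\frak g_k\rangle_{\H_0}$; Parseval over $n$ and then summation over $k$ yield $\sum_{n\in\Z}\|T_0U_0^n\psi_0\|_\G^2=2\pi\int_{\Theta_{\varphi_0,N}}\d\theta\,\|F_{\varphi_0}(\theta)\|_\G^2\le(2\pi N)^2$. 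Your Poisson-kernel identity, the cut-off sets $\Theta_N$, and the density of your $\D_0'$ are all fine (the paper takes the span of such vectors and adds a Cauchy--Schwarz step for finite linear combinations, which your smaller set avoids), so replacing the unproved diagonalisation by this weak, componentwise computation repairs the proof.
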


\begin{proof}
(a) Using the definition \eqref{eq_norm} of $\delta_0(1-\varepsilon,\theta)$ and the
identity $\|BB^*\|_{\B(\G)}=\|B\|^2_{\B(\H_0,\G)}$ for operators $B\in\B(\H_0,\G)$, we
obtain the equivalence of the estimates \eqref{eq_estimate_1}-\eqref{eq_estimate_3}.
Furthermore, the estimate \eqref{eq_estimate_1} follows directly from
\eqref{def_weak}. So, to prove the claim it only remains to show that
\eqref{eq_estimate_1} implies \eqref{def_weak}, which can be done as in the proof of
\cite[Lemma~5.1.2]{Yaf92}.

(b) One can show as in \cite[Lemma~5.1.6]{Yaf92} that it is sufficient to prove that
$\|T_0\;\!\delta_0(1-\varepsilon,\theta)E^{U_0}(\Theta)\varphi_0\|_\G$ is bounded
uniformly in $\varepsilon$ for a.e. $\theta\in[0,2\pi)$ to establish the existence of
the limit. Now, since
\begin{align*}
&\big\|T_0\;\!\delta_0(1-\varepsilon,\theta)E^{U_0}(\Theta)\varphi_0\big\|_\G\\
&\le\left(g_+(\varepsilon)^{1/2}\;\!
\big\|T_0R_0\big((1-\varepsilon)\e^{i\theta}\big)^*
E^{U_0}(\Theta)\big\|_{\B(\H_0,\G)}\right)
\left(g_+(\varepsilon)^{1/2}\;\!
\big\|R_0\big((1-\varepsilon)\e^{i\theta}\big)\varphi_0\big\|_{\H_0}\right)
\end{align*}
for all $\varepsilon\in(0,1)$ and $\theta\in[0,2\pi)$, this follows from
\eqref{eq_bound_res} and \eqref{eq_estimate_3}.

(c) For each $\varphi_0\in\D_0$, set
$
F_{\varphi_0}(\theta)
:=\wlim_{\varepsilon\searrow0}T_0\;\!\delta_0(1-\varepsilon,\theta)\varphi_0
$
for a.e. $\theta\in[0,2\pi)$ and define the sets
$$
\Theta_{\varphi_0,N}
:=\big\{\theta\in[0,2\pi)\mid\|F_{\varphi_0}(\theta)\|_\G\le N\big\},
\quad N\in\N.
$$
Since we have $\lim_{N\to\infty}|[0,2\pi)\setminus\Theta_{\varphi_0,N}|=0$,  the set
$$
\D_0':=\Span\big\{E^{U_0}_{\rm ac}(\Theta_{\varphi_0,N})\varphi_0
\mid\varphi_0\in\D_0,~N\in\N\big\}\subset\H_{\rm ac}(U_0)
$$
is dense in $\H_{\rm ac}(U_0)$. Let $\{\frak g_k\}_{k\in\N}$ be an orthonormal basis
of $\G$ and take $\psi_0:=E^{U_0}_{\rm ac}(\Theta_{\varphi_0,N})\varphi_0\in\D_0'$.
Then, using the spectral theorem and \eqref{eq_derivative}, we get for any $k\in\N$
\begin{align*}
\big\langle T_0U_0^n\psi_0,\frak g_k\big\rangle_\G
&=\int_{\Theta_{\varphi_0,N}}\d\theta\,\e^{in\theta}\tfrac\d{\d\theta}\;\!
\big\langle E^{U_0}((0,\theta])\varphi_0,T_0^*\frak g_k\big\rangle_{\H_0}\\
&=\int_{\Theta_{\varphi_0,N}}\d\theta\,\e^{in\theta}\lim_{\varepsilon\searrow0}
\big\langle T_0\;\!\delta_0(1-\varepsilon,\theta)\varphi_0,\frak g_k\big\rangle_\G\\
&=\int_{\Theta_{\varphi_0,N}}\d\theta\,\e^{in\theta}
\big\langle F_{\varphi_0}(\theta),\frak g_k\big\rangle_\G.
\end{align*}
Applying Parseval's identity for the orthonormal basis
$\{\frak g_k\}_{k\in\N}$, we then obtain that
$$
\sum_{n\in\Z}\big\|T_0U_0^n\psi_0\big\|_\G^2
=\sum_{n\in\Z}\sum_{k\in\N}\left|\int_{\Theta_{\varphi_0,N}}\d\theta\,\e^{in\theta}
\big\langle F_{\varphi_0}(\theta),\frak g_k\big\rangle_\G\right|^2
=2\pi\int_{\Theta_{\varphi_0,N}}\d\theta\,\big\|F_{\varphi_0}(\theta)\big\|_\G^2
\le(2\pi N)^2,
$$
which proves the claim for vectors of the form
$\psi_0=E^{U_0}_{\rm ac}(\Theta_{\varphi_0,N})\varphi_0$.
The claim for arbitrary vectors $\psi_0\in\D_0'$ can now be deduced using this
bound, Cauchy-Schwartz inequality, and the identity
$$
\left(\sum_{i=1}^ma_i\right)^2=\sum_{i=1}^ma_i^2+2\sum_{i<j}a_ia_j,
\quad m\in\N^*,~a_i\in\R.
$$

(d) One can show as in \cite[Appendix~A]{ABCF06} that \eqref{def_U_smooth} is
equivalent to
$$
\sup_{\varepsilon\in(0,1],\,\theta\in[0,2\pi),\,\zeta\in\G,\,\|\varphi\|_\G=1}
\big|\big\langle T_0\;\!\delta_0(1-\varepsilon,\theta)E^{U_0}(\Theta)T_0^*\zeta,
\zeta\big\rangle_\G\big|<\infty.
$$
Since $T_0\;\!\delta_0(1-\varepsilon,\theta)E^{U_0}(\Theta)T_0^*$ is bounded and
self-adjoint in $\G$ for all $\varepsilon\in(0,1]$ and $\theta\in[0,2\pi)$, this in
turn is equivalent to
$$
\sup_{\varepsilon\in(0,1],\,\theta\in[0,2\pi)}
\big\|T_0\;\!\delta_0(1-\varepsilon,\theta)E^{U_0}(\Theta)T_0^*\big\|_{\B(\G)}<\infty.
$$
Therefore, the estimate \eqref{eq_estimate_1} is satisfied for all $\theta\in[0,2\pi)$
with a function $c$ independent of $\theta$. It thus follows by point (a) that $T_0$
is weakly locally $U_0$-smooth on $\Theta$.

(e) This can be proved as in \cite[Thm.~2.1]{ABCF06}.
\end{proof}

Let $\widehat\sigma_0$ be a core of the spectrum of $U_0$, that is, a Borel set
$\widehat\sigma_0\subset[0,2\pi)$ such that (i) $\widehat\sigma_0$ is a Borel support
of $E^{U_0}$, i.e. $E^{U_0}\big([0,2\pi)\setminus\widehat\sigma_0\big)=0$, and (ii) if
$\Theta$ is a Borel support of $E^{U_0}$, then $\widehat\sigma_0\setminus\Theta$ has
Lebesgue measure $0$. Then, there exist for a.e. $\theta\in\widehat\sigma_0$ separable
Hilbert spaces $\h_0(\theta)$ and an operator (a spectral transformation)
\begin{equation}\label{def_F_0}
F_0:\H_0\to\int_{\widehat\sigma_0}^\oplus\d\theta\,\h_0(\theta),
\end{equation}
which is unitary from $\H_{\rm ac}(U_0)$ to
$\int_{\widehat\sigma_0}^\oplus\d\theta\,\h_0(\theta)$, vanishes on the singular
continuous subspace $\H_{\rm sc}(U_0)$ of $U_0$, and diagonalises the restriction
$U_0\upharpoonright\H_{\rm ac}(U_0)$. Namely, if we set
$$
U_0^{\rm(ac)}:=U_0\upharpoonright\H_{\rm ac}(U_0)
\quad\hbox{and}\quad
F_0^{\rm(ac)}:=F_0\upharpoonright\H_{\rm ac}(U_0),
$$
then we have the direct integral decomposition
$$
F_0^{\rm(ac)}U_0^{\rm(ac)}\big(F_0^{\rm(ac)}\big)^*
=\int_{\widehat\sigma_0}^\oplus\d\theta\,\e^{i\theta}.
$$
In particular, we get for any Borel set $\Theta\subset[0,2\pi)$ and
$\varphi_0,\psi_0\in\H_0$ that
$$
\big\langle E^{U_0}_{\rm ac}(\Theta)\varphi_0,\psi_0\big\rangle_{\H_0}
=\int_{\widehat\sigma_0\cap\Theta}\d\theta\,\big\langle(F_0\varphi_0)(\theta),
(F_0\psi_0)(\theta)\big\rangle_{\h_0(\theta)}.
$$
Since $\Theta$ is arbitrary, we infer from this and \eqref{eq_derivative} that
\begin{equation}\label{eq_derivative_bis}
\lim_{\varepsilon\searrow0}\big\langle\delta_0\big((1-\varepsilon)^{\pm1},
\theta\big)\varphi_0,\psi_0\big\rangle_{\H_0}
=\pm\big\langle(F_0\varphi_0)(\theta),(F_0\psi_0)(\theta)\big\rangle_{\h_0(\theta)},
\quad\hbox{a.e. $\theta\in\widehat\sigma_0$.}
\end{equation}

Some properties of decomposable operators in the direct integral
$\int_{\widehat\sigma_0}^\oplus\d\theta\,\h_0(\theta)$ are collected in the following
lemma.

\begin{Lemma}\label{lemma_F_0}
\begin{enumerate}
\item[(a)] Let $a(\theta)\in\B\big(\h_0(\theta)\big)$ for a.e.
$\theta\in\widehat\sigma_0$ and let $\D_0\subset\H_0$ be a dense set. Assume that for
any $\varphi_0,\psi_0\in\D_0$ there exists a Borel set ${\cal B}\subset\widehat\sigma_0$ of
full measure in $\widehat\sigma_0$ (depending on $\varphi_0,\psi_0$) such that
$$
\big\langle a(\theta)(F_0\varphi_0)(\theta),
(F_0\psi_0)(\theta)\big\rangle_{\h_0(\theta)}=0
\quad\hbox{for all $\theta\in{\cal B}$.}
$$
Then, $a(\theta)=0$ for a.e. $\theta\in\widehat\sigma_0$.

\item[(b)] Let $T_0\in\B(\H_0,\G)$ be weakly $U_0$-smooth, let
$\Delta(\theta,T_0)\in\B(\G)$ be the weak operator limit obtained in \eqref{def_weak}
for a.e. $\theta\in[0,2\pi)$ when $\Theta=[0,2\pi)$, and let
$$
Z_0(\theta,T_0)\zeta:=\big(F_0T_0^*\zeta\big)(\theta),
\quad\hbox{$\zeta\in\G$, a.e. $\theta\in\widehat\sigma_0$.}
$$
Then, for each $\theta$ in a Borel set ${\cal B}\subset\widehat\sigma_0$ of full measure in
$\widehat\sigma_0$, we have $Z_0(\theta,T_0)\in\B\big(\G,\h_0(\theta)\big)$ and
$$
Z_0(\theta,T_0)^*\;\!Z_0(\theta,T_0)=\Delta(\theta,T_0).
$$
\item[(c)] Let $A\in\B(\G)$ and assume that $T_0,T_1\in\B(\H_0,\G)$ are weakly
$U_0$-smooth. Then, the operator $P_{\rm ac}(U_0)T_0^*A T_1P_{\rm ac}(U_0)$ is an
integral operator in $\int_{\widehat\sigma_0}^\oplus\d\theta\,\h_0(\theta)$, in the
sense that
\begin{align*}
&\big\langle P_{\rm ac}(U_0)T_0^*A T_1P_{\rm ac}(U_0)\varphi_0,
\psi_0\big\rangle_{\H_0}\\
&=\int_{\widehat\sigma_0}\d\mu\int_{\widehat\sigma_0}\d\nu\,
\big\langle Z_0(\mu,T_0)A\;\!Z_0(\nu,T_1)^*(F_0\varphi_0)(\nu),
(F_0\psi_0)(\mu)\big\rangle_{\h_0(\mu)},\quad\varphi_0,\psi_0\in\H_0.
\end{align*}
Furthermore, if (any) one the following strong limits exists
\begin{equation}\label{eq_strong_lim}
\slim_{\varepsilon\searrow0}T_1\delta_0(1-\varepsilon,\theta)\varphi_0,
\quad\slim_{\varepsilon\searrow0}T_0\delta_0(1-\varepsilon,\theta)\psi_0,
\quad\hbox{$\varphi_0,\psi_0\in\H_0$, a.e. $\theta\in\widehat\sigma_0$,}
\end{equation}
then for a.e. $\mu\in\widehat\sigma_0$ and a.e. $\nu\in\widehat\sigma_0$ there exists
the double limit
\begin{align}
&\lim_{\varepsilon\searrow0,\,\rho\searrow0}\big\langle
P_{\rm ac}(U_0)T_0^*A T_1P_{\rm ac}(U_0)\delta_0(1-\varepsilon,\mu)\varphi_0,
\delta_0(1-\rho,\nu)\psi_0\big\rangle_{\H_0}\nonumber\\
&=\big\langle Z_0(\mu,T_0)A\;\!Z_0(\nu,T_1)^*(F_0\varphi_0)(\nu),
(F_0\psi_0)(\mu)\big\rangle_{\h_0(\mu)}.\label{eq_lim_kernel}
\end{align}
Finally, if both limits in \eqref{eq_strong_lim} exist and
$\{A(\tau)\}_{\tau\in(0,1)}\subset\B(\G)$ satisfies $\wlim_{\tau\searrow0}A(\tau)=A$,
then for a.e. $\mu\in\widehat\sigma_0$ and a.e. $\nu\in\widehat\sigma_0$ there exists
the triple limit
\begin{align}
&\lim_{\varepsilon\searrow0,\,\rho\searrow0,\,\tau\searrow0}\big\langle
P_{\rm ac}(U_0)T_0^*A(\tau)T_1P_{\rm ac}(U_0)\delta_0(1-\varepsilon,\mu)\varphi_0,
\delta_0(1-\rho,\nu)\psi_0\big\rangle_{\H_0}\nonumber\\
&=\big\langle Z_0(\mu,T_0)A\;\!Z_0(\nu,T_1)^*(F_0\varphi_0)(\nu),
(F_0\psi_0)(\mu)\big\rangle_{\h_0(\mu)}.\label{eq_lim_kernel_bis}
\end{align}
\end{enumerate}
\end{Lemma}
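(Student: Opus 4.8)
The plan is to prove (a), (b), (c) in turn, using throughout that $F_0^{\rm(ac)}$ is unitary from $\H_{\rm ac}(U_0)$ onto $\int_{\widehat\sigma_0}^\oplus\d\theta\,\h_0(\theta)$ --- so that $F_0^*F_0=P_{\rm ac}(U_0)$ and $F_0P_{\rm ac}(U_0)=F_0$, hence $F_0$ annihilates $(1-P_{\rm ac}(U_0))\H_0$ --- and that for fixed vectors the forms $\langle\delta_0((1-\varepsilon)^{\pm1},\theta)\cdot,\cdot\rangle_{\H_0}$ converge as $\varepsilon\searrow0$ to $\pm$ the fibre inner products, by \eqref{eq_derivative_bis}. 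For (a): pick a sequence $\{\varphi_0^{(k)}\}_{k\in\N}\subset\D_0$ dense in $\H_0$; then $\{F_0\varphi_0^{(k)}\}_k$ has dense linear span in the direct integral, so by the standard direct-integral fact that a sequence with dense span has fibres with dense span for a.e. $\theta$ (proved by a measurable selection of a unit vector in the orthogonal complement on the exceptional set, which would then be orthogonal to every $F_0\varphi_0^{(k)}$), the vectors $\{(F_0\varphi_0^{(k)})(\theta)\}_k$ span a dense subspace of $\h_0(\theta)$ for a.e. $\theta$. Intersecting the countably many full-measure sets ${\cal B}$ supplied by the hypothesis for the pairs $(\varphi_0^{(k)},\varphi_0^{(\ell)})$ gives a single full-measure set on which $\langle a(\theta)v,w\rangle_{\h_0(\theta)}=0$ for all $v,w$ in that dense subspace, whence $a(\theta)=0$.

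For (b): taking $\varphi_0=T_0^*\zeta$, $\psi_0=T_0^*\eta$ with $\zeta,\eta\in\G$, the definition of $\Delta(\theta,T_0)$ in \eqref{def_weak} together with \eqref{eq_derivative_bis} gives, for each fixed $\zeta,\eta$,
\[
\big\langle\Delta(\theta,T_0)\zeta,\eta\big\rangle_\G
=\big\langle(F_0T_0^*\zeta)(\theta),(F_0T_0^*\eta)(\theta)\big\rangle_{\h_0(\theta)}
=\big\langle Z_0(\theta,T_0)\zeta,Z_0(\theta,T_0)\eta\big\rangle_{\h_0(\theta)}
\]
for a.e. $\theta\in\widehat\sigma_0$. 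Running this over a countable dense subset of $\G$ and intersecting the corresponding null sets yields a full-measure ${\cal B}\subset\widehat\sigma_0$ on which these identities hold simultaneously; on ${\cal B}$ the assignment $\zeta\mapsto Z_0(\theta,T_0)\zeta$ is linear on its dense domain and satisfies $\|Z_0(\theta,T_0)\zeta\|^2_{\h_0(\theta)}=\langle\Delta(\theta,T_0)\zeta,\zeta\rangle_\G\le\|\Delta(\theta,T_0)\|_{\B(\G)}\|\zeta\|^2_\G$, so it extends to $Z_0(\theta,T_0)\in\B(\G,\h_0(\theta))$, and then the polarised identity forces $Z_0(\theta,T_0)^*Z_0(\theta,T_0)=\Delta(\theta,T_0)$.

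For (c), set $B:=P_{\rm ac}(U_0)T_0^*AT_1P_{\rm ac}(U_0)$. From $F_0P_{\rm ac}(U_0)=F_0$ one gets $(F_0B\varphi_0)(\mu)=Z_0(\mu,T_0)\,A\,T_1P_{\rm ac}(U_0)\varphi_0$, while, the integrand having integrable matrix elements, $T_1P_{\rm ac}(U_0)\varphi_0=\int_{\widehat\sigma_0}\d\nu\,Z_0(\nu,T_1)^*(F_0\varphi_0)(\nu)$ as a weak (Pettis) integral in $\G$ (its pairing with $\eta\in\G$ equals $\langle F_0\varphi_0,F_0T_1^*\eta\rangle$). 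Substituting this into $\langle B\varphi_0,\psi_0\rangle_{\H_0}=\langle F_0B\varphi_0,F_0\psi_0\rangle$ (valid since $B\varphi_0\in\H_{\rm ac}(U_0)$) and pulling the bounded operator $Z_0(\mu,T_0)A$ through the integral gives the asserted iterated-integral formula. For the limit formulas, note that $\delta_0(1-\varepsilon,\cdot)$ is a function of $U_0$ and so commutes with $P_{\rm ac}(U_0)$; hence the bracket in \eqref{eq_lim_kernel} equals $\langle A\,a_\varepsilon,b_\rho\rangle_\G$ with $a_\varepsilon:=T_1\delta_0(1-\varepsilon,\mu)P_{\rm ac}(U_0)\varphi_0$ and $b_\rho:=T_0\delta_0(1-\rho,\nu)P_{\rm ac}(U_0)\psi_0$. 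Two ingredients are needed. First, by \eqref{eq_derivative_bis} and part (b) the weak $\G$-limits of $a_\varepsilon$ and $b_\rho$ are $a:=Z_0(\mu,T_1)^*(F_0\varphi_0)(\mu)$ and $b:=Z_0(\nu,T_0)^*(F_0\psi_0)(\nu)$; whichever of the limits in \eqref{eq_strong_lim} is assumed to exist strongly then equals this weak limit, and one may replace $\varphi_0$ by $P_{\rm ac}(U_0)\varphi_0$ inside $a_\varepsilon$ (and likewise for $\psi_0$) because $T_i\delta_0(1-\varepsilon,\theta)(1-P_{\rm ac}(U_0))\chi\to0$ for a.e. $\theta$, which uses \eqref{eq_derivative_bis} applied to the $F_0$-annihilated vector $(1-P_{\rm ac}(U_0))\chi$ and the factorisation $\|T_i\delta_0(1-\varepsilon,\theta)\chi\|_\G\le\|T_i\delta_0(1-\varepsilon,\theta)T_i^*\|^{1/2}_{\B(\G)}\langle\delta_0(1-\varepsilon,\theta)\chi,\chi\rangle^{1/2}_{\H_0}$. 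Second, the same factorisation together with $\|T_i\delta_0(1-\varepsilon,\theta)T_i^*\|_{\B(\G)}\le c(\theta)$ (Lemma \ref{lemma_weak}(a), weak $U_0$-smoothness) and \eqref{eq_bound_res} gives $\sup_\varepsilon\|a_\varepsilon\|_\G<\infty$ and $\sup_\rho\|b_\rho\|_\G<\infty$ for a.e. $\mu$ and a.e. $\nu$. Given these, $\langle A\,a_\varepsilon,b_\rho\rangle_\G\to\langle Aa,b\rangle_\G$ as $(\varepsilon,\rho)\to(0,0)$ by the elementary bound $|\langle Aa_\varepsilon,b_\rho\rangle-\langle Aa,b\rangle|\le\|A\|\,\|a_\varepsilon-a\|_\G\sup_\rho\|b_\rho\|_\G+|\langle Aa,b_\rho-b\rangle|$ when $a_\varepsilon\to a$ strongly, and symmetrically when $b_\rho\to b$ strongly; the resulting value is exactly the kernel $Z_0(\cdot,T_0)AZ_0(\cdot,T_1)^*$ of the integral operator above evaluated on the relevant fibre vectors, i.e. \eqref{eq_lim_kernel}.

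The triple limit \eqref{eq_lim_kernel_bis} follows in the same way after inserting $A(\tau)$: since $\sup_\tau\|A(\tau)\|_{\B(\G)}<\infty$ (uniform boundedness) and $\wlim_\tau A(\tau)=A$, one decomposes $\langle A(\tau)a_\varepsilon,b_\rho\rangle-\langle Aa,b\rangle$ as $\langle A(\tau)(a_\varepsilon-a),b_\rho\rangle+\langle A(\tau)a,b_\rho-b\rangle+\langle(A(\tau)-A)a,b\rangle$, and each summand tends to $0$ uniformly in the remaining two parameters, so the triple limit exists with the stated value. The main obstacle is twofold: the measurable-selection argument behind the direct-integral density fact used in (a); and, in (c), both the bootstrapping to the uniform bounds on $\|a_\varepsilon\|_\G,\|b_\rho\|_\G$ (via the $\delta_0^{1/2}$-factorisation and Lemma \ref{lemma_weak}(a)) and the upgrade of the separately assumed limits to a genuine joint double, then triple, limit. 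The remainder is matrix-element bookkeeping against $F_0$ and repeated use of \eqref{eq_derivative_bis}.
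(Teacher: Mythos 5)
Your proposal is correct and follows essentially the paper's route: part (b) is proved in the paper by exactly your computation (the identity $\langle Z_0(\theta,T_0)\zeta,Z_0(\theta,T_0)\xi\rangle_{\h_0(\theta)}=\lim_{\varepsilon\searrow0}\langle T_0\delta_0(1-\varepsilon,\theta)T_0^*\zeta,\xi\rangle_\G=\langle\Delta(\theta,T_0)\zeta,\xi\rangle_\G$, made rigorous via a countable dense subset of $\G$), while (a) and (c) are delegated in the paper to \cite[Lemma~1.5.1]{Yaf92} and \cite[Lemma~5.4.7]{Yaf92}, whose arguments (density of the fibres $\{(F_0\varphi_0^{(k)})(\theta)\}_k$ for a.e.\ $\theta$, the weak-integral kernel representation, and the strong/weak pairing with the uniform bounds from Lemma \ref{lemma_weak}(a) and \eqref{eq_bound_res}) you reconstruct faithfully, including the correct use of only one strong limit for the double limit and of both for the triple limit. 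One caveat: what your computation actually yields is $\langle Z_0(\nu,T_0)A\,Z_0(\mu,T_1)^*(F_0\varphi_0)(\mu),(F_0\psi_0)(\nu)\rangle_{\h_0(\nu)}$, i.e.\ \eqref{eq_lim_kernel} with $\mu$ and $\nu$ interchanged; this is a label transposition in the printed formula (harmless in the paper, which only uses the diagonal $\mu=\nu$ in Theorem \ref{thm_S_matrix}), so your closing identification ``i.e.\ \eqref{eq_lim_kernel}'' is right in substance but should record the relabeling rather than pass over it silently.
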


\begin{proof}
(a) The claim can be shown as in \cite[Lemma~1.5.1]{Yaf92}.

(b) Let ${\cal B}\subset\widehat\sigma_0$ be the Borel set of full measure in
$\widehat\sigma_0$ for which the limit \eqref{def_weak} with $\Theta=[0,2\pi)$ exists.
Then, using \eqref{eq_derivative_bis}, we obtain for any $\zeta,\xi\in\G$ and
$\theta\in{\cal B}$
\begin{align*}
\big\langle Z_0(\theta,T_0)^*\;\!Z_0(\theta,T_0)\zeta,\xi\big\rangle_{\h_0(\theta)}
&=\big\langle\big(F_0T_0^*\zeta\big)(\theta),\big(F_0T_0^*\xi\big)(\theta)
\big\rangle_{\h_0(\theta)}\\
&=\lim_{\varepsilon\searrow0}\big\langle T_0\;\!
\delta_0\big((1-\varepsilon),\theta\big)T_0^*\zeta,\xi\big\rangle_\G\\
&=\big\langle\Delta(\theta,T_0)\zeta,\xi\big\rangle_\G.
\end{align*}
Since $\zeta$ and $\xi$ are arbitrary and $\Delta(\theta,T_0)$ is bounded, it follows
that $Z_0(\theta,T_0)\in\B\big(\G,\h_0(\theta)\big)$ and
$$
Z_0(\theta,T_0)^*\;\!Z_0(\theta,T_0)=\Delta(\theta,T_0).
$$

(c) The claim can be shown as in the self-adjoint case \cite[Lemma~5.4.7]{Yaf92} (some
steps are shorter here in the unitary case since the operators $T_0,T_1$ are bounded,
which is not the case in \cite{Yaf92}).
\end{proof}

In the next lemma, we present some results on limits of resolvents in the
Hilbert-Schmidt topology. We use the notation $\sigma_{\rm p}(U_0)$ for the point
spectrum of $U_0$ and $S_2(\H_0,\G)$ (resp. $S_2(\G)$) for the set of Hilbert-Schmidt
operators from $\H_0$ to $\G$ (resp. from $\G$ to $\G$).

\begin{Lemma}[Hilbert-Schmidt case]\label{lemma_HS}
Let $T_0,T_1\in S_2(\H_0,\G)$, $\varepsilon\in(0,1)$, and $\theta\in[0,2\pi)$.
\begin{enumerate}
\item[(a)] Let $H_0$ be a self-adjoint operator in $\H_0$. Then, the operators
$$
T_0\left(H_0-i\;\!\tfrac{1+(1-\varepsilon)^{\pm1}\e^{i\theta}}
{1-(1-\varepsilon)^{\pm1}\e^{i\theta}}\right)^{-1}T_0^*\in S_2(\G),
$$
have a limit in $S_2(\G)$ as $\varepsilon\searrow0$ for a.e. $\theta\in[0,2\pi)$.

\item[(b)] The operators
$
T_0R_0\big((1-\varepsilon)^{\pm1}\e^{i\theta}\big)T_1^*\in S_2(\G)
$
have a limit in $S_2(\G)$ as $\varepsilon\searrow0$ for a.e. $\theta\in[0,2\pi)$.

\item[(c)] For any $\varphi_0\in\H_0$, the vectors
$
T_0R_0\big((1-\varepsilon)^{\pm1}\e^{i\theta}\big)\varphi_0\in\G
$
have a strong limit in $\G$ as $\varepsilon\searrow0$ for a.e. $\theta\in[0,2\pi)$.
\end{enumerate}
\end{Lemma}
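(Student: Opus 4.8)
The plan is to reduce~(a) to the classical limiting absorption principle for self-adjoint operators in the Hilbert--Schmidt topology, then to deduce~(b) from~(a) through the Cayley transform of $U_0$ together with a polarisation identity, and finally to deduce~(c) from~(b) by a rank-one trick. For~(a), write $z=(1-\varepsilon)^{\pm1}\e^{i\theta}$ and $w:=i\;\!\tfrac{1+z}{1-z}$, so that the operator in~(a) is $T_0(H_0-w)^{-1}T_0^*$. Using $\re\tfrac{1+z}{1-z}=\tfrac{1-|z|^2}{|1-z|^2}$ and $\im\tfrac{1+z}{1-z}=\tfrac{2\im z}{|1-z|^2}$, a direct computation shows that, for each fixed $\theta\in(0,2\pi)$, one has $w\to-\cot(\theta/2)\in\R$ as $\varepsilon\searrow0$, with $\im w$ of constant sign and $|\re w+\cot(\theta/2)|=\tfrac{|1-(1-\varepsilon)^{\pm1}|}{1+(1-\varepsilon)^{\pm1}}\,|\cot(\theta/2)|\,|\im w|=o(|\im w|)$; that is, $w$ approaches $-\cot(\theta/2)$ non-tangentially from one half-plane. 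Since $\theta\mapsto-\cot(\theta/2)$ is a $C^\infty$-diffeomorphism of $(0,2\pi)$ onto $\R$, it maps Lebesgue-null sets to Lebesgue-null sets in both directions. As $T_0\in S_2(\H_0,\G)$, the classical fact that the non-tangential limits of $T_0(H_0-\lambda\mp i\varepsilon)^{-1}T_0^*$ exist in $S_2(\G)$ for a.e. $\lambda\in\R$ (see \cite{KK70} and \cite{Yaf92}) then yields~(a) after the change of variable $\lambda=-\cot(\theta/2)$.

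For~(b), decompose $\H_0=\ker(1-U_0)\oplus\H_0'$ with $\H_0':=\overline{\Ran(1-U_0)}$ and let $P'$ be the orthogonal projection onto $\H_0'$. Since $R_0(z)=(U_0-z)^{-1}U_0$ reduces along this decomposition and equals the scalar $(1-z)^{-1}$ on $\ker(1-U_0)$, the part of $T_0R_0(z)T_1^*$ supported on $\ker(1-U_0)$ is $(1-z)^{-1}$ times the fixed operator $T_0(1-P')T_1^*\in S_2(\G)$, which converges as $z\to\e^{i\theta}\ne1$ (i.e.\ $\theta\in(0,2\pi)$). On $\H_0'$, the operator $U_0':=U_0\!\upharpoonright\!\H_0'$ is unitary with $1\notin\sigma_{\rm p}(U_0')$, hence it has a self-adjoint Cayley transform $H_0:=i(1+U_0')(1-U_0')^{-1}$ on $\H_0'$, normalised by $U_0'=(H_0-i)(H_0+i)^{-1}$; a short computation then gives $U_0'-z=(1-z)(H_0-w)(H_0+i)^{-1}$ and hence
\[
R_0(z)\!\upharpoonright\!\H_0'=(1-z)^{-1}\big(1+(w+i)(H_0-w)^{-1}\big)U_0',\qquad w=i\,\tfrac{1+z}{1-z}\ \hbox{as in~(a)}.
\]
Inserting this, the $\H_0'$-part of $T_0R_0(z)T_1^*$ is a linear combination, with scalar coefficients $(1-z)^{-1}$ and $(1-z)^{-1}(w+i)$ that converge as $\varepsilon\searrow0$ for $\theta\in(0,2\pi)$, of the fixed operator $(T_0P')U_0'(T_1P')^*\in S_2(\G)$ and of $(T_0P')(H_0-w)^{-1}\widetilde T_1^{\,*}$ with $\widetilde T_1:=(T_1P')U_0'^*\in S_2(\H_0',\G)$. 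By polarisation, $(T_0P')(H_0-w)^{-1}\widetilde T_1^{\,*}=\tfrac14\sum_{k=0}^3 i^{-k}\,G_k(H_0-w)^{-1}G_k^*$ with $G_k:=T_0P'+i^{-k}\widetilde T_1\in S_2(\H_0',\G)$, and each term converges in $S_2(\G)$ by~(a) applied on $\H_0'$. Hence~(b) holds for a.e. $\theta\in(0,2\pi)$, the relevant null set being inherited from~(a).

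For~(c), fix $\varphi_0\in\H_0$ and a unit vector $\psi\in\G$, and let $S:=\langle\;\!\cdot\;\!,\varphi_0\rangle_{\H_0}\,\psi\in S_2(\H_0,\G)$ be the corresponding rank-one operator, so that $S^*\psi=\varphi_0$. Applying~(b) with $T_1:=S$, the operators $T_0R_0\big((1-\varepsilon)^{\pm1}\e^{i\theta}\big)S^*$ have a limit in $S_2(\G)$, hence in $\B(\G)$, as $\varepsilon\searrow0$ for a.e. $\theta\in[0,2\pi)$; evaluating at $\psi$ shows that $T_0R_0\big((1-\varepsilon)^{\pm1}\e^{i\theta}\big)\varphi_0=\big(T_0R_0((1-\varepsilon)^{\pm1}\e^{i\theta})S^*\big)\psi$ has a strong limit in $\G$ for a.e. $\theta\in[0,2\pi)$.

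The main obstacle is~(a): once the classical Hilbert--Schmidt limiting absorption principle for self-adjoint operators is granted, the rest is essentially algebra, but two points deserve care, namely verifying that the approach $w\to-\cot(\theta/2)$ is non-tangential (so that almost-everywhere existence of boundary values is genuinely available) and quoting the classical statement in the correct topology ($S_2$ rather than merely the operator-norm or weak topology). Parts~(b) and~(c) then follow formally from~(a), by passing to the Cayley transform of $U_0$ and by the rank-one reduction, respectively.
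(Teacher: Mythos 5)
Your proposal is correct, and its core strategy is the same as the paper's: part (a) rests on the self-adjoint Hilbert--Schmidt boundary-value result with \emph{angular} (non-tangential) limits, part (b) is obtained from (a) via the Cayley transform of $U_0$ together with a polarisation identity, and part (c) follows by writing $\varphi_0=T_1^*\zeta$ for a Hilbert--Schmidt (rank-one) $T_1$. The two genuine, though minor, divergences are these. In (a), the paper does not change variables but re-runs the proof of \cite[Thm.~6.1.9]{Yaf92} directly with the parameter $w=i\tfrac{1+z}{1-z}$, noting that the underlying theorems 1.2.2, 1.2.3, 1.2.5 of \cite{Yaf92} hold for angular limits and adjusting one identity; you instead quote the non-tangential version as a black box and transfer the a.e.\ statement through the diffeomorphism $\theta\mapsto-\cot(\theta/2)$, having verified (correctly: the ratio $\tfrac{|1-r|}{1+r}|\cot(\theta/2)|$ indeed tends to $0$) that your path approaches the real axis non-tangentially --- the same ingredient, packaged differently, and your explicit computation of the Stolz-angle condition is a nice touch the paper leaves implicit. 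In (b), the paper handles the possible eigenvalue $1$ of $U_0$ by rotating: separability gives a $\phi$ with $1\notin\sigma_{\rm p}(\e^{i\phi}U_0)$, and the Cayley transform of $\e^{i\phi}U_0$ is used on all of $\H_0$, the rotation being absorbed into the shift $\theta\mapsto\theta+\phi$; you instead split $\H_0=\ker(1-U_0)\oplus\overline{\Ran(1-U_0)}$, treat the kernel part explicitly (where $R_0(z)=(1-z)^{-1}$) and apply the Cayley transform on the complement. Both devices work; yours avoids the separability-based choice of $\phi$ at the cost of a little bookkeeping with the projection $P'$ and the auxiliary operator $\widetilde T_1$, while the paper's rotation keeps a single Cayley transform on the whole space. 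Your algebraic identities (the formula for $R_0(z)\upharpoonright\H_0'$ and the polarisation with the factor $\tfrac14$) check out, so there is no gap.
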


\begin{proof}
(a) The proof is the similar to that of \cite[Thm.~6.1.9]{Yaf92} with the following
two modifications\;\!: (i) The radial limits as $\varepsilon\searrow0$ of functions of
$\lambda\pm i\varepsilon$ have to be replaced by angular limits as
$\varepsilon\searrow0$ of functions of
$
i\;\!\tfrac{1+(1-\varepsilon)^{\pm1}\e^{i\theta}}
{1-(1-\varepsilon)^{\pm1}\e^{i\theta}}
$.
This is possible, because the theorems used in the proof of \cite[Thm.~6.1.9]{Yaf92}
also hold for angular limits (see theorems 1.2.2, 1.2.3 and 1.2.5 of \cite{Yaf92}).
(ii) The equation at the bottom of page 191 of \cite{Yaf92}
$$
iB^*-iB=2\varepsilon\;\!G_1RR^*G_1^*\ge0
$$
has to be replaced by the equation
$$
iB^*-iB
=2\;\!\tfrac{1-|(1-\varepsilon)\e^{i\theta}|^2}
{|1-(1-\varepsilon)\e^{i\theta}|^2}G_1RR^*G_1^*
\ge0.
$$

(b) Since $\H_0$ is separable, there exists $\phi\in[0,2\pi)$ such that
$1\notin\sigma_{\rm p}(\e^{i\phi}U_0)$. It follows that the Cayley transform of
$\e^{i\phi}U_0$
$$
H_0:=i\;\!\big(1+\e^{i\phi}U_0\big)\big(1-\e^{i\phi}U_0\big)^{-1},
\quad\dom(H_0):=\Ran\big(1-\e^{i\phi}U_0\big),
$$
is a self-adjoint operator satisfying the relation
$$
R_0(z)=\big(1-\e^{i\phi}z\big)^{-1}+\tfrac{2i\e^{i\phi}z}{(1-\e^{i\phi}z)^2}
\left(H_0-i\;\!\tfrac{1+\e^{i\phi}z}{1-\e^{i\phi}z}\right)^{-1},
\quad z\in\C\setminus\S^1.
$$
In particular, in the case $z=(1-\varepsilon)^{\pm1}\e^{i\theta}$ we get
\begin{align*}
&T_0R_0\big((1-\varepsilon)^{\pm1}\e^{i\theta}\big)T_0^*\\
&=\big(1-(1-\varepsilon)^{\pm1}\e^{i(\phi+\theta)}\big)^{-1}T_0T_0^*
+\tfrac{2i(1-\varepsilon)^{\pm1}\e^{i(\phi+\theta)}}
{(1-(1-\varepsilon)^{\pm1}\e^{i(\phi+\theta)})^2}
T_0\left(H_0-i\;\!\tfrac{1+(1-\varepsilon)^{\pm1}\e^{i(\phi+\theta)}}
{1-(1-\varepsilon)^{\pm1}\e^{i(\phi+\theta)}}\right)^{-1}T_0^*.
\end{align*}
Therefore, it follows from point (a) that the operators
$
T_0R_0\big((1-\varepsilon)^{\pm1}\e^{i\theta}\big)T_0^*\in S_2(\G)
$
have a limit in $S_2(\G)$ as $\varepsilon\searrow0$ for a.e. $\theta\in[0,2\pi)$. The
claim then follows from this by using the polarization identity\;\!:
\begin{align*}
T_0R_0(z)T_1^*
&=(T_0+T_1)R_0(z)(T_0+T_1)^*-(T_0-T_1)R_0(z)(T_0-T_1)^*\\
&\quad-i(T_0-iT_1)R_0(z)(T_0-iT_1)^*+i(T_0+iT_1)R_0(z)(T_0+iT_1)^*,
\quad z\in\C\setminus\S^1.
\end{align*}

(c) Let $\zeta\in\G$ and take an operator $T_1\in S_2(\H_0,\G)$ satisfying
$T_1^*\zeta=\varphi_0$. Then, we have
$$
T_0R_0\big((1-\varepsilon)^{\pm1}\e^{i\theta}\big)\varphi_0
=T_0R_0\big((1-\varepsilon)^{\pm1}\e^{i\theta}\big)T_1^*\zeta,
$$
and the claim follows from point (b).
\end{proof}

Now, let $U$ be a second unitary operator in a second separable Hilbert space $\H$,
and define the quantities $E^U$, $R(z)$, $\delta(r,\theta)$, etc. as for the operator
$U_0$. Then, the above results for the operator $U_0$ also hold for the operator $U$.
Furthermore, if we are given some identification operator $J\in\B(\H_0,\H)$, then a
direct calculation shows that
$$
JR_0(z)-R(z)J=-zR(z)U^*(JU_0-UJ)U_0^*R_0(z),\quad z\in\C\setminus\S^1.
$$
So, if one sets $V:=JU_0-UJ$, then one obtains a second resolvent equation for unitary
operators in a two-Hilbert spaces setting\;\!:
\begin{equation}\label{eq_factors}
JR_0(z)-R(z)J=-zR(z)U^*VU_0^*R_0(z),\quad z\in\C\setminus\S^1.
\end{equation}
By analogy to the self-adjoint case, we call the operator $V$ the perturbation of
$U_0$. And if $V$ factorises as $V=G^*G_0$ for some operators $G_0\in\B(\H_0,\G)$ and
$G\in\B(\H,\G)$, then the second resolvent equation takes the form
\begin{equation}\label{eq_factors_bis}
JR_0(z)-R(z)J=-z\big(GUR(z)^*\big)^*G_0U_0^*R_0(z),\quad z\in\C\setminus\S^1.
\end{equation}

%--------------------------------------------------------------------------------------
\section{Representation formulas for the wave operators}\label{section_wave}
\setcounter{equation}{0}
%--------------------------------------------------------------------------------------

In this section, we define the stationary wave operators and the strong wave operators
for the triple $(U,U_0,J)$, we present conditions under which both types of wave
operators exist and coincide, and we derive representation formulas for the stationary
wave operators. We start with the definition of the stationary wave operators.

For each $\varepsilon\in(0,1)$, we define operators
$w_\pm(U,U_0,J,\varepsilon)\in\B(\H_0,\H)$ in terms of the sesquilinear form
\begin{align*}
&\big\langle w_\pm(U,U_0,J,\varepsilon)\varphi_0,\varphi\big\rangle_\H\\
&:=\pm g_\pm(\varepsilon)\int_0^{2\pi}\d\theta\,
\big\langle JR_0\big((1-\varepsilon)^{\pm1}\e^{i\theta}\big)\varphi_0,
R\big((1-\varepsilon)^{\pm1}\e^{i\theta}\big)\varphi\big\rangle_\H,
\quad\varphi_0\in\H_0,~\varphi\in\H.
\end{align*}
Using the Cauchy-Schwarz inequality and \eqref{eq_int_delta}, we obtain the bound
\begin{align}
&\Big|\big\langle w_\pm(U,U_0,J,\varepsilon)\varphi_0,\varphi\big\rangle_\H\Big|^2
\nonumber\\
&\le\|J\|^2_{\B(\H_0,\H)}\left(g_\pm(\varepsilon)\int_0^{2\pi}\d\theta\,
\big\|R_0\big((1-\varepsilon)^{\pm1}\e^{i\theta}\big)
\varphi_0\big\|^2_{\B(\H_0)}\right)\nonumber\\
&\quad\cdot\left(g_\pm(\varepsilon)\int_0^{2\pi}\d\theta\,
\big\|R\big((1-\varepsilon)^{\pm1}\e^{i\theta}\big)\varphi\big\|^2_{\B(\H)}\right)
\nonumber\\
&=\|J\|^2_{\B(\H_0,\H)}\left(\int_0^{2\pi}\d\theta\,
\big\langle\delta_0\big((1-\varepsilon)^{\pm1},\theta\big)\varphi_0,
\varphi_0\big\rangle_{\H_0}\right)\left(\int_0^{2\pi}\d\theta\,
\big\langle\delta\big((1-\varepsilon)^{\pm1},\theta\big)\varphi,
\varphi\big\rangle_\H\right)\nonumber\\
&=\|J\|^2_{\B(\H_0,\H)}\;\!\|\varphi_0\|^2_{\H_0}\;\!\|\varphi\|^2_\H,
\label{eq_bound_w}
\end{align}
which shows that the operators $w_\pm(U,U_0,J,\varepsilon)$ are bounded uniformly in
$\varepsilon$\;\!:
\begin{equation}\label{eq_bound_uniform}
\big\|w_\pm(U,U_0,J,\varepsilon)\big\|_{\B(\H_0,\H)}\le\|J\|_{\B(\H_0,\H)},
\quad\varepsilon\in(0,1).
\end{equation}

In our first lemma of the section, we present a general condition for the existence of
the weak limits of $w_\pm(U,U_0,J,\varepsilon)$ as $\varepsilon\searrow0$. We use the
notation $\chi_{\cal B}$ for the characteristic function of a Borel set
${\cal B}\subset[0,2\pi)$.

\begin{Lemma}\label{lemma_limits}
Let $\Theta_0,\Theta\subset[0,2\pi)$ be Borel sets, let $\varphi_0\in\H_0$ and
$\varphi\in\H$, and assume that the limits
\begin{equation}\label{del_a_pm}
a_\pm(\varphi_0,\varphi,\theta)
:=\pm\lim_{\varepsilon\searrow0}g_\pm(\varepsilon)\;\!
\big\langle JR_0\big((1-\varepsilon)^{\pm1}\e^{i\theta}\big)\varphi_0,
R\big((1-\varepsilon)^{\pm1}\e^{i\theta}\big)\varphi\big\rangle_\H
\end{equation}
exist for a.e. $\theta\in\Theta_0\cap\Theta$.
\begin{enumerate}
\item[(a)] We have
\begin{equation}\label{eq_null}
a_\pm\big(E^{U_0}(\Theta_0)\varphi_0,E^U(\Theta)\varphi,\theta\big)=0,
\quad\hbox{a.e.}~\theta\in[0,2\pi)\setminus(\Theta_0\cap\Theta),
\end{equation}
and
\begin{equation}\label{eq_intersection}
a_\pm\big(E^{U_0}(\Theta_0)\varphi_0,E^U(\Theta)\varphi,\theta\big)
=\chi_{\Theta_0\cap\Theta}(\theta)\;\!a_\pm(\varphi_0,\varphi,\theta),
\quad\hbox{a.e.}~\theta\in[0,2\pi).
\end{equation}
\item[(b)] If $a_\pm(\varphi_0,\varphi,\theta)$ exist for a.e. $\theta\in[0,2\pi)$,
then the same holds with $\varphi_0$ replaced by $P_{\rm ac}(U_0)\varphi_0$ or with
$\varphi$ replaced by $P_{\rm ac}(U)\varphi$, and we have for a.e. $\theta\in[0,2\pi)$
\begin{equation}\label{eq_ac}
a_\pm(\varphi_0,\varphi,\theta)
=a_\pm\big(P_{\rm ac}(U_0)\varphi_0,\varphi,\theta\big)
=a_\pm\big(\varphi_0,P_{\rm ac}(U)\varphi,\theta\big)
=a_\pm\big(P_{\rm ac}(U_0)\varphi_0,P_{\rm ac}(U)\varphi,\theta\big).
\end{equation}
\item[(c)] We have
$$
\lim_{\varepsilon\searrow0}\big\langle w_\pm(U,U_0,J,\varepsilon)E^{U_0}(\Theta_0)
P_{\rm ac}(U_0)\varphi_0,E^U(\Theta)\varphi\big\rangle_\H
=\int_{\Theta_0\cap\Theta}\d\theta\,a_\pm(\varphi_0,\varphi,\theta).
$$
\end{enumerate}
\end{Lemma}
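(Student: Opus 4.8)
The plan is to prove the three parts in order, using (a) to bootstrap (b) and (c). For part (a), the starting point is the elementary identity relating the resolvents of $U_0$ and of $E^{U_0}(\Theta_0)U_0$: since $R_0(z)$ and $E^{U_0}(\Theta_0)$ commute, one has $E^{U_0}(\Theta_0)R_0(z)=R_0(z)E^{U_0}(\Theta_0)$, and more usefully, on the orthogonal complement of $\Ran E^{U_0}(\Theta_0)$ the resolvent applied to $E^{U_0}(\Theta_0)\varphi_0$ factors through the spectral projection. I would first establish \eqref{eq_intersection}: decompose $E^{U_0}(\Theta_0)\varphi_0$ and $E^U(\Theta)\varphi$ and observe that the factor $g_\pm(\varepsilon)\langle JR_0((1-\varepsilon)^{\pm1}\e^{i\theta})E^{U_0}(\Theta_0)\varphi_0,R((1-\varepsilon)^{\pm1}\e^{i\theta})E^U(\Theta)\varphi\rangle_\H$ differs from $\chi_{\Theta_0\cap\Theta}(\theta)$ times the unprojected expression by terms that, after taking the limit $\varepsilon\searrow0$, vanish for a.e.\ $\theta$; the key analytic input is \eqref{eq_derivative} (or rather \eqref{eq_bound_res}), which says that the relevant quadratic forms of $\delta_0$ are dominated by the a.e.-finite density of the spectral measure. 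Concretely, for $\theta\notin\Theta_0$ one uses that $\tfrac\d{\d\theta}\langle E^{U_0}((0,\theta])E^{U_0}(\Theta_0)\varphi_0,\cdot\rangle$ vanishes a.e.\ outside $\Theta_0$, and Cauchy--Schwarz against the uniformly bounded $R$-factor kills the cross term. Then \eqref{eq_null} is the special case of \eqref{eq_intersection} where $\theta\in[0,2\pi)\setminus(\Theta_0\cap\Theta)$, so the characteristic function is zero.

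For part (b), I would apply part (a) twice. Taking $\Theta_0=[0,2\pi)$ and replacing $\varphi_0$ by a spectral-type decomposition $\varphi_0=P_{\rm ac}(U_0)\varphi_0+P_{\rm sc}(U_0)\varphi_0+P_{\rm p}(U_0)\varphi_0$: the point is that $a_\pm(P_{\rm sc}(U_0)\varphi_0,\varphi,\theta)$ and $a_\pm(P_{\rm p}(U_0)\varphi_0,\varphi,\theta)$ vanish for a.e.\ $\theta$. For the singular continuous and point parts, one chooses a Borel support $S$ of $E^{U_0}_{\rm sing}$ with $|S|=0$; approximating $P_{\rm sing}(U_0)\varphi_0$ by $E^{U_0}(\Theta_0)P_{\rm sing}(U_0)\varphi_0$ with $\Theta_0\subset S$ and invoking \eqref{eq_intersection}, the limit is supported on the null set $\Theta_0\cap\Theta\subset S$, hence is $0$ a.e. The same argument on the $\H$-side handles replacing $\varphi$ by $P_{\rm ac}(U)\varphi$. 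Combining gives all four equalities in \eqref{eq_ac}; existence of the limits for the projected vectors is automatic since they equal the (existing) limit for $\varphi_0,\varphi$.

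For part (c), I would return to the integral definition of $w_\pm(U,U_0,J,\varepsilon)$ and pass the limit inside the $\theta$-integral by dominated convergence. The integrand is $\pm g_\pm(\varepsilon)\langle JR_0((1-\varepsilon)^{\pm1}\e^{i\theta})E^{U_0}(\Theta_0)P_{\rm ac}(U_0)\varphi_0,R((1-\varepsilon)^{\pm1}\e^{i\theta})E^U(\Theta)\varphi\rangle_\H$; by the Cauchy--Schwarz step already used in \eqref{eq_bound_w}, its absolute value is bounded by $\|J\|_{\B(\H_0,\H)}\,h_0(\theta)^{1/2}h(\theta)^{1/2}$ where $h_0(\theta):=\langle\delta_0((1-\varepsilon)^{\pm1},\theta)E^{U_0}(\Theta_0)P_{\rm ac}(U_0)\varphi_0,E^{U_0}(\Theta_0)P_{\rm ac}(U_0)\varphi_0\rangle$ and similarly $h$, and by \eqref{eq_bound_res} each of these is dominated uniformly in $\varepsilon$ by an $L^1$ function of $\theta$; the product of two such is integrable by Cauchy--Schwarz in $L^2$. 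The pointwise limit of the integrand is $a_\pm(E^{U_0}(\Theta_0)P_{\rm ac}(U_0)\varphi_0,E^U(\Theta)\varphi,\theta)$, which by \eqref{eq_intersection} and \eqref{eq_ac} equals $\chi_{\Theta_0\cap\Theta}(\theta)\,a_\pm(\varphi_0,\varphi,\theta)$ for a.e.\ $\theta$. Dominated convergence then yields $\int_{\Theta_0\cap\Theta}\d\theta\,a_\pm(\varphi_0,\varphi,\theta)$, as claimed.

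The main obstacle I anticipate is the justification of the domination in part (c): the individual densities $h_0(\theta),h(\theta)$ are $L^1$ but need not be $L^2$, so one cannot directly apply Cauchy--Schwarz to their product without care. The fix is to exploit that $\delta_0$ and $\delta$ are positive operators and that $g_\pm(\varepsilon)^{1/2}R_0((1-\varepsilon)^{\pm1}\e^{i\theta})\varphi_0$, integrated in $\theta$, has $L^2([0,2\pi))$-norm exactly $\|\varphi_0\|_{\H_0}$ by \eqref{eq_int_delta}; combined with the a.e.-finite pointwise bound \eqref{eq_bound_res} this gives an envelope in $L^1$ after one more Cauchy--Schwarz, mirroring Yafaev's treatment (\cite[Ch.~1, Ch.~5]{Yaf92}) of the self-adjoint analogue. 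The rest is bookkeeping with spectral projections.
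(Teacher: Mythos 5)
Parts (a) and (b) of your proposal are essentially the paper's argument: the vanishing of the projected limits off $\Theta_0\cap\Theta$ via Cauchy--Schwarz, \eqref{eq_norm}, \eqref{eq_derivative} and the a.e.\ vanishing of $\tfrac{\d}{\d\theta}\langle E^{U_0}((0,\theta])E^{U_0}(\Theta_0)\varphi_0,\varphi_0\rangle_{\H_0}$ outside $\Theta_0$ is exactly how \eqref{eq_null} is proved, and \eqref{eq_intersection} and \eqref{eq_ac} then follow by the decomposition/linearity argument you describe (with the singular part carried by a Lebesgue-null Borel set). The only caveat in (a) is logical order: for $\theta\notin\Theta_0\cap\Theta$ the right-hand side of \eqref{eq_intersection} involves a limit that is not assumed to exist, so the statement there \emph{is} \eqref{eq_null} (existence and vanishing of the projected limit); you cannot literally obtain \eqref{eq_null} as a ``special case'' of \eqref{eq_intersection}, but since your mechanism proves the vanishing directly, this is presentational rather than substantive.

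Part (c) has a genuine gap. Dominated convergence cannot be run the way you propose: the bound \eqref{eq_bound_res} only provides, for each fixed vector, a function $c_{\varphi_0}(\theta)$ that is finite a.e.\ (it is essentially the radial maximal function of the Poisson integral of the spectral measure), and such maximal functions are in general \emph{not} integrable; so the $\varepsilon$-independent envelope $c_{\psi_0}(\theta)^{1/2}c_{\psi}(\theta)^{1/2}$ need not be in $\lone([0,2\pi))$, and your proposed fix does not repair this --- combining the uniform $\ltwo$-bound $\int_0^{2\pi}\d\theta\,\langle\delta_0((1-\varepsilon)^{\pm1},\theta)\psi_0,\psi_0\rangle_{\H_0}=\|\psi_0\|^2_{\H_0}$ with an a.e.-finite but possibly non-integrable factor $c_\psi(\theta)^{1/2}$ still leaves you with an $\varepsilon$-dependent bound and no integrable majorant. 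The paper instead invokes Vitali's theorem and proves \emph{uniform integrability} of the family $\{\langle\delta_0(1-\varepsilon,\cdot)\psi_0,\psi_0\rangle_{\H_0}\}_\varepsilon$ (after a Cauchy--Schwarz step that bounds $|\int_{\cal B}f_{\pm,\varepsilon}|$ by $\|J\|\,\|\psi\|_\H(\int_{\cal B}\langle\delta_0\psi_0,\psi_0\rangle)^{1/2}$, so that only the $\psi_0$-factor needs control), and this uniform integrability is obtained from harmonicity of the Poisson integral of the spectral measure and Fatou-type theorems from \cite{Hof_1962}, combined with convergence of the $\lone$-norms. Crucially, this is where the projection $P_{\rm ac}(U_0)$ in the statement of (c) is used: for $\psi_0=E^{U_0}(\Theta_0)P_{\rm ac}(U_0)\varphi_0$ the spectral measure is absolutely continuous, so no mass escapes in the limit. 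Your argument never uses $P_{\rm ac}(U_0)$, and indeed without it the exchange of limit and integral is false (take $\H_0=\H$, $J=1$, $U=U_0$ and $\varphi_0=\varphi$ an eigenvector: the integrand tends to $0$ for a.e.\ $\theta$ while its integral equals $\|\varphi_0\|^2_{\H_0}$ for every $\varepsilon$). So the missing ingredient is not bookkeeping but the uniform-integrability/absolute-continuity argument.
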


\begin{proof}
(a) Using successively the Cauchy-Schwarz inequality, the definition of $\delta_0$ and
$\delta$, the relation \eqref{eq_derivative}, and functional calculus, we obtain for
a.e. $\theta\in[0,2\pi)$
\begin{align*}
&\lim_{\varepsilon\searrow0}\Big|g_\pm(\varepsilon)\;\!
\big\langle JR_0\big((1-\varepsilon)^{\pm1}\e^{i\theta}\big)E^{U_0}(\Theta_0)
\varphi_0,R\big((1-\varepsilon)^{\pm1}\e^{i\theta}\big)E^U(\Theta)\varphi
\big\rangle_\H\Big|^2\\
&\le\lim_{\varepsilon\searrow0}\|J\|^2_{\B(\H_0,\H)}\left(g_\pm(\varepsilon)
\big\|R_0\big((1-\varepsilon)^{\pm1}\e^{i\theta}\big)E^{U_0}(\Theta_0)\varphi_0
\big\|^2_{\B(\H_0)}\right)\\
&\quad\cdot\left(g_\pm(\varepsilon)\big\|R\big((1-\varepsilon)^{\pm1}
\e^{i\theta}\big)E^U(\Theta)\varphi\big\|^2_{\B(\H)}\right)\\
&=\|J\|^2_{\B(\H_0,\H)}\lim_{\varepsilon\searrow0}
\big\langle\delta_0\big((1-\varepsilon)^{\pm1},\theta\big)E^{U_0}(\Theta_0)\varphi_0,
\varphi_0\big\rangle_{\H_0}\;\!\big\langle
\delta\big((1-\varepsilon)^{\pm1},\theta\big)E^U(\Theta)\varphi,
\varphi\big\rangle_{\H_0}\\
&=\Big(\tfrac\d{\d\theta}\;\!\big\langle E^{U_0}\big((0,\theta]\big)
E^{U_0}(\Theta_0)\varphi_0,\varphi_0\big\rangle_{\H_0}\Big)
\Big(\tfrac\d{\d\theta}\;\!\big\langle E^U\big((0,\theta]\big)E^U(\Theta)\varphi,
\varphi\big\rangle_\H\Big)\\
&=\chi_{\Theta_0\cap\Theta}(\theta)\Big(\tfrac\d{\d\theta}\;\!
\big\langle E^{U_0}\big((0,\theta]\big)\varphi_0,\varphi_0\big\rangle_{\H_0}\Big)
\Big(\tfrac\d{\d\theta}\;\!\big\langle E^U\big((0,\theta]\big)\varphi,
\varphi\big\rangle_\H\Big)
\end{align*}
which proves \eqref{eq_null}. Using \eqref{eq_null}, we can then establish
\eqref{eq_intersection} as in the proof of \cite[Lemma~5.2.1]{Yaf92}.

(b) The singular component of the vector $\varphi$ relative to the unitary operator
$U$ can be written as $E^U(\Theta_{\rm s})\varphi$ for some Borel set
$\Theta_{\rm s}\subset[0,2\pi)$ of Lebesgue measure $0$, and similarly for $\varphi_0$
and $U_0$. Therefore, it follows from \eqref{eq_null}-\eqref{eq_intersection} that the
existence of the limit $a_\pm$ for the pair $(\varphi_0,\varphi)$ is equivalent to the
existence of $a_\pm$ for any of the pairs $(P_{\rm ac}(U_0)\varphi_0,\varphi\big)$,
$\big(\varphi_0,P_{\rm ac}(U)\varphi\big)$,
$\big(P_{\rm ac}(U_0)\varphi_0,P_{\rm ac}(U)\varphi\big)$, and for all four pairs the
value of $a_\pm$ is the same for a.e. $\theta\in[0,2\pi)$. 

(c) By definition of $w_\pm(U,U_0,J,\varepsilon)$, we have for
$\psi_0:=E^{U_0}(\Theta_0)P_{\rm ac}(U_0)\varphi_0$ and $\psi:=E^U(\Theta)\varphi$
\begin{equation}\label{eq_exchange}
\lim_{\varepsilon\searrow0}\big\langle w_\pm(U,U_0,J,\varepsilon)\psi_0,
\psi\big\rangle_\H
=\pm\lim_{\varepsilon\searrow0}\int_0^{2\pi}\d\theta\,g_\pm(\varepsilon)\;\!
\big\langle JR_0\big((1-\varepsilon)^{\pm1}\e^{i\theta}\big)\psi_0,
R\big((1-\varepsilon)^{\pm1}\e^{i\theta}\big)\psi\big\rangle_\H,
\end{equation}
and we know from \eqref{eq_intersection}-\eqref{eq_ac} that
$$
a_\pm\big(\psi_0,\psi,\theta\big)
=\chi_{\Theta_0\cap\Theta}(\theta)\;\!a_\pm(\varphi_0,\varphi,\theta).
$$
Therefore, to prove the claim, it is sufficient to show that we can exchange the limit
and the integral in \eqref{eq_exchange}. This will follow from Vitali's theorem
\cite[p.133]{Rud87} if we show that the family of functions
$\{f_{\pm,\varepsilon}\}_{\varepsilon\in(0,1)}\subset\lone([0,2\pi))$ given by
$$
f_{\pm,\varepsilon}(\theta):=g_\pm(\varepsilon)\;\!
\big\langle JR_0\big((1-\varepsilon)^{\pm1}\e^{i\theta}\big)\psi_0,
R\big((1-\varepsilon)^{\pm1}\e^{i\theta}\big)\psi\big\rangle_\H,
$$
is uniformly integrable on $[0,2\pi)$. For this, we first observe that a calculation
as in \eqref{eq_bound_w} shows for any Borel set ${\cal B}\subset[0,2\pi)$ that
$$
\left|\int_{\cal B}\d\theta\,f_{\pm,\varepsilon}(\theta)\right|
\le\|J\|_{\B(\H_0,\H)}\|\psi\|_\H\left(\int_{\cal B}\d\theta\left|\big\langle
\delta_0\big((1-\varepsilon)^{\pm1},\theta\big)\psi_0,
\psi_0\big\rangle_{\H_0}\right|\right)^{1/2}.
$$
Due to this bound and \eqref{eq_minus}, it is sufficient to show the uniform
integrability of the family of functions
$
\big\{\langle\delta_0(1-\varepsilon,\;\!\cdot\;\!)\psi_0,
\psi_0\rangle_{\H_0}\big\}_{\varepsilon\in(0,1)}
$.
Using the spectral decomposition of $U_0$ and Fubini's theorem we get
$$
\big\langle\delta_0(1-\varepsilon,\;\!\cdot\;\!)\psi_0,\psi_0\big\rangle_{\H_0}
=\tfrac1{2\pi}\int_0^{2\pi}\big(1-(1-\varepsilon)^2\big)
\big|1-(1-\varepsilon)\e^{i(\theta-\theta')}\big|^{-2}
\big\langle E^{U_0}(\d\theta')\psi_0,\psi_0\big\rangle_{\H_0}.
$$
Since the integrand is nothing but the Poisson integral kernel and
$\langle E^{U_0}(\d\theta')\psi_0,\psi_0\rangle_{\H_0}$ is a complex Baire measure, it
is possible to apply \cite[2nd~thm.~p.33]{Hof_1962} to infer that the function
$$
\C\setminus\S^1\ni r\e^{i\theta}\mapsto
\big\langle\delta_0(r,\theta)\psi_0,\psi_0\big\rangle_{\H_0}
$$
is harmonic on the open disc. Moreover, the integrals
$
\int_0^{2\pi}\d\theta\,
\big\langle\delta_0(r,\theta)\psi_0,\psi_0\big\rangle_{\H_0}
$
are bounded as $r\nearrow1$ due to \eqref{eq_int_delta}. Hence we can apply
\eqref{eq_derivative} and corollaries of  Fatou's theorem \cite[p.38]{Hof_1962} to
obtain that
$$
\lim_{\varepsilon\searrow0}\left\|\big\langle\delta_0(1-\varepsilon,\;\!\cdot\;\!)
\psi_0,\psi_0\big\rangle_{\H_0}\right\|_{\lone([0,2\pi))}
=\left\|\theta\mapsto\tfrac\d{\d\theta}\;\!\big\langle
P_{\rm ac}(U_0)E^{U_0}\big((0,\theta]\big)\psi_0,\psi_0\big\rangle_{\H_0}
\right\|_{\lone([0,2\pi))}.
$$
In consequence, there exists $\varepsilon_0>0$ such that the family
$
\big\{\langle\delta_0(1-\varepsilon,\;\!\cdot\;\!)\psi_0,
\psi_0\rangle_{\H_0}\big\}_{\varepsilon\in(0,\varepsilon_0]}
$
is uniformly integrable on $[0,2\pi)$. To conclude the proof, we note that the formula
\eqref{eq_norm} for the norm of $\delta_0(1-\varepsilon,\theta)$ implies that the
family
$
\big\{\langle\delta_0(1-\varepsilon,\;\!\cdot\;\!)\psi_0,
\psi_0\rangle_{\H_0}\big\}_{\varepsilon\in(\varepsilon_0,1)}
$
is also uniformly integrable on $[0,2\pi)$.
\end{proof}

\begin{Corollary}\label{cor_stat_wo}
Let $\D_0\subset\H_0$ and $\D\subset\H$ be dense sets, and assume that for each
$\varphi_0\in\D_0$ and $\varphi\in\D$ the limits $a_\pm(\varphi_0,\varphi,\theta)$
exist for a.e. $\theta\in[0,2\pi)$. Then, the following weak limits exist
$$
w_\pm(U,U_0,J):=\wlim_{\varepsilon\searrow0}w_\pm(U,U_0,J,\varepsilon)P_{\rm ac}(U_0).
$$
\end{Corollary}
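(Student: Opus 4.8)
The plan is to deduce the existence of the weak limit of $w_\pm(U,U_0,J,\varepsilon)P_{\rm ac}(U_0)$ from the hypothesis via a standard density-plus-uniform-boundedness argument, using Lemma \ref{lemma_limits}(c) to identify the limit on the dense sets. First I would fix $\varphi_0\in\D_0$ and $\varphi\in\D$ and apply Lemma \ref{lemma_limits}(c) with $\Theta_0=\Theta=[0,2\pi)$, which gives
$$
\lim_{\varepsilon\searrow0}\big\langle w_\pm(U,U_0,J,\varepsilon)P_{\rm ac}(U_0)\varphi_0,\varphi\big\rangle_\H
=\int_0^{2\pi}\d\theta\,a_\pm(\varphi_0,\varphi,\theta),
$$
the integral being finite because, by the Cauchy-Schwarz computation leading to \eqref{eq_bound_w} together with \eqref{eq_int_delta}, the integrand $f_{\pm,\varepsilon}$ is controlled (after passing to the limit via the same Vitali argument as in the proof of Lemma \ref{lemma_limits}(c)) in $\lone([0,2\pi))$. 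In other words, the numerical limits $\ell_\pm(\varphi_0,\varphi):=\lim_{\varepsilon\searrow0}\langle w_\pm(U,U_0,J,\varepsilon)P_{\rm ac}(U_0)\varphi_0,\varphi\rangle_\H$ exist for all $\varphi_0\in\D_0$, $\varphi\in\D$.

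Next I would bootstrap this from the dense sets to all of $\H_0\times\H$ using the uniform bound \eqref{eq_bound_uniform}. Given arbitrary $\varphi_0\in\H_0$, $\varphi\in\H$, pick sequences $\varphi_0^{(k)}\in\D_0$, $\varphi^{(k)}\in\D$ converging to them. For $\varepsilon,\varepsilon'\in(0,1)$ one estimates
$$
\big|\big\langle(w_\pm(U,U_0,J,\varepsilon)-w_\pm(U,U_0,J,\varepsilon'))P_{\rm ac}(U_0)\varphi_0,\varphi\big\rangle_\H\big|
\le 2\|J\|_{\B(\H_0,\H)}\big(\|\varphi_0-\varphi_0^{(k)}\|_{\H_0}\|\varphi\|_\H+\|\varphi_0^{(k)}\|_{\H_0}\|\varphi-\varphi^{(k)}\|_\H\big)
$$
$$
+\big|\big\langle(w_\pm(U,U_0,J,\varepsilon)-w_\pm(U,U_0,J,\varepsilon'))P_{\rm ac}(U_0)\varphi_0^{(k)},\varphi^{(k)}\big\rangle_\H\big|,
$$
and the last term tends to $0$ as $\varepsilon,\varepsilon'\searrow0$ since it equals a difference of terms converging to $\ell_\pm(\varphi_0^{(k)},\varphi^{(k)})$. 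Choosing $k$ large first and then $\varepsilon,\varepsilon'$ small shows $\langle w_\pm(U,U_0,J,\varepsilon)P_{\rm ac}(U_0)\varphi_0,\varphi\rangle_\H$ is Cauchy as $\varepsilon\searrow0$, hence convergent; call the limit $\ell_\pm(\varphi_0,\varphi)$. This $\ell_\pm$ is sesquilinear, and by \eqref{eq_bound_uniform} it satisfies $|\ell_\pm(\varphi_0,\varphi)|\le\|J\|_{\B(\H_0,\H)}\|\varphi_0\|_{\H_0}\|\varphi\|_\H$, so by the Riesz representation theorem there is a unique $w_\pm(U,U_0,J)\in\B(\H_0,\H)$ with $\langle w_\pm(U,U_0,J)\varphi_0,\varphi\rangle_\H=\ell_\pm(\varphi_0,\varphi)$; this is precisely the asserted weak limit.

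The only genuinely delicate point is the interchange of limit and integral needed to guarantee that $\ell_\pm(\varphi_0,\varphi)$ on the dense sets is indeed given by the convergent integral $\int_0^{2\pi}a_\pm(\varphi_0,\varphi,\theta)\,\d\theta$ — but this is exactly what Lemma \ref{lemma_limits}(c) already supplies, so no new work is required there. The rest is the routine $\varepsilon/3$-type density argument outlined above, relying solely on the uniform operator bound \eqref{eq_bound_uniform}. I would write it compactly, referencing Lemma \ref{lemma_limits}(c) for the dense sets and \eqref{eq_bound_uniform} for the extension, and invoke Riesz representation to produce the operator $w_\pm(U,U_0,J)$.
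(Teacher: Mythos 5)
Your proposal is correct and follows essentially the same route as the paper: the paper's proof consists precisely of invoking Lemma \ref{lemma_limits}(c) with $\Theta_0=\Theta=[0,2\pi)$ on the dense sets and then a density argument based on the uniform bound \eqref{eq_bound_uniform}. Your write-up merely spells out the Cauchy/Riesz details that the paper leaves implicit.
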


\begin{proof}
The claim follows from Lemma \ref{lemma_limits}(c) applied with
$\Theta_0=\Theta=[0,2\pi)$ and a density argument taking into account the uniform
bound \eqref{eq_bound_uniform}.
\end{proof}

The weak limits $w_\pm(U,U_0,J)$ of Corollary \ref{cor_stat_wo} are called the
stationary wave operators for the triple $(U,U_0,J)$. When they exist, it can be shown
as in the self-adjoint case \cite[p.158-159]{Yaf92} that they posses the usual
properties of wave operators, namely, the inclusions
\begin{equation}\label{eq_inclusions}
\H_{\rm s}(U_0)\subset\ker\big(w_\pm(U,U_0,J)\big)
\quad\hbox{and}\quad
\Ran\big(w_\pm(U,U_0,J)\big)\subset\H_{\rm ac}(U)
\end{equation}
with $\H_{\rm s}(U_0)\subset\H_0$ the singular subspace of $U_0$, and the
intertwinning relation
\begin{equation}\label{eq_inter}
w_\pm(U,U_0,J)E^{U_0}(\Theta)=E^U(\Theta)w_\pm(U,U_0,J),
\quad\hbox{$\Theta\subset[0,2\pi)$ Borel set.}
\end{equation}

In the next theorem, we give more explicit conditions guaranteeing the existence of
$w_\pm(U,U_0,J)$, as well as representation formulas for $w_\pm(U,U_0,J)$. From now
on, we assume that there exist an auxiliary separable Hilbert space $\G$ and operators
$G_0\in\B(\H_0,\G)$, $G\in\B(\H,\G)$ such that $V=G^*G_0$.

\begin{Theorem}[Stationary wave operators]\label{thm_stat_wave}
Assume that for each $\varphi_0$ in a dense set $\D_0\subset\H_0$
\begin{equation}\label{eq_free_lim}
\slim_{\varepsilon\searrow0}
G_0U_0^*R_0\big((1-\varepsilon)^{\pm1}\e^{i\theta}\big)\varphi_0
\hbox{~exist for a.e. $\theta\in[0,2\pi)$,}
\end{equation}
and suppose that $G$ is weakly $U$-smooth. Then, the stationary wave operators
$w_\pm(U,U_0,J)$ exist and satisfy the representation formulas
\begin{equation}\label{eq_rep_stat}
\big\langle w_\pm(U,U_0,J)\varphi_0,\varphi\big\rangle_\H
=\int_0^{2\pi}\d\theta\,a_\pm(\varphi_0,\varphi,\theta),
\quad\varphi_0\in\D_0,~\varphi\in\H,
\end{equation}
with $a_\pm(\varphi_0,\varphi,\theta)$ given by \eqref{del_a_pm}.
\end{Theorem}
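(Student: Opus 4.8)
The plan is to verify the hypotheses of Corollary \ref{cor_stat_wo}, which then yields both the existence of $w_\pm(U,U_0,J)$ and formula \eqref{eq_rep_stat} (the latter being exactly the content of Lemma \ref{lemma_limits}(c) with $\Theta_0=\Theta=[0,2\pi)$). So everything reduces to showing that the limits $a_\pm(\varphi_0,\varphi,\theta)$ in \eqref{del_a_pm} exist for a.e.\ $\theta\in[0,2\pi)$ for all $\varphi_0$ in a dense set and all $\varphi$ in a dense set. The natural choice is the given $\D_0$ on the $\H_0$ side and all of $\H$ on the $\H$ side; it will actually suffice to prove existence of $a_\pm(\varphi_0,\varphi,\theta)$ for $\varphi_0\in\D_0$ and arbitrary $\varphi\in\H$.

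**Key steps.** First I would rewrite the scalar product in \eqref{del_a_pm} using the factorised second resolvent equation \eqref{eq_factors_bis}. Writing $z=(1-\varepsilon)^{\pm1}\e^{i\theta}$, that equation gives $JR_0(z)=R(z)J-z\big(GUR(z)^*\big)^*G_0U_0^*R_0(z)$, so
\begin{align*}
g_\pm(\varepsilon)\big\langle JR_0(z)\varphi_0,R(z)\varphi\big\rangle_\H
&=g_\pm(\varepsilon)\big\langle R(z)J\varphi_0,R(z)\varphi\big\rangle_\H\\
&\quad-g_\pm(\varepsilon)\,z\big\langle G_0U_0^*R_0(z)\varphi_0,GUR(z)^*R(z)\varphi\big\rangle_\G.
\end{align*}
The first term on the right equals $\pm\big\langle\delta(1-\varepsilon,\theta)^{\pm}J\varphi_0,\varphi\big\rangle_\H$ up to the standard identification (using $R(z)R(z)^*=|R(z)|^2$ and the symmetry relation $\delta(r^{-1},\theta)=-\delta(r,\theta)$), whose weak-$*$ limit against $\varphi$, after integrating in $\theta$, reproduces $\langle P_{\rm ac}(U)J\varphi_0,\varphi\rangle$ — but more to the point, pointwise in $\theta$ the relation \eqref{eq_derivative_bis} shows that $\pm g_\pm(\varepsilon)\big\langle R(z)J\varphi_0,R(z)\varphi\big\rangle_\H=\big\langle\delta(1-\varepsilon,\theta)J\varphi_0,\varphi\big\rangle_\H$ has a limit for a.e.\ $\theta$ only in the weak-integrated sense; to get the genuine pointwise limit one uses that this is exactly $\tfrac{\d}{\d\theta}\langle E^U((0,\theta])J\varphi_0,\varphi\rangle$ for a.e.\ $\theta$. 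For the second term, I would split $z\,GUR(z)^*R(z)$ and pair $G_0U_0^*R_0(z)\varphi_0$ against $GR(z)^*$ applied to something; the hypothesis \eqref{eq_free_lim} supplies the strong limit of $G_0U_0^*R_0(z)\varphi_0$ in $\G$, while weak $U$-smoothness of $G$ combined with Lemma \ref{lemma_weak}(a) (estimates \eqref{eq_estimate_2}–\eqref{eq_estimate_3}) supplies the needed boundedness of $|g_\pm(\varepsilon)|^{1/2}\|GR(z)^{(*)}\|$ and hence, via the weak-smoothness limit, a limit for the product. The product of a strongly convergent vector with a bounded-and-weakly-convergent family of operators converges, which is what makes the second term have a limit for a.e.\ $\theta$.

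**Main obstacle.** The delicate point is handling the "cross" term where the two resolvents live on different Hilbert spaces and at different radii — reassembling $g_\pm(\varepsilon)\,z\big\langle G_0U_0^*R_0(z)\varphi_0,GUR(z)^*R(z)\varphi\big\rangle_\G$ so that one genuinely has (strong limit)$\times$(weakly convergent bounded operator) rather than a product of two merely weakly convergent families, for which no limit would follow. The trick, as in \cite[Sec.~5.2]{Yaf92}, is to move one full power of $g_\pm(\varepsilon)^{1/2}$ onto the $U$-side resolvent — writing $z\,R(z)^*R(z)=\pm\tfrac{2\pi}{g_\pm(\varepsilon)}\,z\,\delta(1-\varepsilon,\theta)$ up to the appropriate conjugation — so that $|g_\pm(\varepsilon)|^{1/2}GR(z)^{(*)}$ on the one hand stays bounded (weak $U$-smoothness) and $|g_\pm(\varepsilon)|^{1/2}\|R(z)\varphi\|$ stays bounded for a.e.\ $\theta$ by \eqref{eq_bound_res}, while $G_0U_0^*R_0(z)\varphi_0$ converges strongly by \eqref{eq_free_lim}; then the weak limit of $GR(z)^{(*)}$-type quantities against the convergent vector exists for a.e.\ $\theta$ by the definition \eqref{def_weak} of weak $U$-smoothness. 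Once this is in place, a density/diagonal argument over $\varphi_0\in\D_0$ (and the uniform bound \eqref{eq_bound_uniform}) finishes the verification of the hypothesis of Corollary \ref{cor_stat_wo}, and \eqref{eq_rep_stat} is then immediate from Lemma \ref{lemma_limits}(c).
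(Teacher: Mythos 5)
Your proposal is correct and follows essentially the same route as the paper's proof: the same decomposition of $\pm g_\pm(\varepsilon)\langle JR_0(z)\varphi_0,R(z)\varphi\rangle_\H$ via the factorised second resolvent equation \eqref{eq_factors_bis}, with the first term converging a.e.\ by \eqref{eq_derivative} and the second term converging as the pairing of the strong limit from \eqref{eq_free_lim} with the weak limit of $GU\;\!\delta(1-\varepsilon,\theta)\varphi$, followed by Corollary \ref{cor_stat_wo}. The only blemishes are cosmetic: the stray factor $2\pi$ in your rewriting of $z\;\!R(z)^*R(z)$ (the $\tfrac1{2\pi}$ is already contained in $g_\pm(\varepsilon)$, so $\pm g_\pm(\varepsilon)R(z)^*R(z)=\delta(1-\varepsilon,\theta)$ exactly), and the step from weak $U$-smoothness of $G$ (hence of $GU$) to the a.e.\ weak convergence of $GU\;\!\delta(1-\varepsilon,\theta)\varphi$ for arbitrary $\varphi\in\H$, which is precisely Lemma \ref{lemma_weak}(b) and which you in effect reprove via the bounds of Lemma \ref{lemma_weak}(a) and \eqref{eq_bound_res} instead of citing it.
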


\begin{Remark}\label{rem_ass_equiv}
(a) The assumption \eqref{eq_free_lim} is equivalent to the condition
\begin{equation}\label{eq_free_lim_bis}
\slim_{\varepsilon\searrow0}
G_0R_0\big((1-\varepsilon)^{\pm1}\e^{i\theta}\big)\varphi_0
\hbox{~exist for a.e. $\theta\in[0,2\pi)$.}
\end{equation}
Indeed, if \eqref{eq_free_lim} is satisfied, then it follows from the identity
$R_0(z)=1+zU_0^*R_0(z)$ ($z\in\C\setminus\S^1$) that
\begin{align*}
\slim_{\varepsilon\searrow0}
G_0R_0\big((1-\varepsilon)^{\pm1}\e^{i\theta}\big)\varphi_0
=G_0\varphi_0+\e^{i\theta}\slim_{\varepsilon\searrow0}
G_0U_0^*R_0\big((1-\varepsilon)^{\pm1}\e^{i\theta}\big)\varphi_0,
\quad\hbox{a.e. $\theta\in[0,2\pi)$.}
\end{align*}
Alternatively, if \eqref{eq_free_lim_bis} is satisfied, then it follows from the
identity $U_0^*R_0(z)=z^{-1}\big(R_0(z)-1\big)$ ($z\in\C\setminus\S^1$) that
$$
\slim_{\varepsilon\searrow0}
G_0U_0^*R_0\big((1-\varepsilon)^{\pm1}\e^{i\theta}\big)\varphi_0
=\e^{-i\theta}\slim_{\varepsilon\searrow0}
G_0R_0\big((1-\varepsilon)^{\pm1}\e^{i\theta}\big)\varphi_0-\e^{-i\theta}G_0\varphi_0,
\quad\hbox{a.e. $\theta\in[0,2\pi)$.}
$$

(b) The roles of the triples $(\H_0,U_0,G_0)$ and $(\H,U,G)$ can be exchanged in the
statement of the theorem in order to obtain an alternative formulation of Theorem
\ref{thm_stat_wave} (see \cite[Lemma~5.2.6']{Yaf92} in the self-adjoint case).
\end{Remark}

\begin{proof}[Proof of Theorem \ref{thm_stat_wave}]
Using \eqref{eq_minus} and \eqref{eq_factors_bis}, we get for $\varepsilon\in(0,1)$
and $\theta\in[0,2\pi)$
\begin{align*}
&\pm g_\pm(\varepsilon)R\big((1-\varepsilon)^{\pm1}\e^{i\theta}\big)^*J
R_0\big((1-\varepsilon)^{\pm1}\e^{i\theta}\big)\\
&=\pm g_\pm(\varepsilon)R\big((1-\varepsilon)^{\pm1}\e^{i\theta}\big)^*
R\big((1-\varepsilon)^{\pm1}\e^{i\theta}\big)J\\
&\quad\mp g_\pm(\varepsilon)(1-\varepsilon)^{\pm1}\e^{i\theta}
\big(GUR\big((1-\varepsilon)^{\pm1}\e^{i\theta}\big)
R\big((1-\varepsilon)^{\pm1}\e^{i\theta}\big)^*\big)^*
G_0U_0^*R_0\big((1-\varepsilon)^{\pm1}\e^{i\theta}\big)\\
&=\delta(1-\varepsilon,\theta)J-(1-\varepsilon)^{\pm1}\e^{i\theta}
\big(GU\;\!\delta(1-\varepsilon,\theta)\big)^*
G_0U_0^*R_0\big((1-\varepsilon)^{\pm1}\e^{i\theta}\big).
\end{align*}
So, we obtain for $\varphi_0\in\D_0$ and $\varphi\in\H$
\begin{align}
&\pm g_\pm(\varepsilon)\;\!
\big\langle JR_0\big((1-\varepsilon)^{\pm1}\e^{i\theta}\big)\varphi_0,
R\big((1-\varepsilon)^{\pm1}\e^{i\theta}\big)\varphi\big\rangle_\H\nonumber\\
&=\big\langle\delta(1-\varepsilon,\theta)J\;\!\varphi_0,\varphi\big\rangle_\H
-(1-\varepsilon)^{\pm1}\e^{i\theta}\big\langle G_0U_0^*
R_0\big((1-\varepsilon)^{\pm1}\e^{i\theta}\big)\varphi_0,
GU\;\!\delta(1-\varepsilon,\theta)\varphi\big\rangle_\G.\label{eq_two_terms}
\end{align}
Now, the first term in \eqref{eq_two_terms} has a limit as $\varepsilon\searrow0$ for
a.e. $\theta\in[0,2\pi)$ due to \eqref{eq_derivative}. For the second term, the strong
limits
$
\slim_{\varepsilon\searrow0}
G_0U_0^*R_0\big((1-\varepsilon)^{\pm1}\e^{i\theta}\big)\varphi_0
$
exist for a.e. $\theta\in[0,2\pi)$ due to \eqref{eq_free_lim}. Moreover, the operator
$GU$ is weakly $U$-smooth, since $G$ is weakly $U$-smooth (see \eqref{def_weak}).
Thus, Lemma \ref{lemma_weak}(b) implies that the weak limit
$\wlim_{\varepsilon\searrow0}GU\delta(1-\varepsilon,\theta)\varphi$ exists for a.e.
$\theta\in[0,2\pi)$. Therefore, the second term in \eqref{eq_two_terms} also has a
limit as $\varepsilon\searrow0$ for a.e. $\theta\in[0,2\pi)$. In consequence, the
limits $a_\pm(\varphi_0,\varphi,\theta)$ exist for a.e. $\theta\in[0,2\pi)$, and the
claim follows from Corollary \ref{cor_stat_wo}.
\end{proof}

For each $\varepsilon\in(0,1)$, we now define auxiliary operators
$w_\pm(U_0,U_0,J^*J,\varepsilon)\in\B(\H_0)$ in terms of the sesquilinear form
\begin{align*}
&\big\langle w_\pm(U_0,U_0,J^*J,\varepsilon)\varphi_0,\psi_0\big\rangle_{\H_0}\\
&:=\pm g_\pm(\varepsilon)\int_0^{2\pi}\d\theta\,
\big\langle J^*JR_0\big((1-\varepsilon)^{\pm1}\e^{i\theta}\big)\varphi_0,
R_0\big((1-\varepsilon)^{\pm1}\e^{i\theta}\big)\psi_0\big\rangle_{\H_0},
\quad\varphi_0,\psi_0\in\H_0.
\end{align*}
Using Theorem \ref{thm_stat_wave}, we can show the existence of the corresponding
stationary wave operators
$$
w_\pm(U_0,U_0,J^*J)
:=\wlim_{\varepsilon\searrow0}w_\pm(U_0,U_0,J^*J,\varepsilon)P_{\rm ac}(U_0),
$$
and relate them with the initial stationary wave operators $w_\pm(U,U_0,J)$\;\!:

\begin{Corollary}\label{cor_id_stat}
Suppose that the assumptions of Theorem \ref{thm_stat_wave} are satisfied for a dense
set $\D_0\subset\H_0$, and let $\varphi_0\in\D_0$.
\begin{enumerate}
\item[(a)] For a.e. $\theta\in[0,2\pi)$ the following weak limits exist and are equal
\begin{align}
&\wlim_{\varepsilon\searrow0}
G\delta(1-\varepsilon,\theta)w_\pm(U,U_0,J)U_0\varphi_0\nonumber\\
&=\pm\wlim_{\varepsilon\searrow0}g_\pm(\varepsilon)G
R\big((1-\varepsilon)^{\pm1}\e^{i\theta}\big)^*UJ
R_0\big((1-\varepsilon)^{\pm1}\e^{i\theta}\big)\varphi_0.\label{eq_limits}
\end{align}
\item[(b)] The stationary wave operators
$$
w_\pm(U_0,U_0,J^*J)
:=\wlim_{\varepsilon\searrow0}w_\pm(U_0,U_0,J^*J,\varepsilon)P_{\rm ac}(U_0)
$$
exist and satisfy the chain rule
\begin{equation}\label{eq_product}
w_\pm(U,U_0,J)^*\;\!w_\pm(U,U_0,J)=w_\pm(U_0,U_0,J^*J).
\end{equation}
\end{enumerate}
\end{Corollary}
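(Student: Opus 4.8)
The plan is to reduce both statements to a spectral‑picture form of the representation formula \eqref{eq_rep_stat}. Write $w_\pm:=w_\pm(U,U_0,J)$ and $z_\varepsilon:=(1-\varepsilon)^{\pm1}\e^{i\theta}$, let $F$ be the spectral transformation of $U$ (the analogue of $F_0$), and let $Z(\theta,G)\in\B\big(\G,\h(\theta)\big)$ and $\Delta(\theta,G):=\wlim_{\varepsilon\searrow0}G\delta(1-\varepsilon,\theta)G^*$ be the analogues for $U$ of the operators of Lemma \ref{lemma_F_0}(b), so that (for a.e.\ $\theta$) $Z(\theta,G)^*Z(\theta,G)=\Delta(\theta,G)$, $\wlim_{\varepsilon\searrow0}G\delta(1-\varepsilon,\theta)\psi=Z(\theta,G)^*(F\psi)(\theta)$ and $(FU^{\pm1}\psi)(\theta)=\e^{\pm i\theta}(F\psi)(\theta)$. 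For $\varphi_0\in\D_0$ set $\gamma_0(\varphi_0,\theta):=\slim_{\varepsilon\searrow0}G_0U_0^*R_0(z_\varepsilon)\varphi_0\in\G$, which exists by \eqref{eq_free_lim}. The first step is to prove, for $\varphi_0\in\D_0$,
\[
(Fw_\pm\varphi_0)(\theta)=(FJ\varphi_0)(\theta)-Z(\theta,G)\,\gamma_0(\varphi_0,\theta),\qquad\text{a.e.\ }\theta\in[0,2\pi),
\]
which we call $(\star)$. It is obtained by plugging the two‑term identity \eqref{eq_two_terms} into \eqref{del_a_pm} and letting $\varepsilon\searrow0$ term by term: the first term tends to $\langle(FJ\varphi_0)(\theta),(F\varphi)(\theta)\rangle_{\h(\theta)}$ by \eqref{eq_derivative_bis}, while in the second term one combines the strong convergence of $G_0U_0^*R_0(z_\varepsilon)\varphi_0$ with the weak convergence of $GU\delta(1-\varepsilon,\theta)\varphi=G\delta(1-\varepsilon,\theta)U\varphi$ (Lemma \ref{lemma_weak}(b), as $G$ is weakly $U$‑smooth); this yields $a_\pm(\varphi_0,\varphi,\theta)=\langle(FJ\varphi_0)(\theta)-Z(\theta,G)\gamma_0(\varphi_0,\theta),(F\varphi)(\theta)\rangle_{\h(\theta)}$ for every $\varphi\in\H$, whence $(\star)$ follows from \eqref{eq_rep_stat} since $\Ran(w_\pm)\subset\H_{\rm ac}(U)$ (see \eqref{eq_inclusions}) and $\varphi$ is arbitrary. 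The same computation gives $a_\pm(\varphi_0,\varphi,\theta)=\langle(Fw_\pm\varphi_0)(\theta),(F\varphi)(\theta)\rangle_{\h(\theta)}$ for all $\varphi\in\H$, which will be used below.

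For part (a), starting from the identity $\pm g_\pm(\varepsilon)R(z_\varepsilon)^*JR_0(z_\varepsilon)=\delta(1-\varepsilon,\theta)J-z_\varepsilon\big(GU\delta(1-\varepsilon,\theta)\big)^*G_0U_0^*R_0(z_\varepsilon)$ from the proof of Theorem \ref{thm_stat_wave}, multiply on the left by $GU$ and simplify using $R(z)^*U=UR(z)^*$, $\delta(1-\varepsilon,\theta)U=U\delta(1-\varepsilon,\theta)$ and $UU^*=1$ to obtain
\begin{align*}
\pm g_\pm(\varepsilon)\,GR(z_\varepsilon)^*UJ\,R_0(z_\varepsilon)\varphi_0
&=G\delta(1-\varepsilon,\theta)(UJ\varphi_0)\\
&\quad-z_\varepsilon\,G\delta(1-\varepsilon,\theta)G^*\big(G_0U_0^*R_0(z_\varepsilon)\varphi_0\big).
\end{align*}
As $\varepsilon\searrow0$ the first summand converges weakly by Lemma \ref{lemma_weak}(b); for the second, the uniform bound $\|G\delta(1-\varepsilon,\theta)G^*\|_{\B(\G)}\le c(\theta)$ of Lemma \ref{lemma_weak}(a) permits replacing $G_0U_0^*R_0(z_\varepsilon)\varphi_0$ by $\gamma_0(\varphi_0,\theta)$, giving the weak limit $\e^{i\theta}\Delta(\theta,G)\gamma_0(\varphi_0,\theta)$; hence the right‑hand side of \eqref{eq_limits} exists and equals $\wlim_{\varepsilon\searrow0}G\delta(1-\varepsilon,\theta)(UJ\varphi_0)-\e^{i\theta}\Delta(\theta,G)\gamma_0(\varphi_0,\theta)$. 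By \eqref{eq_inter} the left‑hand side of \eqref{eq_limits} equals $\wlim_{\varepsilon\searrow0}G\delta(1-\varepsilon,\theta)(Uw_\pm\varphi_0)$, which exists (Lemma \ref{lemma_weak}(b)). Both sides are then seen to equal $\e^{i\theta}Z(\theta,G)^*(Fw_\pm\varphi_0)(\theta)$: for the left side use $\wlim_{\varepsilon\searrow0}G\delta(1-\varepsilon,\theta)\psi=Z(\theta,G)^*(F\psi)(\theta)$ and $(FU\psi)(\theta)=\e^{i\theta}(F\psi)(\theta)$; for the right side use in addition $(FUJ\varphi_0)(\theta)=\e^{i\theta}(FJ\varphi_0)(\theta)$, $\Delta(\theta,G)=Z(\theta,G)^*Z(\theta,G)$, and $(\star)$. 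This proves (a).

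For part (b), take $\varphi=w_\pm\psi_0\in\H$ in \eqref{eq_rep_stat} to get $\langle w_\pm\varphi_0,w_\pm\psi_0\rangle_\H=\int_0^{2\pi}\d\theta\,a_\pm(\varphi_0,w_\pm\psi_0,\theta)$, and identify the integrand, for a.e.\ $\theta$, with
\begin{align*}
b_\pm(\varphi_0,\psi_0,\theta)
&:=\pm\lim_{\varepsilon\searrow0}g_\pm(\varepsilon)\big\langle J^*JR_0(z_\varepsilon)\varphi_0,R_0(z_\varepsilon)\psi_0\big\rangle_{\H_0}\\
&=\pm\lim_{\varepsilon\searrow0}g_\pm(\varepsilon)\big\langle JR_0(z_\varepsilon)\varphi_0,JR_0(z_\varepsilon)\psi_0\big\rangle_{\H}.
\end{align*}
Using \eqref{eq_factors}, write $JR_0(z_\varepsilon)\psi_0=R(z_\varepsilon)\chi_\varepsilon$ with $\chi_\varepsilon:=J\psi_0-z_\varepsilon U^*G^*\big(G_0U_0^*R_0(z_\varepsilon)\psi_0\big)$; then $\chi_\varepsilon\to\chi:=J\psi_0-\e^{i\theta}U^*G^*\gamma_0(\psi_0,\theta)$ strongly, and one splits $\chi_\varepsilon-w_\pm\psi_0=(\chi_\varepsilon-\chi)+(\chi-w_\pm\psi_0)$. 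The $(\chi_\varepsilon-\chi)$‑part contributes $\lim_{\varepsilon\searrow0}\big\langle\pm g_\pm(\varepsilon)GR(z_\varepsilon)^*UJ\,R_0(z_\varepsilon)\varphi_0,\omega_\varepsilon\big\rangle_\G$ with $\omega_\varepsilon:=\e^{i\theta}\gamma_0(\psi_0,\theta)-z_\varepsilon\,G_0U_0^*R_0(z_\varepsilon)\psi_0\to0$ strongly, which vanishes because $\pm g_\pm(\varepsilon)GR(z_\varepsilon)^*UJ\,R_0(z_\varepsilon)\varphi_0$ is bounded uniformly in $\varepsilon$ by the displayed identity of part (a); the $(\chi-w_\pm\psi_0)$‑part is the genuine limit $a_\pm(\varphi_0,\chi-w_\pm\psi_0,\theta)$, which vanishes because $(FU^*\psi)(\theta)=\e^{-i\theta}(F\psi)(\theta)$, $(FG^*\zeta)(\theta)=Z(\theta,G)\zeta$ and $(\star)$ for $\psi_0$ give $(F\chi)(\theta)=(FJ\psi_0)(\theta)-Z(\theta,G)\gamma_0(\psi_0,\theta)=(Fw_\pm\psi_0)(\theta)$, while $a_\pm(\varphi_0,\cdot\,,\theta)$ pairs $(F\cdot)(\theta)$ against $(Fw_\pm\varphi_0)(\theta)$. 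Hence $b_\pm(\varphi_0,\psi_0,\theta)=a_\pm(\varphi_0,w_\pm\psi_0,\theta)$ exists for a.e.\ $\theta$, so Corollary \ref{cor_stat_wo} together with Lemma \ref{lemma_limits}(c) applied to the triple $(U_0,U_0,J^*J)$ yields the existence of $w_\pm(U_0,U_0,J^*J)$ and $\langle w_\pm(U_0,U_0,J^*J)\varphi_0,\psi_0\rangle_{\H_0}=\int_0^{2\pi}\d\theta\,b_\pm(\varphi_0,\psi_0,\theta)=\langle w_\pm\varphi_0,w_\pm\psi_0\rangle_\H$ for $\varphi_0,\psi_0\in\D_0$; since $\D_0$ is dense and both members are bounded, \eqref{eq_product} follows. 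The main obstacle throughout is the exchange of $\lim_{\varepsilon\searrow0}$ with inner products whose second factor ($G_0U_0^*R_0(z_\varepsilon)\varphi_0$, $\chi_\varepsilon$ or $\omega_\varepsilon$) still depends on $\varepsilon$; this is controlled by the uniform smoothness bounds of Lemma \ref{lemma_weak}(a)--(b) for factors of the form $G\delta(1-\varepsilon,\theta)(\,\cdot\,)$ and by the uniform boundedness of $\pm g_\pm(\varepsilon)GR(z_\varepsilon)^*UJ\,R_0(z_\varepsilon)\varphi_0$ from part (a), the remaining work being bookkeeping with the spectral transformation $F$ of $U$.
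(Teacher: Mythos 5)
Your proposal reaches both conclusions and is essentially correct, but it is organised differently from the paper's proof, so let me compare. For part (a) the paper compares two expressions for $\big\langle E^U(\Theta)w_\pm(U,U_0,J)\varphi_0,\varphi\big\rangle_\H$ (spectral calculus on one side, the representation formula \eqref{eq_rep_stat} together with Lemma \ref{lemma_limits} on the other), obtains the scalar identities \eqref{eq_limit_v1}--\eqref{eq_limit_v2}, and then upgrades the right-hand side of \eqref{eq_limits} to a vector-valued weak limit via the uniform bound \eqref{eq_the_bound} and \cite[Lemma~1.31]{Kat95}. You instead multiply the two-term identity from the proof of Theorem \ref{thm_stat_wave} by $GU$ and get the existence of that limit directly (uniform bound on $G\delta(1-\varepsilon,\theta)G^*$ from Lemma \ref{lemma_weak}(a) plus strong convergence of $G_0U_0^*R_0(z_\varepsilon)\varphi_0$), identifying both sides through the spectral representation of $U$ and your formula $(\star)$ for $(Fw_\pm\varphi_0)(\theta)$; this bypasses Kato's lemma, which is a nice simplification. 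Note only that deducing $(\star)$ from \eqref{eq_rep_stat} ``since $\varphi$ is arbitrary'' implicitly uses the localisation $\varphi\mapsto E^U(\Theta)\varphi$ together with a countable total family in $\h(\theta)$ — the same ``since $\Theta$ is arbitrary'' step as in the paper. For part (b) the paper recombines the two limiting terms, via part (a) and \eqref{eq_factors_bis}, back into the $J^*J$-expression, whereas you write $JR_0(z_\varepsilon)\psi_0=R(z_\varepsilon)\chi_\varepsilon$ and compare $\chi_\varepsilon$ with $w_\pm(U,U_0,J)\psi_0$; both roads end with Corollary \ref{cor_stat_wo} for the triple $(U_0,U_0,J^*J)$.

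The one step you must tighten is the $(\chi-w_\pm\psi_0)$-part in (b). The vector $\chi=J\psi_0-\e^{i\theta}U^*G^*\gamma_0(\psi_0,\theta)$ depends on $\theta$, while the facts you invoke — existence of $a_\pm(\varphi_0,\varphi,\theta)$ for a.e.\ $\theta$, its identification with $\langle(Fw_\pm\varphi_0)(\theta),(F\varphi)(\theta)\rangle_{\h(\theta)}$, and $(FU^*G^*\zeta)(\theta)=\e^{-i\theta}Z(\theta,G)\zeta$ — hold for each \emph{fixed} $\varphi$ (resp.\ $\zeta$) outside a null set depending on that vector; they cannot simply be evaluated at $\varphi=\chi(\theta)-w_\pm\psi_0$. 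The repair is short and uses only tools you already have: split $\chi(\theta)-w_\pm\psi_0=(J\psi_0-w_\pm\psi_0)-\e^{i\theta}U^*G^*\gamma_0(\psi_0,\theta)$. The first piece is $\theta$-independent, so your pointwise identification applies and yields $\langle(Fw_\pm\varphi_0)(\theta),(F(J\psi_0-w_\pm\psi_0))(\theta)\rangle_{\h(\theta)}$. For the second piece, move $U^*G^*$ to the other side to get $-\e^{-i\theta}\big\langle\pm g_\pm(\varepsilon)GR(z_\varepsilon)^*UJR_0(z_\varepsilon)\varphi_0,\gamma_0(\psi_0,\theta)\big\rangle_\G$, and use part (a): the weak limit of $\pm g_\pm(\varepsilon)GR(z_\varepsilon)^*UJR_0(z_\varepsilon)\varphi_0$ exists in $\G$ for a.e.\ $\theta$ and equals $\e^{i\theta}Z(\theta,G)^*(Fw_\pm\varphi_0)(\theta)$; pairing with the (at each $\theta$) fixed vector $\gamma_0(\psi_0,\theta)$ gives $-\langle(Fw_\pm\varphi_0)(\theta),Z(\theta,G)\gamma_0(\psi_0,\theta)\rangle_{\h(\theta)}$. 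Summing, the contribution is $\langle(Fw_\pm\varphi_0)(\theta),(FJ\psi_0)(\theta)-Z(\theta,G)\gamma_0(\psi_0,\theta)-(Fw_\pm\psi_0)(\theta)\rangle_{\h(\theta)}=0$ by $(\star)$ applied to $\psi_0$, which is exactly the vanishing you claimed. With this modification your proof of (b), and hence the whole proposal, is correct.
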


\begin{proof}
(a) Let $\Theta\subset[0,2\pi)$ be a Borel set and $\varphi\in\H$. Since
$w_\pm(U,U_0,J)\varphi_0\in\H_{\rm ac}(U)$, it follows from \eqref{eq_derivative} that
\begin{align*}
\big\langle E^U(\Theta)w_\pm(U,U_0,J)\varphi_0,\varphi\big\rangle_\H
&=\int_\Theta\d\theta\,\tfrac\d{\d\theta}
\big\langle E^U\big((0,\theta]\big)w_\pm(U,U_0,J)\varphi_0,\varphi\big\rangle_\H\\
&=\int_\Theta\d\theta\,\lim_{\varepsilon\searrow0}
\big\langle\delta(1-\varepsilon,\theta)w_\pm(U,U_0,J)\varphi_0,\varphi\big\rangle_\H.
\end{align*}
On the other hand, we know from \eqref{eq_intersection} and \eqref{eq_rep_stat} that
$$
\big\langle w_\pm(U,U_0,J)\varphi_0,E^U(\Theta)\varphi\big\rangle_\H
=\pm\int_\Theta\d\theta\,\lim_{\varepsilon\searrow0}g_\pm(\varepsilon)\;\!
\big\langle R\big((1-\varepsilon)^{\pm1}\e^{i\theta}\big)^*J
R_0\big((1-\varepsilon)^{\pm1}\e^{i\theta}\big)\varphi_0,\varphi\big\rangle_\H.
$$
Since $\Theta$ is arbitrary, we thus obtain by comparison that
\begin{align}
&\lim_{\varepsilon\searrow0}\big\langle \delta(1-\varepsilon,\theta)w_\pm(U,U_0,J)
\varphi_0,\varphi\big\rangle_\H\nonumber\\
&=\pm\lim_{\varepsilon\searrow0}
\big\langle g_\pm(\varepsilon)R\big((1-\varepsilon)^{\pm1}\e^{i\theta}\big)^*J
R_0\big((1-\varepsilon)^{\pm1}\e^{i\theta}\big)\varphi_0,\varphi\big\rangle_\H,
\quad\hbox{a.e. $\theta\in[0,2\pi)$.}\label{eq_limit_v1}
\end{align}
Setting $\varphi=U^*G^*\zeta$ with $\zeta\in\G$ and using \eqref{eq_inter}, we get
\begin{align}
&\lim_{\varepsilon\searrow0}\big\langle G\delta(1-\varepsilon,\theta)w_\pm(U,U_0,J)
U_0\varphi_0,\zeta\big\rangle_\G\nonumber\\
&=\pm\lim_{\varepsilon\searrow0}
\big\langle g_\pm(\varepsilon)GR\big((1-\varepsilon)^{\pm1}\e^{i\theta}\big)^*UJ
R_0\big((1-\varepsilon)^{\pm1}\e^{i\theta}\big)\varphi_0,\zeta\big\rangle_\G,
\quad\hbox{a.e. $\theta\in[0,2\pi)$.}\label{eq_limit_v2}
\end{align}
In consequence, it only remains to show the existence of the limits in
\eqref{eq_limits}.

The existence of the first limit in \eqref{eq_limits} follows from Lemma
\ref{lemma_weak}(b) for the pair $(U,G)$ and $\Theta=[0,2\pi)$. For the second limit,
we know that the limit \eqref{eq_limit_v2} exists for a.e. $\theta\in[0,2\pi)$, and
the corresponding set of full measure may be assumed to be independent of the choice
of $\zeta$ in the set of linear combinations of some basis in $\G$. Therefore, to
prove the existence the second limit in \eqref{eq_limits} it is sufficient to show
(see \cite[Lemma~1.31]{Kat95}) the existence of a function
$c_{\varphi_0}:[0,2\pi)\to[0,\infty)$ such that
\begin{equation}\label{eq_the_bound}
|g_\pm(\varepsilon)|\;\!\big\|GR\big((1-\varepsilon)^{\pm1}\e^{i\theta}\big)^*UJ
R_0\big((1-\varepsilon)^{\pm1}\e^{i\theta}\big)\varphi_0\big\|_\G
\le c_{\varphi_0}(\theta),\quad\hbox{$\varepsilon\in(0,1)$, a.e. $\theta\in[0,2\pi)$.}
\end{equation}
But the l.h.s. of \eqref{eq_the_bound} satisfies the bound
$$
\|UJ\|_{\B(\H_0,\H)}\left(|g_\pm(\varepsilon)|^{1/2}\;\!
\big\|GR\big((1-\varepsilon)^{\pm1}\e^{i\theta}\big)^*\big\|_{\B(\H,\G)}\right)
\left(|g_\pm(\varepsilon)|^{1/2}\;\!
\big\|R_0\big((1-\varepsilon)^{\pm1}\e^{i\theta}\big)\varphi_0\big\|_{\H_0}\right).
$$
So, \eqref{eq_the_bound} follows from \eqref{eq_bound_res} and an application of Lemma
\ref{lemma_weak}(a) for the pair $(U,G)$ and $\Theta=[0,2\pi)$.

(b) Setting $\varphi=w_\pm(U,U_0,J)\psi_0$ with $\psi_0\in\D_0$ in
\eqref{eq_rep_stat}, and using \eqref{eq_norm}, \eqref{eq_factors_bis} and
\eqref{eq_inter}, we obtain
\begin{align*}
&\big\langle w_\pm(U,U_0,J)\varphi_0,w_\pm(U,U_0,J)\psi_0\big\rangle_\H\\
&=\int_0^{2\pi}\d\theta\,a_\pm\big(\varphi_0,w_\pm(U,U_0,J)\psi_0,\theta\big)\\
&=\pm\int_0^{2\pi}\d\theta\,\lim_{\varepsilon\searrow0}g_\pm(\varepsilon)\;\!
\left\langle JR_0\big((1-\varepsilon)^{\pm1}\e^{i\theta}\big)\varphi_0,
R\big((1-\varepsilon)^{\pm1}\e^{i\theta}\big)w_\pm(U,U_0,J)\psi_0\right\rangle_\H\\
&=\int_0^{2\pi}\d\theta\,\lim_{\varepsilon\searrow0}
\big\langle J\varphi_0,\delta(1-\varepsilon,\theta)w_\pm(U,U_0,J)\psi_0\big\rangle_\H\\
&\quad-\int_0^{2\pi}\d\theta\,\lim_{\varepsilon\searrow0}
\big\langle(1-\varepsilon)^{\pm1}\e^{i\theta}
G_0U_0^*R_0\big((1-\varepsilon)^{\pm1}\e^{i\theta}\big)\varphi_0,
G\delta(1-\varepsilon,\theta)w_\pm(U,U_0,J)U_0\psi_0\big\rangle_\G.
\end{align*}
Now, using \eqref{eq_limit_v1} for the first term, using \eqref{eq_free_lim} and
\eqref{eq_limit_v2} for the second term, and then using \eqref{eq_factors_bis}, we
obtain that
\begin{align*}
&\big\langle w_\pm(U,U_0,J)\varphi_0,w_\pm(U,U_0,J)\psi_0\big\rangle_\H\\
&=\pm\int_0^{2\pi}\d\theta\,\lim_{\varepsilon\searrow0}g_\pm(\varepsilon)\;\!
\big\langle J\varphi_0,R\big((1-\varepsilon)^{\pm1}\e^{i\theta}\big)^*J
R_0\big((1-\varepsilon)^{\pm1}\e^{i\theta}\big)\psi_0\big\rangle_\H\\
&\quad\mp\int_0^{2\pi}\d\theta\,\lim_{\varepsilon\searrow0}g_\pm(\varepsilon)\;\!
\big\langle(1-\varepsilon)^{\pm1}\e^{i\theta}
G_0U_0^*R_0\big((1-\varepsilon)^{\pm1}\e^{i\theta}\big)\varphi_0,\\
&\quad\quad GR\big((1-\varepsilon)^{\pm1}\e^{i\theta}\big)^*UJ
R_0\big((1-\varepsilon)^{\pm1}\e^{i\theta}\big)\psi_0\big\rangle_\G\\
&=\pm\int_0^{2\pi}\d\theta\,\lim_{\varepsilon\searrow0}g_\pm(\varepsilon)\;\!
\big\langle J^*JR_0\big((1-\varepsilon)^{\pm1}\e^{i\theta}\big)\varphi_0,
R_0\big((1-\varepsilon)^{\pm1}\e^{i\theta}\big)\psi_0\big\rangle_{\H_0}.
\end{align*}
This, together with an application of Corollary \ref{cor_stat_wo} for the triple
$(U_0,U_0,J^*J)$, implies the claims.
\end{proof}

In our next theorem, we give explicit conditions guaranteeing the existence of the
strong wave operators
$$
W_\pm(U,U_0,J):=\slim_{n\to\pm\infty}U^nJU_0^{-n}P_{\rm ac}(U_0)\in\B(\H_0,\H),
$$
as well as their identity with the stationary wave operators $w_\pm(U,U_0,J)$. But
first, we need a last preparatory lemma\;\!:

\begin{Lemma}\label{lemma_strong_stat}
\begin{enumerate}
\item[(a)] The strong wave operators $W_\pm(U,U_0,J)$ exist if and only if the weak
wave operators
$$
\widetilde W_\pm(U,U_0,J)
:=\wlim_{n\to\pm\infty}P_{\rm ac}(U)U^nJU_0^{-n}P_{\rm ac}(U_0)
$$
and
$$
\widetilde W_\pm(U_0,U_0,J^*J)
:=\wlim_{n\to\pm\infty}P_{\rm ac}(U_0)U_0^nJ^*JU_0^{-n}P_{\rm ac}(U_0)
$$
exist and satisfy the chain rule
$$
\widetilde W_\pm(U,U_0,J)^*\;\!\widetilde W_\pm(U,U_0,J)
=\widetilde W_\pm(U_0,U_0,J^*J),
$$
in which case we have the identity $W_\pm(U,U_0,J)=\widetilde W_\pm(U,U_0,J)$.

\item[(b)] Let $\D_0\subset\H_0$ and $\D\subset\H$ be dense sets, and assume that for
each $\varphi_0\in\D_0$ and $\varphi\in\D$ the weak limits
$\wlim_{\varepsilon\searrow0}G_0\delta_0(1-\varepsilon,\theta)\varphi_0$ and
$\wlim_{\varepsilon\searrow0}G\delta(1-\varepsilon,\theta)\varphi$ exist for a.e.
$\theta\in[0,2\pi)$. Then, the weak wave operators $\widetilde W_\pm(U,U_0,J)$ exist.

\item[(c)] Suppose that the assumptions of Theorem \ref{thm_stat_wave} are satisfied
for a dense set $\D_0\subset\H_0$ and that the weak wave operators
$\widetilde W_\pm(U_0,U_0,J^*J)$ exist. Then, the strong wave operators
$W_\pm(U,U_0,J)$ exist and coincide with the stationary wave operators
$w_\pm(U,U_0,J)$.
\end{enumerate}
\end{Lemma}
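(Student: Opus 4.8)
The plan is to prove the three parts in the order (a), (b), (c), with (a) and (b) serving as inputs for (c).

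Part (a) is the unitary analogue of the Kato--Birman reduction (cf.\ \cite{Yaf92}). For the ``only if'' direction, assume $W_\pm(U,U_0,J)$ exists. Passing to the limit in $U\,\big(U^nJU_0^{-n}P_{\rm ac}(U_0)\big)=U^{n+1}JU_0^{-n-1}P_{\rm ac}(U_0)U_0$ gives the intertwining relation $UW_\pm(U,U_0,J)=W_\pm(U,U_0,J)U_0$, hence $W_\pm(U,U_0,J)E^{U_0}(\Theta)=E^U(\Theta)W_\pm(U,U_0,J)$ for every Borel set $\Theta$; combining this with the absolute continuity of $\big\langle E^{U_0}(\;\!\cdot\;\!)P_{\rm ac}(U_0)\varphi_0,\psi_0\big\rangle_{\H_0}$ and a Cauchy--Schwarz estimate for $E^{U_0}$ yields $E^U(\Theta)W_\pm(U,U_0,J)=0$ whenever $|\Theta|=0$, i.e.\ $\Ran\big(W_\pm(U,U_0,J)\big)\subset\H_{\rm ac}(U)$. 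Therefore $\widetilde W_\pm(U,U_0,J)=P_{\rm ac}(U)W_\pm(U,U_0,J)=W_\pm(U,U_0,J)$, and since $\big\langle P_{\rm ac}(U_0)U_0^nJ^*JU_0^{-n}P_{\rm ac}(U_0)\varphi_0,\psi_0\big\rangle_{\H_0}=\big\langle U^nJU_0^{-n}P_{\rm ac}(U_0)\varphi_0,U^nJU_0^{-n}P_{\rm ac}(U_0)\psi_0\big\rangle_\H$ converges as $n\to\pm\infty$ to $\big\langle W_\pm(U,U_0,J)^*W_\pm(U,U_0,J)\varphi_0,\psi_0\big\rangle_{\H_0}$, the weak wave operator $\widetilde W_\pm(U_0,U_0,J^*J)$ exists and equals $\widetilde W_\pm(U,U_0,J)^*\widetilde W_\pm(U,U_0,J)$. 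For the ``if'' direction, the same expansion shows $\big\|U^nJU_0^{-n}P_{\rm ac}(U_0)\varphi_0\big\|_\H\to\big\|\widetilde W_\pm(U,U_0,J)\varphi_0\big\|_\H$; writing $U^nJU_0^{-n}P_{\rm ac}(U_0)\varphi_0=x_n+y_n$ with $x_n=P_{\rm ac}(U)U^nJU_0^{-n}P_{\rm ac}(U_0)\varphi_0$ and $y_n\in\H_{\rm ac}(U)^\perp$, the facts $x_n\rightharpoonup\widetilde W_\pm(U,U_0,J)\varphi_0$, $\|x_n\|_\H^2+\|y_n\|_\H^2\to\|\widetilde W_\pm(U,U_0,J)\varphi_0\|_\H^2$ and weak lower semicontinuity of the norm force $\|y_n\|_\H\to0$ and $\|x_n\|_\H\to\|\widetilde W_\pm(U,U_0,J)\varphi_0\|_\H$, whence $x_n\to\widetilde W_\pm(U,U_0,J)\varphi_0$ strongly and thus $W_\pm(U,U_0,J)$ exists and equals $\widetilde W_\pm(U,U_0,J)$.

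Part (b) I would prove by the discrete Cook method. Applying Lemma \ref{lemma_weak}(c) to $G_0$ (resp.\ to $G$ in the $(U,\H)$ setting) produces a set $\D_0'\subset\H_{\rm ac}(U_0)$ dense in $\H_{\rm ac}(U_0)$ (resp.\ $\D'\subset\H_{\rm ac}(U)$ dense in $\H_{\rm ac}(U)$) with $\sum_{n\in\Z}\|G_0U_0^n\psi_0\|_\G^2<\infty$ for $\psi_0\in\D_0'$ (resp.\ $\sum_{n\in\Z}\|GU^n\psi\|_\G^2<\infty$ for $\psi\in\D'$). Using $V=G^*G_0$ and the telescoping identity $U^{n+1}JU_0^{-n-1}-U^nJU_0^{-n}=-U^nG^*G_0U_0^{-n-1}$, for $\psi_0\in\D_0'$ and $\psi\in\D'$ the increments $\big\langle\big(U^NJU_0^{-N}-U^MJU_0^{-M}\big)\psi_0,\psi\big\rangle_\H$ are bounded by $\sum_n\|G_0U_0^{-n-1}\psi_0\|_\G\,\|GU^{-n}\psi\|_\G$, which is finite by Cauchy--Schwarz; hence $\big\langle P_{\rm ac}(U)U^nJU_0^{-n}P_{\rm ac}(U_0)\psi_0,\psi\big\rangle_\H$ converges as $n\to\pm\infty$ (note $P_{\rm ac}(U_0)\psi_0=\psi_0$ and $P_{\rm ac}(U)\psi=\psi$). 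The uniform bound $\big\|P_{\rm ac}(U)U^nJU_0^{-n}P_{\rm ac}(U_0)\big\|_{\B(\H_0,\H)}\le\|J\|_{\B(\H_0,\H)}$ together with the density of $\D_0',\D'$ then gives the existence of $\widetilde W_\pm(U,U_0,J)$.

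For part (c) the plan is: (i) check that the hypotheses of Theorem \ref{thm_stat_wave} imply those of part (b); (ii) identify the weak wave operators with the stationary ones via Abel summation; (iii) invoke part (a). For (i): by Remark \ref{rem_ass_equiv}(a) the assumption \eqref{eq_free_lim} is equivalent to \eqref{eq_free_lim_bis}, and since $2\pi\;\!\delta_0(1-\varepsilon,\theta)=R_0\big((1-\varepsilon)\e^{i\theta}\big)-R_0\big((1-\varepsilon)^{-1}\e^{i\theta}\big)$ this gives the existence of $\slim_{\varepsilon\searrow0}G_0\delta_0(1-\varepsilon,\theta)\varphi_0$ for $\varphi_0\in\D_0$; moreover Lemma \ref{lemma_weak}(b) (with $T_0=G$ and $\Theta=[0,2\pi)$) turns the weak $U$-smoothness of $G$ into the existence of $\wlim_{\varepsilon\searrow0}G\delta(1-\varepsilon,\theta)\varphi$ for every $\varphi\in\H$, so part (b) applies and $\widetilde W_\pm(U,U_0,J)$ exists. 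For (ii): inserting the geometric series \eqref{eq_series} into the sesquilinear form defining $w_\pm(U,U_0,J,\varepsilon)$ and carrying out the (absolutely convergent, hence interchangeable) $\theta$-integral collapses it to an Abel mean: with $r=(1-\varepsilon)^2$ one finds $\big\langle w_+(U,U_0,J,\varepsilon)\varphi_0,\varphi\big\rangle_\H=(1-r)\sum_{n\ge0}r^n\big\langle U^nJU_0^{-n}\varphi_0,\varphi\big\rangle_\H$, and for the $-$ sign (using the exterior branch of \eqref{eq_series}) a shifted Abel mean of $\big\langle U^{-n}JU_0^{n}\varphi_0,\varphi\big\rangle_\H$. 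By Abel's theorem, for $\varphi_0$ in the set $\D_0'\subset\H_{\rm ac}(U_0)$ and $\varphi$ in the set $\D'\subset\H_{\rm ac}(U)$ of part (b) these tend as $\varepsilon\searrow0$ to $\lim_{n\to\pm\infty}\big\langle U^nJU_0^{-n}\varphi_0,\varphi\big\rangle_\H=\big\langle\widetilde W_\pm(U,U_0,J)\varphi_0,\varphi\big\rangle_\H$; since both $w_\pm(U,U_0,J)$ and $\widetilde W_\pm(U,U_0,J)$ vanish on $\H_{\rm s}(U_0)$ and have range in $\H_{\rm ac}(U)$ (cf.\ \eqref{eq_inclusions}), density and the uniform bound give $w_\pm(U,U_0,J)=\widetilde W_\pm(U,U_0,J)$; the identical computation for the triple $(U_0,U_0,J^*J)$, together with the assumed existence of $\widetilde W_\pm(U_0,U_0,J^*J)$, gives $w_\pm(U_0,U_0,J^*J)=\widetilde W_\pm(U_0,U_0,J^*J)$. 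For (iii): Corollary \ref{cor_id_stat}(b) now yields $\widetilde W_\pm(U,U_0,J)^*\widetilde W_\pm(U,U_0,J)=w_\pm(U,U_0,J)^*w_\pm(U,U_0,J)=w_\pm(U_0,U_0,J^*J)=\widetilde W_\pm(U_0,U_0,J^*J)$, so part (a) shows that $W_\pm(U,U_0,J)$ exists and equals $\widetilde W_\pm(U,U_0,J)=w_\pm(U,U_0,J)$.

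The main obstacle I expect is step (ii) of (c): justifying carefully the interchange of the series \eqref{eq_series} with the $\theta$-integral (routine from absolute convergence for $|z|\ne1$, but it must be written out), handling the $-$ sign through the exterior branch of the resolvent and the resulting index shift, and verifying that the Abel limit genuinely coincides with the time-dependent limit on vectors dense enough to conclude $w_\pm(U,U_0,J)=\widetilde W_\pm(U,U_0,J)$ --- all while keeping careful track of the various insertions of $P_{\rm ac}(U_0)$ and $P_{\rm ac}(U)$.
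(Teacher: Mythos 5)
Your proposal is correct and follows essentially the same route as the paper: part (a) is the standard Kato--Birman/Yafaev argument (which the paper simply cites from \cite[Thm.~2.2.1]{Yaf92}), part (b) is the same Cook-type telescoping argument based on Lemma \ref{lemma_weak}(c), and part (c) identifies the stationary with the weak wave operators through the same Abel-mean computation before invoking Corollary \ref{cor_id_stat}(b) and part (a). The only differences are cosmetic: you write out (a) in full and are somewhat more explicit than the paper about the interchange of series and integral and about the density argument on $\D_0'\times\D'$ needed to pass from equality of forms to equality of operators.
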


\begin{proof}
(a) The claim can be proved as in \cite[Thm.~2.2.1]{Yaf92}.

(b) Set $W(n):=P_{\rm ac}(U)U^nJU_0^{-n}P_{\rm ac}(U_0)$ for each $n\in\N$, and let
$\D_0'\subset\H_{\rm ac}(U_0)$ (resp. $\D'\subset\H_{\rm ac}(U)$) be the set dense in
$\H_{\rm ac}(U_0)$ (resp. $\H_{\rm ac}(U)$) given by Lemma \ref{lemma_weak}(c). Since
$\|W(n)\|_{\B(\H_0,\H)}\le\|J\|_{\B(\H_0,\H)}$ for all $n\in\Z$, it is sufficient to
show that
\begin{equation}\label{eq_needed_lim}
\lim_{n\to\pm\infty}\big\langle W(n)\psi_0,\psi\big\rangle_\H
=\lim_{n\to\pm\infty}\big\langle U^nJU_0^{-n}\psi_0,\psi\big\rangle_\H
\end{equation}
exist for all $\psi_0\in\D_0'$ and $\psi\in\D'$. Now, for any $n_1,n_2\in\Z$ with
$n_1<n_2$, a telescoping summation gives
\begin{align*}
U^{n_2}JU_0^{-n_2}-U^{n_1}JU_0^{-n_1}
&=\sum_{n=n_1+1}^{n_2}\big(U^nJU_0^{-n}-U^{(n-1)}JU_0^{-(n-1)}\big)\\
&=\sum_{n=n_1+1}^{n_2}U^{(n-1)}(UJ-JU_0)U_0^{-n}\\
&=-\sum_{n=n_1+1}^{n_2}\big(GU^{-(n-1)}\big)^*G_0U_0^{-n}.
\end{align*}
Therefore, using Cauchy-Schwartz inequality we obtain the estimate
\begin{align*}
\Big|\big\langle W(n_2)\psi_0,\psi\big\rangle_\H
-\big\langle W(n_2)\psi_0,\psi\big\rangle_\H\Big|
&=\left|\left\langle\sum_{n=n_1+1}^{n_2}\big(GU^{-(n-1)}\big)^*G_0U_0^{-n}\psi_0,
\psi\right\rangle_\H\right|\\
&\le\left(\sum_{n=n_1+1}^{n_2}\big\|G_0U_0^{-n}\psi_0\big\|_\G^2\right)^{1/2}
\left(\sum_{n=n_1+1}^{n_2}\big\|GU^{-(n-1)}\psi\big\|_\G^2\right)^{1/2}.
\end{align*}
Then, we infer from Lemma \ref{lemma_weak}(c) that this expression tends to zero as
$n_1,n_2\to\infty$, which implies the existence of the limit \eqref{eq_needed_lim}.

(c) The stationary wave operators $w_\pm(U_0,U_0,J^*J)$ exist due to Corollary
\ref{cor_id_stat}(b). Furthermore, a calculation using \eqref{eq_series} and
\eqref{eq_inclusions} gives for $\varphi_0\in\D_0$ and $\psi\in\H_0$
\begin{align*}
&\big\langle w_+(U_0,U_0,J^*J)\varphi_0,\psi_0\big\rangle_{\H_0}\\
&=\tfrac1{2\pi}\lim_{\varepsilon\searrow0}\big(1-(1-\varepsilon)^2\big)
\sum_{n,m\ge0}(1-\varepsilon)^{n+m}\int_0^{2\pi}\d\theta\,\e^{i\theta(n-m)}
\big\langle P_{\rm ac}(U_0)U_0^mJ^*JU_0^{-n}P_{\rm ac}(U_0)\varphi_0,
\psi_0\big\rangle_{\H_0}\\
&=\lim_{\varepsilon\searrow0}\big(1-(1-\varepsilon)^2\big)
\sum_{n\ge0}(1-\varepsilon)^{2n}\;\!\big\langle P_{\rm ac}(U_0)U_0^nJ^*JU_0^{-n}
P_{\rm ac}(U_0)\varphi_0,\psi_0\big\rangle_{\H_0},
\end{align*}
and similarly
\begin{align*}
&\big\langle w_-(U_0,U_0,J^*J)\varphi_0,\psi_0\big\rangle_{\H_0}\\
&=\lim_{\varepsilon\searrow0}\big(1-(1-\varepsilon)^2\big)
\sum_{n\ge1}(1-\varepsilon)^{2(n-1)}\;\!\big\langle P_{\rm ac}(U_0)U_0^{-n}J^*JU_0^n
P_{\rm ac}(U_0)\varphi_0,\psi_0\big\rangle_{\H_0}.
\end{align*}
Therefore, by appying a general Tauberian theorem as in the in the self-adjoint case
(see \cite[p.76 \& 94]{Yaf92}), one gets that
\begin{align*}
\big\langle w_\pm(U_0,U_0,J^*J)\varphi_0,\psi_0\big\rangle_{\H_0}
&=\lim_{n\to\pm\infty}\big\langle P_{\rm ac}(U_0)U_0^nJ^*JU_0^{-n}
P_{\rm ac}(U_0)\varphi_0,\psi_0\big\rangle_{\H_0}\\
&=\big\langle\widetilde W_\pm(U_0,U_0,J^*J)\varphi_0,\psi_0\big\rangle_{\H_0},
\end{align*}
which shows that the wave operators $w_\pm(U_0,U_0,J^*J)$ and
$\widetilde W_\pm(U_0,U_0,J^*J)$ coincide.

Now, since $G$ is weakly $U$-smooth, it follows from Lemma \ref{lemma_weak}(b) for the
pair $(U,G)$ and $\Theta=[0,2\pi)$ that for each $\varphi\in\H$ the limit
$\wlim_{\varepsilon\searrow0}G\delta(1-\varepsilon,\theta)\varphi$ exists for a.e.
$\theta\in[0,2\pi)$. Furthermore, Remark \ref{rem_ass_equiv}(a) implies that for each
$\varphi_0\in\D_0$ the limit
$\wlim_{\varepsilon\searrow0}G_0\delta_0(1-\varepsilon,\theta)\varphi_0$ exists for
a.e. $\theta\in[0,2\pi)$. So, it follows from point (b) that the weak wave operators
$\widetilde W_\pm(U,U_0,J)$ exist, and we can show as above that they coincide with
the stationary wave operators $w_\pm(U,U_0,J)$ (which exist by Theorem
\ref{thm_stat_wave}). In particular, it follows from \eqref{eq_product} that
$$
\widetilde W_\pm(U,U_0,J)^*\;\!\widetilde W_\pm(U,U_0,J)
=\widetilde W_\pm(U_0,U_0,J^*J),
$$
and then the claim follows directly from point (a).
\end{proof}

We are finally in position to present the main theorem of this section on the
existence of the strong wave operators and their identity with the stationary wave
operators. We use the notation
$$
B(z):=GR(z)G^*\in\B(\G),\quad z\in\C\setminus\S^1.
$$

\begin{Theorem}[Strong wave operators]\label{thm_strong_wave}
Assume that \eqref{eq_free_lim} is satisfied for a dense set $\D_0\subset\H_0$ and
that
\begin{equation}\label{eq_B_limit}
B_\pm(\theta)
:=\wlim_{\varepsilon\searrow0}B\big((1-\varepsilon)^{\pm1}\e^{i\theta}\big)
\hbox{~exist for a.e. $\theta\in[0,2\pi)$.}
\end{equation}
Then, the strong wave operators $W_\pm(U,U_0,J)$ exist and coincide with the
stationary wave operators $w_\pm(U,U_0,J)$.
\end{Theorem}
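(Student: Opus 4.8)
The plan is to reduce the statement to Lemma \ref{lemma_strong_stat}(c), whose hypotheses are that (i) the assumptions of Theorem \ref{thm_stat_wave} hold for some dense set $\D_0\subset\H_0$ and (ii) the weak wave operators $\widetilde W_\pm(U_0,U_0,J^*J)$ exist. Since \eqref{eq_free_lim} is assumed, the only missing ingredient for (i) is the weak $U$-smoothness of $G$, and the crux of the proof is the observation that \eqref{eq_B_limit} is precisely this smoothness rewritten in terms of resolvents. Indeed, as for $U_0$ (cf. Section \ref{sec_res}), the first resolvent equation for $U$ specialised to $z_1=(1-\varepsilon)\e^{i\theta}$ and $z_2=(1-\varepsilon)^{-1}\e^{i\theta}$ gives $2\pi\;\!\delta(1-\varepsilon,\theta)=R\big((1-\varepsilon)\e^{i\theta}\big)-R\big((1-\varepsilon)^{-1}\e^{i\theta}\big)$, hence
\[
2\pi\;\!G\;\!\delta(1-\varepsilon,\theta)G^*
=B\big((1-\varepsilon)\e^{i\theta}\big)-B\big((1-\varepsilon)^{-1}\e^{i\theta}\big),
\quad\varepsilon\in(0,1),~\theta\in[0,2\pi).
\]
By \eqref{eq_B_limit}, the right-hand side has the weak limit $B_+(\theta)-B_-(\theta)$ as $\varepsilon\searrow0$ for a.e. $\theta\in[0,2\pi)$, so $\wlim_{\varepsilon\searrow0}G\;\!\delta(1-\varepsilon,\theta)G^*$ exists for a.e. $\theta$; that is, $G$ is weakly $U$-smooth in the sense of \eqref{def_weak}. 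Together with \eqref{eq_free_lim}, this yields (i), and Theorem \ref{thm_stat_wave} then already provides the existence of $w_\pm(U,U_0,J)$ and the representation formulas \eqref{eq_rep_stat}.

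For (ii) I would argue as in the first part of the proof of Lemma \ref{lemma_strong_stat}(c). Since the assumptions of Theorem \ref{thm_stat_wave} now hold, Corollary \ref{cor_id_stat}(b) yields the existence of the stationary wave operators $w_\pm(U_0,U_0,J^*J)$. Expanding these via the geometric series \eqref{eq_series} and using \eqref{eq_inclusions}, one identifies $w_\pm(U_0,U_0,J^*J)$ with the Abel means of the sequence $\big(P_{\rm ac}(U_0)U_0^nJ^*JU_0^{-n}P_{\rm ac}(U_0)\big)_n$, which is bounded by $\|J\|^2_{\B(\H_0,\H)}$; a Tauberian theorem (as in \cite[p.~76 \& 94]{Yaf92}) then upgrades Abel convergence to ordinary convergence, so that $\widetilde W_\pm(U_0,U_0,J^*J)$ exist (and equal $w_\pm(U_0,U_0,J^*J)$). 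With (i) and (ii) verified, Lemma \ref{lemma_strong_stat}(c) gives at once that $W_\pm(U,U_0,J)$ exist and coincide with $w_\pm(U,U_0,J)$.

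The only substantive step is the translation carried out in the first paragraph: recognising that the abstract hypothesis \eqref{eq_B_limit} on $B(z)=GR(z)G^*$ is nothing but the weak $U$-smoothness of $G$, through the difference-of-resolvents expression for $\delta(1-\varepsilon,\theta)$. I do not expect any analytic obstacle beyond this; the one point to keep in mind is that \eqref{eq_B_limit} delivers weak limits of the full operators $B\big((1-\varepsilon)^{\pm1}\e^{i\theta}\big)$, which is exactly the strength needed to conclude that $G$ is weakly $U$-smooth, so that no separate density or uniform-boundedness argument enters at that stage.
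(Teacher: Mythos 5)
Your first paragraph is correct and in fact supplies a detail the paper leaves implicit: the identity $R\big((1-\varepsilon)\e^{i\theta}\big)-R\big((1-\varepsilon)^{-1}\e^{i\theta}\big)=2\pi\;\!\delta(1-\varepsilon,\theta)$ does show that \eqref{eq_B_limit} yields the weak $U$-smoothness of $G$, so Theorem \ref{thm_stat_wave} applies and $w_\pm(U,U_0,J)$ exist. The gap is in your step (ii). Lemma \ref{lemma_strong_stat}(c) takes the existence of $\widetilde W_\pm(U_0,U_0,J^*J)$ as a \emph{hypothesis}, and the Abel-mean/Tauberian passage inside its proof is only used to identify the ordinary limit with the stationary one once both are known to exist; it cannot be run in the direction you propose. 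Abel convergence of a bounded sequence does not imply its ordinary convergence (take $b_n=(-1)^n$, or $b_n=1+(-1)^n\ge0$ if you want the positivity coming from $J^*J\ge0$): one needs a genuine Tauberian condition on the increments, and the bound $\|J\|^2_{\B(\H_0,\H)}$ supplies none. So your argument never actually establishes the existence of the weak wave operators $\widetilde W_\pm(U_0,U_0,J^*J)$, which is precisely the substantive content of Theorem \ref{thm_strong_wave} beyond Theorem \ref{thm_stat_wave}.

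What the paper does instead is to verify the hypotheses of the Cook-type criterion, Lemma \ref{lemma_strong_stat}(b), for the triple $(U_0,U_0,J^*J)$: since $(J^*J)U_0-U_0(J^*J)=J^*V-V^*J=(GJ)^*G_0-G_0^*(GJ)$, it suffices to prove that for each $\varphi_0\in\D_0$ the weak limits $\wlim_{\varepsilon\searrow0}G_0\;\!\delta_0(1-\varepsilon,\theta)\varphi_0$ and $\wlim_{\varepsilon\searrow0}GJ\;\!\delta_0(1-\varepsilon,\theta)\varphi_0$ exist for a.e. $\theta\in[0,2\pi)$. The first follows from \eqref{eq_free_lim} via Remark \ref{rem_ass_equiv}(a); the second is where \eqref{eq_B_limit} is really used: multiplying \eqref{eq_factors_bis} by $G$ gives $GJR_0(z)=GR(z)J+B({\bar z}^{-1})^*G_0U_0^*R_0(z)$, and combining this with \eqref{eq_bound_res}, Lemma \ref{lemma_weak}(a) for the pair $(U,G)$, and the assumptions \eqref{eq_free_lim} and \eqref{eq_B_limit} yields the uniform bound \eqref{eq_bound_GJ}, which in turn gives the existence of the limit (as in the proof of \cite[Lemma~5.1.2]{Yaf92}). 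Your proposal needs this (or an equivalent) argument for $GJ$ to close the gap; without it, the reduction to Lemma \ref{lemma_strong_stat}(c) is incomplete.
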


\begin{proof}
The assumption \eqref{eq_B_limit} implies that $G$ is weakly $U$-smooth. Therefore, by
Lemma \ref{lemma_strong_stat}(c), it is only necessary to show the existence of the
weak wave operators $\widetilde W_\pm(U_0,U_0,J^*J)$. We do this by checking the
assumptions of Lemma \ref{lemma_strong_stat}(b) for the triple $(U_0,U_0,J^*J)$. Since
\begin{align*}
(J^*J)U_0-U_0(J^*J)
=J^*V-V^*J
=(GJ)^*G_0-G_0^*(GJ),
\end{align*}
it is sufficient to verify for $\varphi_0\in\D_0$ that the weak limits
$$
\wlim_{\varepsilon\searrow0}G_0\delta_0(1-\varepsilon,\theta)\varphi_0
\quad\hbox{and}\quad
\wlim_{\varepsilon\searrow0}GJ\;\!\delta_0(1-\varepsilon,\theta)\varphi_0
$$
exist for a.e. $\theta\in[0,2\pi)$. The existence of the first limit follows from
Remark \ref{rem_ass_equiv}(a). To establish the existence of the second limit, it is
sufficient to show (see the proof of \cite[Lemma~5.1.2]{Yaf92}) that there exists a
function $c_{\varphi_0}:[0,2\pi)\to[0,\infty)$ such that
\begin{equation}\label{eq_bound_GJ}
\big\|GJ\;\!\delta_0(1-\varepsilon,\theta)\varphi_0\big\|_\G\le c_{\varphi_0}(\theta),
\quad\hbox{$\varepsilon\in(0,1]$, a.e. $\theta\in[0,2\pi)$.}
\end{equation}
Multiplying \eqref{eq_factors_bis} by $G$ on the left gives
\begin{equation}\label{eq_GJ}
GJR_0(z)=GR(z)J+B({\bar z}^{-1})^*G_0U_0^*R_0(z),\quad z\in\C\setminus\S^1.
\end{equation}
Setting $z=(1-\varepsilon)\e^{i\theta}$ and applying on the vector
$R_0\big((1-\varepsilon)\e^{i\theta}\big)^*\varphi_0$, we get
\begin{align*}
GJ\;\!\delta_0(1-\varepsilon,\theta)\varphi_0
&=g_+(\varepsilon)GR\big((1-\varepsilon)\e^{i\theta}\big)J
R_0\big((1-\varepsilon)\e^{i\theta}\big)^*\varphi_0\\
&\quad+B\big((1-\varepsilon)^{-1}\e^{i\theta}\big)^*
G_0U_0^*\delta_0(1-\varepsilon,\theta)\varphi_0,
\end{align*}
which implies that
\begin{align*}
&\big\|GJ\;\!\delta_0(1-\varepsilon,\theta)\varphi_0\big\|_\G\\
&\le\|J\|_{\B(\H_0,\H)}\left(g_+(\varepsilon)^{1/2}
\big\|GR\big((1-\varepsilon)\e^{i\theta}\big)\big\|_{\B(\H,\G)}\right)
\left(g_+(\varepsilon)^{1/2}
\big\|R_0\big((1-\varepsilon)\e^{i\theta}\big)\varphi_0\big\|_{\H_0}\right)\\
&\quad+\big\|B\big((1-\varepsilon)^{-1}\e^{i\theta}\big)\big\|_{\B(\G)}
\big\|G_0U_0^*\delta_0(1-\varepsilon,\theta)\varphi_0\big\|_\G.
\end{align*}
This inequality, together with \eqref{eq_bound_res}, Lemma \ref{lemma_weak}(a) for the
pair $(U,G)$ and $\Theta=[0,2\pi)$, and the assumptions \eqref{eq_free_lim} and
\eqref{eq_B_limit}, implies the bound \eqref{eq_bound_GJ} needed to conclude the
proof.
\end{proof}

\begin{Example}[Trace class perturbation]\label{ex_trace_class}
The assumptions of Theorem \ref{thm_strong_wave} are satisfied for the set $\D_0=\H_0$
when the perturbation $V$ is trace class, or equivalently when the operators $G_0$ and
$G$ are Hilbert-Schmidt. Indeed, if $G_0\in S_2(\H_0,\G)$, then \eqref{eq_free_lim} is
satisfied for the set $\D_0=\H_0$ due to Lemma \ref{lemma_HS}(c) and Remark
\ref{rem_ass_equiv}(a), and if $G\in S_2(\H,\G)$, then \eqref{eq_B_limit} follows from
Lemma \ref{lemma_HS}(b).
\end{Example}

%--------------------------------------------------------------------------------------
\section{Representation formulas for the scattering matrix}\label{section_matrix}
\setcounter{equation}{0}
%--------------------------------------------------------------------------------------

In this section, we define the scattering operator and the scattering matrix for the
triple $(U,U_0,J)$, and we derive representation formulas for the scattering matrix.

If the strong wave operators $W_\pm(U,U_0,J)$ exist, then the scattering operator for
the triple $(U,U_0,J)$ is defined as
$$
S(U,U_0,J):=W_+(U,U_0,J)^*\;\!W_-(U,U_0,J)\in\B(\H_0).
$$
As in the self-adjoint case \cite[Sec.~2.4]{Yaf92}, it is easily seen that the
operator $S(U,U_0,J)$ vanishes on $\H_{\rm sc}(U_0)$ and has range in
$\H_{\rm ac}(U_0)$. Furthermore, if the operators $W_\pm(U,U_0,J)$ are isometric on
$\H_{\rm ac}(U_0)$, then $S(U,U_0,J)$ is isometric on $\H_{\rm ac}(U_0)$ if and only
if $\Ran\big(W_-(U,U_0,J)\big)\subset\Ran\big(W_+(U,U_0,J)\big)$, and $S(U,U_0,J)$ is
unitary on $\H_{\rm ac}(U_0)$ if and only if
$\Ran\big(W_-(U,U_0,J)\big)=\Ran\big(W_+(U,U_0,J)\big)$.

Since $\H_{\rm ac}(U_0)$ is a reducing subspace for $S(U,U_0,J)$ and the restriction
$$
S^{\rm(ac)}(U,U_0,J):=S(U,U_0,J)\upharpoonright\H_{\rm ac}(U_0)
$$
commutes with $U_0^{\rm(ac)}$, the operator $S^{\rm(ac)}(U,U_0,J)$ decomposes in the
Hilbert space $\int_{\widehat\sigma_0}^\oplus\d\theta\,\h_0(\theta)$ (see
\eqref{def_F_0} and \cite[Thm.~7.2.3(b)]{BS87}). Namely, there exist for a.e.
$\theta\in\widehat\sigma_0$ operators $S(\theta)\in\B\big(\h_0(\theta)\big)$ (we do
not write their dependency on $U$, $U_0$ and $J$) such that
$$
F_0^{\rm(ac)}S^{\rm(ac)}(U,U_0,J)\big(F_0^{\rm(ac)}\big)^*
=\int_{\widehat\sigma_0}^\oplus\d\theta\,S(\theta).
$$
The family of operators $S(\theta)$ is called the scattering matrix for the triple
$(U,U_0,J)$, and the properties of the scattering operator can be reformulated in
terms of the scattering matrix. For instance, if the operators $W_\pm(U,U_0,J)$ are
isometric on $\H_{\rm ac}(U_0)$, then $S(\theta)$ is isometric on $\h_0(\theta)$ for
a.e. $\theta\in\widehat\sigma_0$ if and only if
$\Ran\big(W_-(U,U_0,J)\big)\subset\Ran\big(W_+(U,U_0,J)\big)$, and $S(\theta)$ is
unitary on $\h_0(\theta)$ for a.e. $\theta\in\widehat\sigma_0$ if and only if
$\Ran\big(W_-(U,U_0,J)\big)=\Ran\big(W_+(U,U_0,J)\big)$.

Similarly, if the stationary wave operators $w_\pm(U_0,U_0,J^*J)$ exist, then
$\H_{\rm ac}(U_0)$ is a reducing subspace for $w_\pm(U_0,U_0,J^*J)$, and the
restriction
$$
w_\pm^{\rm(ac)}(U_0,U_0,J^*J):=w_\pm(U_0,U_0,J^*J)\upharpoonright\H_{\rm ac}(U_0)
$$
commutes with $U_0^{\rm(ac)}$. Thus, there exist for a.e. $\theta\in\widehat\sigma_0$
operators $u_\pm(\theta)\in\B\big(\h_0(\theta)\big)$ (we do not write their dependency
on $U_0$ and $J^*J$) such that
\begin{equation}\label{def_u_pm}
F_0^{\rm(ac)}w_\pm^{\rm(ac)}(U_0,U_0,J^*J)\big(F_0^{\rm(ac)}\big)^*
=\int_{\widehat\sigma_0}^\oplus\d\theta\,u_\pm(\theta).
\end{equation}
If the assumptions of Theorem \ref{thm_stat_wave} are satisfied (so that the
stationary wave operators $w_\pm(U,U_0,J)$ exist) and $w_\pm(U,U_0,J)$ are isometric
on $\H_{\rm ac}(U_0)$, then $w_\pm^{\rm(ac)}(U_0,U_0,J^*J)=1_{\H_{\rm ac}(U_0)}$ due
to \eqref{eq_product}. Thus, $u_\pm(\theta)=1_{\h_0(\theta)}$ for a.e.
$\theta\in\widehat\sigma_0$. This occurs for instance in the one-Hilbert space case
$\H_0=\H$ and $J=1_{\H_0}$.

The following lemma is a first step in the derivation of the representation formulas
for the scattering matrix.

\begin{Lemma}\label{lemma_scatt}
Let $\Theta\subset[0,2\pi)$ be a Borel set, assume that \eqref{eq_free_lim} is
satisfied for a dense set $\D_0\subset\H_0$, suppose that $G$ is weakly $U$-smooth,
and let $\varphi_0,\psi_0\in\D_0$.
\begin{enumerate}
\item[(a)] We have the equalities
\begin{align}
&\big\langle E^U(\Theta)w_-(U,U_0,J)\varphi_0,
w_+(U,U_0,J)\psi_0\big\rangle_\H\nonumber\\
&=\big\langle E^{U_0}(\Theta)w_\pm(U_0,U_0,J^*J)\varphi_0,
\psi_0\big\rangle_{\H_0}\nonumber\\
&\quad\pm2\pi\int_\Theta\d\theta\,\lim_{\varepsilon\searrow0}\big\langle
T_\pm\big((1-\varepsilon)\e^{i\theta}\big)\delta_0(1-\varepsilon,\theta)\varphi_0,
\delta_0(1-\varepsilon,\theta)\psi_0\big\rangle_{\H_0},\label{eq_pm}
\end{align}
with
$$
T_+(z):=U_0^*J^*V-V^*R(z)V\in\B(\H_0)
\quad\hbox{and}\quad
T_-(z):=\big(T_+({\bar z}^{-1})\big)^*\in\B(\H_0),
\quad z\in\C\setminus\S^1.
$$

\item[(b)] If in addition \eqref{eq_B_limit} is satisfied, then we have the equalities
\begin{align}
&\big\langle\big(S(\theta)-u_\pm(\theta)\big)(F_0\varphi_0)(\theta),
(F_0\psi_0)(\theta)\big\rangle_{\h_0(\theta)}\label{eq_pm_bis}\\
&=\pm2\pi\lim_{\varepsilon\searrow0}\big\langle P_{\rm ac}(U_0)
T_\pm\big((1-\varepsilon)\e^{i\theta}\big)P_{\rm ac}(U_0)\;\!
\delta_0(1-\varepsilon,\theta)\varphi_0,\delta_0(1-\varepsilon,\theta)
\psi_0\big\rangle_{\H_0},
\quad\hbox{a.e. $\theta\in\widehat\sigma_0$.}\nonumber
\end{align}
\end{enumerate}
\end{Lemma}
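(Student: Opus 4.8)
The plan is to establish part~(a) first --- spelling out the $+$ sign, the $-$ sign following symmetrically upon exchanging the roles of $w_-\varphi_0$ and $w_+\psi_0$ and using $T_-(z)=(T_+(\bar z^{-1}))^*$ --- and then to deduce part~(b) from part~(a) by diagonalising in the direct integral $\int_{\widehat\sigma_0}^\oplus\d\theta\,\h_0(\theta)$. For part~(a) I would write $w_-=w_++(w_--w_+)$. The ``diagonal'' contribution $\langle E^U(\Theta)w_+\varphi_0,w_+\psi_0\rangle_\H$ is immediate from the intertwining relation~\eqref{eq_inter} and the identity~\eqref{eq_product} of Corollary~\ref{cor_id_stat}(b): it equals $\langle E^{U_0}(\Theta)w_+(U_0,U_0,J^*J)\varphi_0,\psi_0\rangle_{\H_0}$ (and the analogous computation with $w_+=w_-+(w_+-w_-)$ produces $w_-(U_0,U_0,J^*J)$ in the $-$ case).

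For the ``off-diagonal'' contribution I would return to equation~\eqref{eq_two_terms} from the proof of Theorem~\ref{thm_stat_wave}: its first summand $\langle\delta(1-\varepsilon,\theta)J\varphi_0,\varphi\rangle_\H$ is independent of the sign and cancels in $a_--a_+$, while combining the remaining summands with $zU_0^*R_0(z)=R_0(z)-1$ and $R_0(r\e^{i\theta})-R_0(r^{-1}\e^{i\theta})=2\pi\,\delta_0(r,\theta)$ gives
\[
\big\langle(w_--w_+)\varphi_0,\varphi\big\rangle_\H
=2\pi\int_0^{2\pi}\d\theta\,\lim_{\varepsilon\searrow0}
\big\langle G_0\,\delta_0(1-\varepsilon,\theta)\varphi_0,\;GU\,\delta(1-\varepsilon,\theta)\varphi\big\rangle_\G ,
\]
where the strong limit of $G_0\,\delta_0(1-\varepsilon,\theta)\varphi_0$ exists because $2\pi\,G_0\,\delta_0(1-\varepsilon,\theta)\varphi_0=G_0R_0((1-\varepsilon)\e^{i\theta})\varphi_0-G_0R_0((1-\varepsilon)^{-1}\e^{i\theta})\varphi_0$ is a difference of strongly convergent families by~\eqref{eq_free_lim}/Remark~\ref{rem_ass_equiv}(a), and the weak limit of $GU\,\delta(1-\varepsilon,\theta)\varphi$ exists by Lemma~\ref{lemma_weak}(b) ($GU$ being weakly $U$-smooth). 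I would then put $\varphi=E^U(\Theta)w_+\psi_0=w_+E^{U_0}(\Theta)\psi_0$; pushing $\delta(1-\varepsilon,\theta)$ and $U$ through the intertwining relations rewrites $GU\,\delta(1-\varepsilon,\theta)w_+(\cdot)$ as $G\,\delta(1-\varepsilon,\theta)w_+U_0(\cdot)$, to which~\eqref{eq_limits} of Corollary~\ref{cor_id_stat}(a) applies, replacing it by $g_+(\varepsilon)\,GR((1-\varepsilon)\e^{i\theta})^*UJR_0((1-\varepsilon)\e^{i\theta})(\cdot)$ (the localisation to $\Theta$ being governed by Lemma~\ref{lemma_limits}(a)). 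What remains is then a purely algebraic reorganisation: using $V=G^*G_0$, the second resolvent equation~\eqref{eq_factors}, the in--out relation $R(\bar z^{-1})^*=-zU^*R(z)$, the simplification $(U_0-z)R_0(z)=U_0$, and the identities $g_+(\varepsilon)R(z)^*=(1-zU^*)\delta(1-\varepsilon,\theta)$ and $g_+(\varepsilon)R_0(z)=\delta_0(1-\varepsilon,\theta)(1-\bar z\,U_0)$ valid at $z=(1-\varepsilon)\e^{i\theta}$, one brings the integrand into the form $\langle T_+((1-\varepsilon)\e^{i\theta})\,\delta_0(1-\varepsilon,\theta)\varphi_0,\delta_0(1-\varepsilon,\theta)\psi_0\rangle_{\H_0}$ with $T_+(z)=U_0^*J^*V-V^*R(z)V=(GJU_0)^*G_0-G_0^*B(z)G_0$; exchanging $\lim_{\varepsilon\searrow0}$ with $\int_\Theta\d\theta$ is justified by Vitali's theorem exactly as in the proof of Lemma~\ref{lemma_limits}(c).

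For part~(b), I would apply part~(a) on a set $\D_0'\subset\H_{\rm ac}(U_0)$ dense in $\H_{\rm ac}(U_0)$ and on which~\eqref{eq_free_lim} still holds (the set supplied by Lemma~\ref{lemma_weak}(c), \eqref{eq_free_lim} following from the square-summability $\sum_n\|G_0U_0^n\psi_0\|^2<\infty$): for $\varphi_0,\psi_0\in\D_0'$ one has $\varphi_0=P_{\rm ac}(U_0)\varphi_0$, $\psi_0=P_{\rm ac}(U_0)\psi_0$, so, since $\delta_0(1-\varepsilon,\theta)$ commutes with $P_{\rm ac}(U_0)$, the right-hand side of~\eqref{eq_pm} is automatically of the form appearing in~\eqref{eq_pm_bis}. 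On the left-hand side of~\eqref{eq_pm}, Theorem~\ref{thm_strong_wave} gives $W_\pm=w_\pm$, whence $\langle E^U(\Theta)w_-\varphi_0,w_+\psi_0\rangle_\H=\langle E^{U_0}(\Theta)S(U,U_0,J)\varphi_0,\psi_0\rangle_{\H_0}$; as $S(U,U_0,J)$ and $w_\pm(U_0,U_0,J^*J)$ reduce to $\H_{\rm ac}(U_0)$ and decompose as $\int^\oplus S(\theta)\,\d\theta$ and $\int^\oplus u_\pm(\theta)\,\d\theta$ (cf.~\eqref{def_u_pm}), equation~\eqref{eq_pm} with a Borel set $\Theta\subset\widehat\sigma_0$ reads
\[
\int_\Theta\d\theta\,\big\langle(S(\theta)-u_\pm(\theta))(F_0\varphi_0)(\theta),(F_0\psi_0)(\theta)\big\rangle_{\h_0(\theta)}
=\pm2\pi\int_\Theta\d\theta\,\lim_{\varepsilon\searrow0}\big\langle P_{\rm ac}(U_0)T_\pm\big((1-\varepsilon)\e^{i\theta}\big)P_{\rm ac}(U_0)\delta_0(1-\varepsilon,\theta)\varphi_0,\delta_0(1-\varepsilon,\theta)\psi_0\big\rangle_{\H_0}.
\]
Since $\Theta$ is an arbitrary Borel subset of $\widehat\sigma_0$ and both integrands are locally integrable (by the uniform bounds of Lemma~\ref{lemma_weak}(a) and~\eqref{eq_bound_res}), Lebesgue differentiation yields~\eqref{eq_pm_bis} for $\varphi_0,\psi_0\in\D_0'$, which extends to all of $\H_{\rm ac}(U_0)$ --- hence to $\H_0$, only the absolutely continuous parts entering --- by density and boundedness.

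The main obstacle will be the algebraic identification of $T_\pm$ inside the off-diagonal term of part~(a): tracking the many operator factors (resolvents of $U$ and of $U_0$, inside and outside $\S^1$, the identification operator, and the factorising operators $G_0,G$) through the successive uses of the resolvent equation and of Corollary~\ref{cor_id_stat}, while simultaneously justifying each passage to the limit $\varepsilon\searrow0$. The latter is possible precisely because~\eqref{eq_free_lim} delivers a \emph{strong} limit, so that in the recurring pairings of a strongly convergent family against a weakly convergent one the limit commutes with the inner product --- this is also why one must work with~\eqref{eq_free_lim} rather than with a merely weak hypothesis.
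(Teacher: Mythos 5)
Your part~(a) is correct, and it is organised differently from the paper's proof: instead of transforming $\big\langle E^U(\Theta)w_-\varphi_0,w_+\psi_0\big\rangle_\H$ in one chain via \eqref{eq_limit_v1}--\eqref{eq_limit_v2} and the second resolvent equation, you split off the diagonal term with \eqref{eq_inter} and \eqref{eq_product} and compute $w_--w_+$ from the difference of the two instances of \eqref{eq_two_terms}. That difference formula is right (the $\delta J$ terms cancel and $zU_0^*R_0(z)=R_0(z)-1$ together with $R_0(r\e^{i\theta})-R_0(r^{-1}\e^{i\theta})=2\pi\delta_0(r,\theta)$ gives the $2\pi\langle G_0\delta_0\varphi_0,GU\delta\varphi\rangle_\G$ integrand), and the final ``algebraic reorganisation'' you leave implicit does go through: multiplying \eqref{eq_GJ} taken at $\bar z^{-1}$ by the in--out relations yields the exact identity $g_+(\varepsilon)GR(z)^*UJR_0(z)=GJU_0\;\!\delta_0(1-\varepsilon,\theta)-B(z)^*G_0\;\!\delta_0(1-\varepsilon,\theta)$ at $z=(1-\varepsilon)\e^{i\theta}$, which is precisely the ``ket'' form of $T_+(z)$, so the combination of Corollary \ref{cor_id_stat}(a) with the strong limit of $G_0\delta_0(1-\varepsilon,\theta)\varphi_0$ delivers \eqref{eq_pm}. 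This buys a slightly more transparent bookkeeping than the paper's single chain, at the price of invoking both representations \eqref{eq_rep_stat} simultaneously.

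Part~(b), however, has a genuine gap. The lemma asserts \eqref{eq_pm_bis} for $\varphi_0,\psi_0\in\D_0$, but you prove it only for vectors of an auxiliary set $\D_0'\subset\H_{\rm ac}(U_0)$ and then claim it ``extends to all of $\H_{\rm ac}(U_0)$ --- hence to $\H_0$ --- by density and boundedness''. That extension step fails: both sides of \eqref{eq_pm_bis} are a.e.-pointwise-in-$\theta$ limits as $\varepsilon\searrow0$, and such statements do not pass through norm approximation of the vectors, because at fixed $\theta$ there is no $\varepsilon$-uniform bound that is uniform along an approximating sequence ($\|\delta_0(1-\varepsilon,\theta)\|_{\B(\H_0)}$ diverges, and the bounds $c_{\varphi_0}(\theta)$ in \eqref{eq_bound_res} are vector-dependent); indeed, for vectors outside $\D_0$ the very existence of the limit on the right-hand side is not known --- it is exactly the per-vector hypothesis \eqref{eq_free_lim} that produces it, and Theorem \ref{thm_S_matrix} needs the additional assumption that $G_0$ be weakly $U_0$-smooth precisely because one cannot upgrade such pointwise statements by density. (Your construction of $\D_0'$ also silently uses that square-summability of $\{G_0U_0^n\psi_0\}_n$ yields the strong boundary values in \eqref{eq_free_lim}, a vector-valued Hardy-space fact not established in the paper.) The correct route, which is the paper's, avoids the detour: apply part~(a) directly to $\varphi_0,\psi_0\in\D_0$, use Theorem \ref{thm_strong_wave} to identify the left-hand side with $\big\langle E^{U_0}(\Theta)S(U,U_0,J)\varphi_0,\psi_0\big\rangle_{\H_0}$, let $\Theta\subset\widehat\sigma_0$ run over all Borel sets to get the a.e. identity, and then insert the projections $P_{\rm ac}(U_0)$ by noting that the left-hand side only sees the absolutely continuous parts while the $\varepsilon$-limits appearing in the derivation of \eqref{eq_pm} are insensitive to the singular components (Lemma \ref{lemma_limits}(b), equation \eqref{eq_ac}), rather than trying to change the class of admissible vectors.
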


\begin{proof}
(a) The stationary wave operators $w_\pm(U,U_0,J)$ and $w_\pm(U_0,U_0,J^*J)$ exist due
to Theorem \ref{thm_stat_wave} and Corollary \ref{cor_id_stat}. Let
$\Theta\subset[0,2\pi)$ be a Borel set and take $\varphi_0,\psi_0\in\D_0$. Using
successively Lemma \ref{lemma_limits}(c) with $\varphi=w_+(U,U_0,J)\psi_0$, the
resolvent equation \eqref{eq_factors}, the definition of
$\delta(1-\varepsilon,\theta)$, and \eqref{eq_limit_v1}-\eqref{eq_limit_v2} we get
\begin{align}
&\big\langle E^U(\Theta)w_-(U,U_0,J)\varphi_0,
w_+(U,U_0,J)\psi_0\big\rangle_\H\nonumber\\
&=\int_\Theta\d\theta\lim_{\varepsilon\searrow0}g_+(\varepsilon)
\big\langle R\big((1-\varepsilon)\e^{i\theta}\big)w_-(U,U_0,J)\varphi_0,
JR_0\big((1-\varepsilon)\e^{i\theta}\big)\psi_0\big\rangle_\H\nonumber\\
&=\int_\Theta\d\theta\lim_{\varepsilon\searrow0}\big\langle
\delta(1-\varepsilon,\theta)w_-(U,U_0,J)\varphi_0,J\psi_0\big\rangle_\H\nonumber\\
&\quad-\int_\Theta\d\theta\lim_{\varepsilon\searrow0}\big\langle
(1-\varepsilon)\e^{-i\theta}R_0\big((1-\varepsilon)\e^{i\theta}\big)^*U_0V^*U
\delta(1-\varepsilon,\theta)w_-(U,U_0,J)\varphi_0,\psi_0\big\rangle_{\H_0}\nonumber\\
&=-\int_\Theta\d\theta\lim_{\varepsilon\searrow0}g_-(\varepsilon)
\big\langle R\big((1-\varepsilon)^{-1}\e^{i\theta}\big)^*J
R_0\big((1-\varepsilon)^{-1}\e^{i\theta}\big)\varphi_0,J\psi_0\big\rangle_\H
\label{eq_first_term}\\
&\quad+\int_\Theta\d\theta\lim_{\varepsilon\searrow0}g_-(\varepsilon)(1-\varepsilon)
\e^{-i\theta}\big\langle R_0\big((1-\varepsilon)\e^{i\theta}\big)^*
U_0V^*UR\big((1-\varepsilon)^{-1}\e^{i\theta}\big)^*J
R_0\big((1-\varepsilon)^{-1}\e^{i\theta}\big)\varphi_0,\psi_0\big\rangle_{\H_0}.
\nonumber
\end{align}
Using the formula \eqref{eq_factors} for the resolvent
$R\big((1-\varepsilon)^{-1}\e^{i\theta}\big)^*$ and then Lemma \ref{lemma_limits}(c)
for the triple $(U_0,U_0,J^*J)$, we obtain that the first term in
\eqref{eq_first_term} is equal to
\begin{align*}
&\big\langle E^{U_0}(\Theta)w_-(U_0,U_0,J^*J)\varphi_0,\psi_0\big\rangle_{\H_0}\\
&-\int_\Theta\d\theta\lim_{\varepsilon\searrow0}g_-(\varepsilon)
(1-\varepsilon)^{-1}\e^{-i\theta}\big\langle U_0V^*U
R\big((1-\varepsilon)^{-1}\e^{i\theta}\big)^*J
R_0\big((1-\varepsilon)^{-1}\e^{i\theta}\big)\varphi_0,\\
&\quad R_0\big((1-\varepsilon)^{-1}\e^{i\theta}\big)\psi_0\big\rangle_{\H_0}.
\end{align*}
Replacing this expression in \eqref{eq_first_term} and using the formulas
$$
R_0\big((1-\varepsilon)^\pm\e^{i\theta}\big)
=\pm g_+(\varepsilon)^{-1}\big(1-(1-\varepsilon)^{\pm1}\e^{-i\theta}U_0\big)
\delta_0(1-\varepsilon,\theta),
$$
we get after some steps the equality
\begin{align*}
&\big\langle E^U(\Theta)w_-(U,U_0,J)\varphi_0,
w_+(U,U_0,J)\psi_0\big\rangle_\H\nonumber\\
&=\big\langle E^{U_0}(\Theta)w_-(U_0,U_0,J^*J)\varphi_0,\psi_0\big\rangle_{\H_0}\\
&\quad-2\pi\int_\Theta\d\theta\lim_{\varepsilon\searrow0}\big\langle
V^*UR\big((1-\varepsilon)^{-1}\e^{i\theta}\big)^*J
\big(1-(1-\varepsilon)^{-1}\e^{-i\theta}U_0\big)\delta_0(1-\varepsilon,\theta)
\varphi_0,\delta_0(1-\varepsilon,\theta)\psi_0\big\rangle_{\H_0}.
\end{align*}
Now, a direct calculation using the formula
$$
R\big((1-\varepsilon)^{-1}\e^{i\theta}\big)^*
=-(1-\varepsilon)^{-1}\e^{i\theta}U^*R\big((1-\varepsilon)\e^{i\theta}\big)
$$
and the resolvent equation \eqref{eq_factors} shows that
\begin{align*}
&V^*UR\big((1-\varepsilon)^{-1}\e^{i\theta}\big)^*J
\big(1-(1-\varepsilon)^{-1}\e^{-i\theta}U_0\big)\\
&=V^*JU_0-V^*R\big((1-\varepsilon)^{-1}\e^{i\theta}\big)V\\
&=T_-\big((1-\varepsilon)\e^{i\theta}\big).
\end{align*}
Thus, we obtain
\begin{align*}
&\big\langle E^U(\Theta)w_-(U,U_0,J)\varphi_0,w_+(U,U_0,J)\psi_0\big\rangle_\H\\
&=\big\langle E^{U_0}(\Theta)w_-(U_0,U_0,J^*J)\varphi_0,\psi_0\big\rangle_{\H_0}\\
&\quad-2\pi\int_\Theta\d\theta\lim_{\varepsilon\searrow0}\big\langle
T_-\big((1-\varepsilon)\e^{i\theta}\big)\delta_0(1-\varepsilon,\theta)\varphi_0,
\delta_0(1-\varepsilon,\theta)\psi_0\big\rangle_{\H_0},
\end{align*}
which is the equality with the minus sign in \eqref{eq_pm}. Since the equality with
the plus sign is obtained in a similar way, this concludes the proof of the claim.

(b) Since the assumptions of Theorem \ref{thm_strong_wave} are satisfied, the
scattering operator $S(U,U_0,J)$ exists and satisfies
$S(U,U_0,J)=w_+(U,U_0,J)^*w_-(U,U_0,J)$. Furthermore, the operators $S(U,U_0,J)$ and
$w_\pm(U_0,U_0,J^*J)$ vanish on $\H_{\rm sc}(U_0)$ and have range in
$\H_{\rm ac}(U_0)$. Therefore, we obtain for any Borel set $\Theta\subset[0,2\pi)$ and
$\varphi_0,\psi_0\in\D_0$
\begin{align*}
&\big\langle E^U(\Theta)w_-(U,U_0,J)\varphi_0,w_+(U,U_0,J)\psi_0\big\rangle_\H
-\big\langle E^{U_0}(\Theta)w_\pm(U_0,U_0,J^*J)\varphi_0,\psi_0\big\rangle_{\H_0}\\
&=\big\langle F_0^{\rm(ac)}E^{U_0}(\Theta)S^{\rm(ac)}(U,U_0,J)
\big(F_0^{\rm(ac)}\big)^*F_0\varphi_0,F_0\psi_0\big\rangle_{\H_{\rm ac}(U)}\\
&\quad-\big\langle F_0^{\rm(ac)}E^{U_0}(\Theta)w_\pm^{\rm(ac)}(U_0,U_0,J^*J)
\big(F_0^{\rm(ac)}\big)^*F_0\varphi_0,F_0\psi_0\big\rangle_{\H_{\rm ac}(U_0)}\\
&=\int_{\widehat\sigma_0\cap\Theta}\d\theta\,
\big\langle\big(S(\theta)-u_\pm(\theta)\big)(F_0\varphi_0)(\theta),
(F_0\psi_0)(\theta)\big\rangle_{\h_0(\theta)}.
\end{align*}
Since $\Theta$ is arbitrary and $F_0=F_0P_{\rm ac}(U_0)$, we infer from this equation
and \eqref{eq_pm} that
\begin{align*}
&\big\langle\big(S(\theta)-u_\pm(\theta)\big)(F_0\varphi_0)(\theta),
(F_0\psi_0)(\theta)\big\rangle_{\h_0(\theta)}\\
&=\pm2\pi\lim_{\varepsilon\searrow0}\big\langle
T_\pm\big((1-\varepsilon)\e^{i\theta}\big)\delta_0(1-\varepsilon,\theta)
P_{\rm ac}(U_0)\varphi_0,\delta_0(1-\varepsilon,\theta)P_{\rm ac}(U_0)
\psi_0\big\rangle_{\H_0}\\
&=\pm2\pi\lim_{\varepsilon\searrow0}\big\langle P_{\rm ac}(U_0)
T_\pm\big((1-\varepsilon)\e^{i\theta}\big)P_{\rm ac}(U_0)\;\!
\delta_0(1-\varepsilon,\theta)\varphi_0,\delta_0(1-\varepsilon,\theta)
\psi_0\big\rangle_{\H_0},\quad\hbox{a.e. $\theta\in\widehat\sigma_0$,}
\end{align*}
as desired.
\end{proof}

The first term in the operator $T_+\big((1-\varepsilon)\e^{i\theta}\big)$ (resp.
$T_-\big((1-\varepsilon)\e^{i\theta}\big)$) can be factorised as $(GJU_0)^*G_0$ (resp.
$G_0^*(GJU_0)$). So, we need to determine conditions guaranteing that the operator
$GJU_0$ is weakly $U_0$-smooth\;\!:

\begin{Lemma}\label{lemma_GJU_0}
Assume that \eqref{eq_B_limit} is satisfied and that $G_0$ is weakly $U_0$-smooth.
Then, the operator $GJU_0$ is weakly $U_0$-smooth.
\end{Lemma}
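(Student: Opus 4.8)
The plan is to verify estimate \eqref{eq_estimate_2} of Lemma \ref{lemma_weak}(a) for the operator $T_0=GJU_0$ and $\Theta=[0,2\pi)$, using the factorised resolvent identity \eqref{eq_GJ} together with the two hypotheses. Since $U_0$ is unitary, it commutes with $R_0(z)=(1-zU_0^*)^{-1}$; so multiplying \eqref{eq_GJ} on the right by $U_0$ and using $U_0R_0(z)=R_0(z)U_0$ and $U_0^*R_0(z)U_0=R_0(z)$ yields
$$
GJU_0R_0(z)=GR(z)JU_0+B\big({\bar z}^{-1}\big)^*G_0R_0(z),\quad z\in\C\setminus\S^1.
$$

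Putting $z=(1-\varepsilon)^{\pm1}\e^{i\theta}$ (so that ${\bar z}^{-1}=(1-\varepsilon)^{\mp1}\e^{i\theta}$), taking operator norms, and using $\|GR(z)JU_0\|_{\B(\H_0,\G)}\le\|GR(z)\|_{\B(\H,\G)}\|J\|_{\B(\H_0,\H)}$ (as $U_0$ is unitary) together with submultiplicativity of the operator norm in the second term, I would obtain
\begin{align*}
&|g_\pm(\varepsilon)|^{1/2}\big\|GJU_0R_0\big((1-\varepsilon)^{\pm1}\e^{i\theta}\big)\big\|_{\B(\H_0,\G)}\\
&\le\|J\|_{\B(\H_0,\H)}\Big(|g_\pm(\varepsilon)|^{1/2}\big\|GR\big((1-\varepsilon)^{\pm1}\e^{i\theta}\big)\big\|_{\B(\H,\G)}\Big)\\
&\quad+\big\|B\big((1-\varepsilon)^{\mp1}\e^{i\theta}\big)\big\|_{\B(\G)}\Big(|g_\pm(\varepsilon)|^{1/2}\big\|G_0R_0\big((1-\varepsilon)^{\pm1}\e^{i\theta}\big)\big\|_{\B(\H_0,\G)}\Big).
\end{align*}

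It then remains to bound the three factors on the right uniformly for $\varepsilon\in(0,1)$ and a.e. $\theta\in[0,2\pi)$. First, \eqref{eq_B_limit} implies that $G$ is weakly $U$-smooth (as noted at the start of the proof of Theorem \ref{thm_strong_wave}, since $2\pi\;\!G\delta(1-\varepsilon,\theta)G^*=B\big((1-\varepsilon)\e^{i\theta}\big)-B\big((1-\varepsilon)^{-1}\e^{i\theta}\big)$), so Lemma \ref{lemma_weak}(a) applied to the pair $(U,G)$ with $\Theta=[0,2\pi)$ bounds the first factor in parentheses by $c_1(\theta)^{1/2}$ for some $c_1:[0,2\pi)\to[0,\infty)$. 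Second, $G_0$ is weakly $U_0$-smooth by hypothesis, so Lemma \ref{lemma_weak}(a) applied to the pair $(U_0,G_0)$ with $\Theta=[0,2\pi)$ bounds the second factor in parentheses by $c_3(\theta)^{1/2}$ for some $c_3:[0,2\pi)\to[0,\infty)$. Third, the geometric series \eqref{eq_series} gives $\big\|B\big((1-\varepsilon)^{\mp1}\e^{i\theta}\big)\big\|_{\B(\G)}\le\|G\|_{\B(\H,\G)}^2\,\big|(1-\varepsilon)^{\mp1}-1\big|^{-1}$, which stays bounded whenever $\varepsilon$ stays away from $0$, while on an interval $(0,\varepsilon_0)$ the operators $B\big((1-\varepsilon)^{\mp1}\e^{i\theta}\big)$ converge weakly as $\varepsilon\searrow0$ by \eqref{eq_B_limit} and hence are bounded for a.e. $\theta$ (uniform boundedness principle, together with the norm continuity of $\varepsilon\mapsto B\big((1-\varepsilon)^{\mp1}\e^{i\theta}\big)$ on $(0,1)$); so this factor is $\le c_2(\theta)$ for all $\varepsilon\in(0,1)$ and a.e. $\theta$. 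Collecting the three bounds establishes \eqref{eq_estimate_2} with $c(\theta):=\big(\|J\|_{\B(\H_0,\H)}c_1(\theta)^{1/2}+c_2(\theta)c_3(\theta)^{1/2}\big)^2$, and by Lemma \ref{lemma_weak}(a) this proves that $GJU_0$ is weakly $U_0$-smooth.

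The algebraic steps (multiplying \eqref{eq_GJ} by $U_0$, commuting $U_0$ past $R_0(z)$) and the applications of Lemma \ref{lemma_weak}(a) are routine. The only point requiring a little attention is that the estimates of Lemma \ref{lemma_weak}(a) must hold for \emph{all} $\varepsilon\in(0,1)$, so the control of $\big\|B\big((1-\varepsilon)^{\mp1}\e^{i\theta}\big)\big\|$ cannot rest on \eqref{eq_B_limit} alone; this forces the split of the $\varepsilon$-range into an interval $(0,\varepsilon_0)$, where \eqref{eq_B_limit} applies, and its complement in $(0,1)$, where the crude resolvent bound from \eqref{eq_series} suffices.
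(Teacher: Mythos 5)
Your proof is correct and follows essentially the same route as the paper: both use the identity \eqref{eq_GJ} to split the resolvent expression into a term controlled by Lemma \ref{lemma_weak}(a) for the pair $(U,G)$ (with $G$ weakly $U$-smooth thanks to \eqref{eq_B_limit}), a term controlled by Lemma \ref{lemma_weak}(a) for $(U_0,G_0)$, and the factor $\|B\big((1-\varepsilon)^{\mp1}\e^{i\theta}\big)\|_{\B(\G)}$ controlled via \eqref{eq_B_limit}, and then invoke \eqref{eq_estimate_2}. The only (harmless) differences are that you verify the estimate directly for $T_0=GJU_0$ by commuting $U_0$ through $R_0(z)$, while the paper first shows $GJ$ is weakly $U_0$-smooth and then appends $U_0$, and that you spell out more carefully (splitting the range of $\varepsilon$) why $\|B\big((1-\varepsilon)^{\mp1}\e^{i\theta}\big)\|_{\B(\G)}$ is bounded uniformly in $\varepsilon\in(0,1)$ for a.e. $\theta$, a point the paper leaves implicit.
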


\begin{proof}
Using \eqref{eq_GJ} with $z=(1-\varepsilon)^{\pm1}\e^{i\theta}$, we obtain
\begin{align*}
&\big|g_\pm(\varepsilon)\big|^{1/2}\;\!
\big\|GJR_0\big((1-\varepsilon)^{\pm1}\e^{i\theta}\big)\big\|_{\B(\H_0,\G)}\\
&\le\left(\big|g_\pm(\varepsilon)\big|^{1/2}\;\!
\big\|GR\big((1-\varepsilon)^{\pm1}\e^{i\theta}\big)\big\|_{\B(\H,\G)}\right)
\|J\|_{\B(\H_0,\H)}\\
&\quad+\big\|B\big((1-\varepsilon)^{\mp1}\e^{i\theta}\big)\big\|_{\B(\G)}
\left(\big|g_\pm(\varepsilon)\big|^{1/2}\;\!\big\|G_0
R_0\big((1-\varepsilon)^{\pm1}\e^{i\theta}\big)\big\|_{\B(\H_0,\G)}\right),
\end{align*}
and the assumption \eqref{eq_B_limit} implies that $G$ is weakly $U$-smooth.
Therefore, all the factors on the r.h.s. above are uniformly bounded in
$\varepsilon\in(0,1)$ as a consequence of \eqref{eq_B_limit} and an application of
Lemma \ref{lemma_weak}(a) for the pairs $(U_0,G_0)$ and $(U,G)$. In consequence, the
bound \eqref{eq_estimate_2} is satisfied with $T_0=GJ$, and Lemma \ref{lemma_weak}(a)
implies that the operator $GJ$ is weakly $U_0$-smooth. Therefore, the operator $GJU_0$
is weakly $U_0$-smooth too.
\end{proof}

Using what precedes, we can conclude in the next theorem the derivation of the
representation formulas for the scattering matrix. We recall that the operators
$Z_0(\theta,\;\!\cdot\;\!)$ and $B_\pm(\theta)$ have been defined in Lemma
\ref{lemma_F_0}(b) and Equation \eqref{eq_B_limit}, respectively.

\begin{Theorem}[Scattering matrix]\label{thm_S_matrix}
Assume that \eqref{eq_free_lim} is satisfied for a dense set $\D_0\subset\H_0$, that
\eqref{eq_B_limit} is verified, and that $G_0$ is weakly $U_0$-smooth. Then, we have
for a.e. $\theta\in\widehat\sigma_0$ the representation formulas for the scattering
matrix\;\!:
\begin{align}
S(\theta)&=u_+(\theta)+2\pi\big(Z_0(\theta,GJU_0)Z_0(\theta,G_0)^*
-Z_0(\theta,G_0)B_+(\theta)Z_0(\theta,G_0)^*\big),\label{eq_S_matrix_1}\\
S(\theta)&=u_-(\theta)-2\pi\big(Z_0(\theta,G_0)Z_0(\theta,GJU_0)^*
-Z_0(\theta,G_0)B_-(\theta)Z_0(\theta,G_0)^*\big).\label{eq_S_matrix_2}
\end{align}
\end{Theorem}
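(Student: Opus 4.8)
The plan is to start from the boundary-value identity of Lemma~\ref{lemma_scatt}(b) and to recognise its right-hand side as one of the integral kernels produced by Lemma~\ref{lemma_F_0}(c). First I would note that all the relevant hypotheses are in force: by \eqref{eq_B_limit} the operator $G$ is weakly $U$-smooth (as in the proof of Theorem~\ref{thm_strong_wave}), so Theorem~\ref{thm_strong_wave} applies and the scattering operator $S(U,U_0,J)$ and the scattering matrix $S(\theta)$ are well defined; furthermore $w_\pm(U_0,U_0,J^*J)$ exist by Corollary~\ref{cor_id_stat}(b), so the operators $u_\pm(\theta)$ of \eqref{def_u_pm} are well defined, and all the assumptions of Lemma~\ref{lemma_scatt} are met. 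In particular \eqref{eq_pm_bis} holds for $\varphi_0,\psi_0\in\D_0$ and a.e. $\theta\in\widehat\sigma_0$, so the problem reduces to evaluating the limit on its right-hand side.

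Next I would insert the factorisations $V=G^*G_0$ and $B(z)=GR(z)G^*$. For the plus sign this gives $T_+\big((1-\varepsilon)\e^{i\theta}\big)=(GJU_0)^*G_0-G_0^*B\big((1-\varepsilon)\e^{i\theta}\big)G_0$, and a computation as in the proof of Lemma~\ref{lemma_scatt}(a) gives $T_-\big((1-\varepsilon)\e^{i\theta}\big)=G_0^*(GJU_0)-G_0^*B\big((1-\varepsilon)^{-1}\e^{i\theta}\big)G_0$ for the minus sign. Each of the four resulting summands has the form $\pm P_{\rm ac}(U_0)T_0^*A(\tau)T_1P_{\rm ac}(U_0)$ with $T_0,T_1\in\{G_0,GJU_0\}$ and $A(\tau)$ equal either to $1_\G$ or to $B\big((1-\tau)^{\pm1}\e^{i\theta}\big)$, so I would check the hypotheses of Lemma~\ref{lemma_F_0}(c): $G_0$ is weakly $U_0$-smooth by assumption and $GJU_0$ is weakly $U_0$-smooth by Lemma~\ref{lemma_GJU_0}; the strong limit $\slim_{\varepsilon\searrow0}G_0\delta_0(1-\varepsilon,\theta)\varphi_0$ exists for $\varphi_0\in\D_0$ and a.e. $\theta$, because the definition of $\delta_0$ and the analogue of the first resolvent equation give $2\pi\delta_0(1-\varepsilon,\theta)=R_0\big((1-\varepsilon)\e^{i\theta}\big)-R_0\big((1-\varepsilon)^{-1}\e^{i\theta}\big)$ while both vectors $G_0R_0\big((1-\varepsilon)^{\pm1}\e^{i\theta}\big)\varphi_0$ converge strongly in $\G$ by \eqref{eq_free_lim} (equivalently \eqref{eq_free_lim_bis}); and $\wlim_{\tau\searrow0}B\big((1-\tau)^{\pm1}\e^{i\theta}\big)=B_\pm(\theta)$ by \eqref{eq_B_limit}. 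For the summands with $A=1_\G$ only one of the two strong limits in \eqref{eq_strong_lim} is required, and it is the one carried by $G_0$, so \eqref{eq_lim_kernel} applies; for the summands with $A(\tau)=B(\cdot)$ both strong limits are required, but both are again the one for $G_0$, so \eqref{eq_lim_kernel_bis} applies with $A=B_\pm(\theta)$.

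Applying Lemma~\ref{lemma_F_0}(c) with $\mu=\nu=\theta$, the single-parameter limit in \eqref{eq_pm_bis} --- being the restriction to the diagonal of the joint double, resp. triple, limit --- is equal to $\big\langle\big(Z_0(\theta,GJU_0)Z_0(\theta,G_0)^*-Z_0(\theta,G_0)B_+(\theta)Z_0(\theta,G_0)^*\big)(F_0\varphi_0)(\theta),(F_0\psi_0)(\theta)\big\rangle_{\h_0(\theta)}$ in the plus case and to $\big\langle\big(Z_0(\theta,G_0)Z_0(\theta,GJU_0)^*-Z_0(\theta,G_0)B_-(\theta)Z_0(\theta,G_0)^*\big)(F_0\varphi_0)(\theta),(F_0\psi_0)(\theta)\big\rangle_{\h_0(\theta)}$ in the minus case. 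Feeding this back into \eqref{eq_pm_bis} shows, for all $\varphi_0,\psi_0\in\D_0$ and a.e. $\theta\in\widehat\sigma_0$, that $\big\langle\big(S(\theta)-u_\pm(\theta)\mp2\pi\,\Xi_\pm(\theta)\big)(F_0\varphi_0)(\theta),(F_0\psi_0)(\theta)\big\rangle_{\h_0(\theta)}=0$, where $\Xi_\pm(\theta)\in\B\big(\h_0(\theta)\big)$ denotes the bracketed operator in \eqref{eq_S_matrix_1}, resp. \eqref{eq_S_matrix_2} (its boundedness for a.e. $\theta$ follows from Lemma~\ref{lemma_F_0}(b) applied to $G_0$ and $GJU_0$ and from \eqref{eq_B_limit}). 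Since $\D_0$ is dense in $\H_0$, Lemma~\ref{lemma_F_0}(a) then forces $S(\theta)-u_\pm(\theta)\mp2\pi\,\Xi_\pm(\theta)=0$ for a.e. $\theta\in\widehat\sigma_0$, which is exactly \eqref{eq_S_matrix_1}--\eqref{eq_S_matrix_2}.

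The point I expect to need the most care is the bookkeeping in the second paragraph: one must arrange the roles of $T_0$ and $T_1$ in Lemma~\ref{lemma_F_0}(c) so that the strong-limit hypothesis always falls on the factor $G_0$ --- the factor $GJU_0$ is only weakly $U_0$-smooth and has merely a weak limit against $\delta_0$ --- and one must be sure that the single-parameter limit of \eqref{eq_pm_bis} really coincides with the multi-parameter limits of Lemma~\ref{lemma_F_0}(c), which is immediate once the latter are read as joint limits.
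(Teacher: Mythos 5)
Your proposal is correct and follows essentially the same route as the paper's proof: start from \eqref{eq_pm_bis}, factorise $T_\pm$ via $V=G^*G_0$ and $B(z)=GR(z)G^*$, evaluate the two resulting contributions with Lemma \ref{lemma_F_0}(c) (equations \eqref{eq_lim_kernel} and \eqref{eq_lim_kernel_bis}, using Lemma \ref{lemma_GJU_0} for $GJU_0$ and the strong limits carried by $G_0$), and conclude by density of $\D_0$ and Lemma \ref{lemma_F_0}(a). Your explicit justification of $\slim_{\varepsilon\searrow0}G_0\;\!\delta_0(1-\varepsilon,\theta)\varphi_0$ via $2\pi\;\!\delta_0(1-\varepsilon,\theta)=R_0\big((1-\varepsilon)\e^{i\theta}\big)-R_0\big((1-\varepsilon)^{-1}\e^{i\theta}\big)$ and \eqref{eq_free_lim_bis} just spells out what the paper leaves to Remark \ref{rem_ass_equiv}(a).
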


\begin{proof}
We only give the proof of \eqref{eq_S_matrix_1}, since the proof of
\eqref{eq_S_matrix_2} is similar.

Equation \eqref{eq_pm_bis} holds under our assumptions. Therefore, thanks to Lemma
\ref{lemma_F_0}(a), in order to prove \eqref{eq_S_matrix_1} it is sufficient to show
for any $\varphi_0,\psi_0\in\D_0$ and a.e. $\theta\in\widehat\sigma_0$ that
\begin{align}
&\lim_{\varepsilon\searrow0}\big\langle
P_{\rm ac}(U_0)T_+\big((1-\varepsilon)\e^{i\theta}\big)P_{\rm ac}(U_0)\;\!
\delta_0(1-\varepsilon,\theta)\varphi_0,\delta_0(1-\varepsilon,\theta)\psi_0
\big\rangle_{\H_0}\nonumber\\
&=\big\langle\big(Z_0(\theta,GJU_0)Z_0(\theta,G_0)^*
-Z_0(\theta,G_0)B_+(\theta)Z_0(\theta,G_0)^*\big)(F_0\varphi_0)(\theta),
(F_0\psi_0)(\theta)\big\rangle_{\H_0}.\label{eq_to_prove}
\end{align}
The first term in the operator $T_+\big((1-\varepsilon)\e^{i\theta}\big)$ gives the
contribution
\begin{equation}\label{eq_cont_1}
\lim_{\varepsilon\searrow0}\big\langle P_{\rm ac}(U_0)(GJU_0)^*G_0P_{\rm ac}(U_0)\;\!
\delta_0(1-\varepsilon,\theta)\varphi_0,\delta_0(1-\varepsilon,\theta)\psi_0
\big\rangle_{\H_0}.
\end{equation}
Since $GJU_0$ is weakly $U_0$-smooth by Lemma \ref{lemma_GJU_0}, $G_0$ is weakly
$U_0$-smooth by assumption, and the limit
$\slim_{\varepsilon\searrow0}G_0\delta_0(1-\varepsilon,\theta)\varphi_0$ exists a.e.
$\theta\in[0,2\pi)$ due to Remark \ref{rem_ass_equiv}(a), we can apply Lemma
\ref{lemma_F_0}(c), equation \eqref{eq_lim_kernel}, to infer that
\begin{equation}\label{eq_result_1}
\eqref{eq_cont_1}
=\big\langle Z_0(\theta,GJU_0)Z_0(\theta,G_0)^*(F_0\varphi_0)(\theta),
(F_0\psi_0)(\theta)\big\rangle_{\h_0(\theta)},
\quad\hbox{a.e. $\theta\in\widehat\sigma_0$.}
\end{equation}
Similarly, the second term in the operator $T_+\big((1-\varepsilon)\e^{i\theta}\big)$
gives the contribution
\begin{equation}\label{eq_cont_2}
\lim_{\varepsilon\searrow0}\big\langle
P_{\rm ac}(U_0)G_0^*B\big((1-\varepsilon)\e^{i\theta}\big)G_0P_{\rm ac}(U_0)\;\!
\delta_0(1-\varepsilon,\theta)\varphi_0,\delta_0(1-\varepsilon,\theta)\psi_0
\big\rangle_{\H_0}.
\end{equation}
Since $G_0$ is weakly $U_0$-smooth by assumption, the limits
$\slim_{\varepsilon\searrow0}G_0\delta_0(1-\varepsilon,\theta)\varphi_0$ and
$\slim_{\varepsilon\searrow0}G_0\delta_0(1-\varepsilon,\theta)\psi_0$ exist a.e.
$\theta\in[0,2\pi)$ due to Remark \ref{rem_ass_equiv}(a), and
$\wlim_{\varepsilon\searrow0}B\big((1-\varepsilon)\e^{i\theta}\big)=B_+(\theta)$ for
a.e. $\theta\in[0,2\pi)$ due to \eqref{eq_B_limit}, we can apply Lemma
\ref{lemma_F_0}(c), equation \eqref{eq_lim_kernel_bis}, to infer that
\begin{equation}\label{eq_result_2}
\eqref{eq_cont_2}
=\big\langle Z_0(\theta,G_0)B_+(\theta)Z_0(\theta,G_0)^*(F_0\varphi_0)(\theta),
(F_0\psi_0)(\theta)\big\rangle_{\h_0(\theta)},
\quad\hbox{a.e. $\theta\in\widehat\sigma_0$.}
\end{equation}
Combining \eqref{eq_result_1} and \eqref{eq_result_2}, we obtain \eqref{eq_to_prove}
as desired.
\end{proof}

As in the self-adjoint case, it is possible to establish various variants of the
representation formulas \eqref{eq_S_matrix_1}-\eqref{eq_S_matrix_2}. But we prefer not
to do it here for the sake of conciseness. Instead, as a closing observation for the
section, we recall that in the one-Hilbert space case $\H_0=\H$ and $J=1_{\H_0}$ we
have $u_\pm(\theta)=1_{\h_0(\theta)}$ for a.e. $\theta\in\widehat\sigma_0$. Therefore,
in such a case, the formulas \eqref{eq_S_matrix_1}-\eqref{eq_S_matrix_2} reduce to
\begin{align*}
S(\theta)&=1_{\h_0(\theta)}+2\pi\big(Z_0(\theta,GU_0)Z_0(\theta,G_0)^*
-Z_0(\theta,G_0)B_+(\theta)Z_0(\theta,G_0)^*\big),\\
S(\theta)&=1_{\h_0(\theta)}-2\pi\big(Z_0(\theta,G_0)Z_0(\theta,GU_0)^*
-Z_0(\theta,G_0)B_-(\theta)Z_0(\theta,G_0)^*\big),
\end{align*}
for a.e. $\theta\in\widehat\sigma_0$

%--------------------------------------------------------------------------------------
\section{Application to anisotropic quantum walks}\label{section_walks}
\setcounter{equation}{0}
%--------------------------------------------------------------------------------------

In this section, we illustrate the theory of Sections
\ref{section_wave}-\ref{section_matrix} by deriving representation formulas for the
wave operators and the scattering matrix of quantum walks with an anisotropic coin. We
start by recalling the definition of quantum walks with an anisotropic coin, as
presented in \cite{RST_2018,RST_2019}.

Let $\H$ be the Hilbert space of square-summable $\C^2$-valued sequences
$$
\H:=\ell^2(\Z,\C^2)
=\left\{\Psi:\Z\to\C^2\mid\sum_{x\in\Z}\|\Psi(x)\|_2^2<\infty\right\},
$$
with $\|\cdot\|_2$ the usual norm on $\C^2$. Then, the evolution operator of the
one-dimensional quantum walk that we consider is defined as $U:=SC$, with $S$ the
shift operator given by
$$
(S\Psi)(x)
:=\begin{pmatrix}
\Psi^{(0)}(x+1)\\
\Psi^{(1)}(x-1)
\end{pmatrix},
\quad
\Psi
=\begin{pmatrix}
\Psi^{(0)}\\
\Psi^{(1)}
\end{pmatrix}\in\H,~x\in\Z,
$$
and $C$ the coin operator given by
$$
(C\Psi)(x):=C(x)\Psi(x),\quad\Psi\in\H,~x\in\Z,~C(x)\in\U(2).
$$
The coin operator $C$ is assumed to have an anisotropic behaviour at infinity; it
converges with short-range rate to two asymptotic coin operators, one on the left and
one on the right, in the following way\;\!:

\begin{Assumption}[Short-range]\label{ass_short}
There exist $C_\ell,C_{\rm r}\in\U(2)$, $\kappa_\ell,\kappa_{\rm r}>0$, and
$\varepsilon_\ell,\varepsilon_{\rm r}>0$ such that
\begin{align*}
&\big\|C(x)-C_\ell\big\|_{\B(\C^2)}
\le\kappa_\ell\;\!|x|^{-1-\varepsilon_\ell}\quad\mathrm{if}~x<0\\
&\big\|C(x)-C_{\rm r}\big\|_{\B(\C^2)}
\le\kappa_{\rm r}\;\!|x|^{-1-\varepsilon_{\rm r}}\quad\mathrm{if}~x>0,
\end{align*}
where the indexes $\ell$ and ${\rm r}$ stand for ``left" and ``right".
\end{Assumption}

This assumption provides two new unitary operators $U_\star:=SC_\star$
($\star=\ell,{\rm r}$), with resolvent $R_\star$, describing the asymptotic behaviour
of $U$ on the left and on the right. It also motivates to define the free evolution
operator as the direct sum operator $U_0:=U_\ell\oplus U_{\rm r}$ in the Hilbert space
$\H_0:=\H\oplus\H$. The identification operator $J\in\B(\H_0,\H)$ is given by
$$
J(\Psi_0):=j_\ell\;\!\Psi_{0,\ell}+j_{\rm r}\;\!\Psi_{0,\rm r},
\quad\Psi_0=(\Psi_{0,\ell},\Psi_{0,\rm r})\in\H_0,
$$
with
$$
j_{\rm r}(x):=
\begin{cases}
1 & \hbox{if $x\ge0$}\\
0 & \hbox{if $x\le-1$}
\end{cases}
\quad\hbox{and}\quad
j_\ell:=1-j_{\rm r}.
$$	

The nature of the spectrum of $U_\star$ and $U_0$ depends on the choice of the
matrices $C_\star\in\U(2)$. For simplicity, we consider here only matrices $C_\star$
such that $U_\star$ and $U_0$ have purely absolutely continuous spectrum. Namely,
first we parameterise the matrices $C_\star$ as
$$
C_\star=\e^{i\delta_\star/2}
\begin{pmatrix}
a_\star\e^{i(\alpha_\star-\delta_\star/2)}
& b_\star\e^{i(\beta_\star-\delta_\star/2)}\\
-b_\star\e^{-i(\beta_\star-\delta_\star/2)}
& a_\star\e^{-i(\alpha_\star-\delta_\star/2)}
\end{pmatrix}
$$
with $a_\star,b_\star\in[0,1]$ satisfying $a_\star^2+b_\star^2=1$, and
$\alpha_\star,\beta_\star,\delta_\star\in(-\pi,\pi]$. Then, we note from Theorem 2.2,
Proposition 4.5(c) and Lemma 4.6(d) of \cite{RST_2018} that the operators $U_\star$
and $U_0$ have purely absolutely continuous spectra if $a_\star\in(0,1]$, i.e.,
$$
\sigma(U_\star)=\sigma_{\rm ac}(U_\star)
\quad\hbox{and}\quad
\sigma(U_0)
=\sigma_{\rm ac}(U_0)
=\sigma_{\rm ac}(U_\ell)\cup\sigma_{\rm ac}(U_{\rm r}),
$$
and that the essential spectrum of $U$ coincides with the spectrum of $U_0$\;\!:
$$
\sigma_{\rm ess}(U)=\sigma(U_\ell)\cup\sigma(U_{\rm r})=\sigma(U_0).
$$
Furthermore, if we set
$
\tau(U):=\partial\sigma(U_\ell)\cup\partial\sigma(U_{\rm r}),
$
with $\partial\sigma(U_\star)$ the boundaries of $\sigma(U_\star)$ in $[0,2\pi)$, then
for any closed set $\Theta\subset[0,2\pi)\setminus\tau(U)$, the operator $U$ has at
most finitely many eigenvalues in $\Theta$, each one of finite multiplicity, and $U$
has no singular continuous spectrum in $\Theta$ \cite[Thm.~2.4]{RST_2018}.

In \cite[Thm.~3.3 \& Prop.~3.4]{RST_2019}, it has also been shown that the strong wave
operators
$$
W_\pm(U,U_0,J)
:=\slim_{n\to\pm\infty}U^nJU_0^{-n}P_{\rm ac}(U_0)
=\slim_{n\to\pm\infty}U^nJU_0^{-n}
$$
exist and are complete.\footnote{In \cite[Thm.~3.3]{RST_2019} it has been shown that
$
\overline{\Ran\big(W_\pm(U,U_0,J)\big)}=\H_{\rm ac}(U)
$,
and in \cite[Prop.~3.4]{RST_2019} it has been shown that the operators
$W_\pm(U,U_0,J)$ are partial isometries with initial sets $\H_0^\pm\subset\H_0$. Since
partial isometries have closed ranges, it follows that the wave operators are
complete, i.e. $\Ran\big(W_\pm(U,U_0,J)\big)=\H_{\rm ac}(U)$. As a by-product, we
infer that the scattering operator $S(U,U_0,J)=W_+(U,U_0,J)^*\;\!W_-(U,U_0,J)$ is
unitary from $\H_0^-$ to $\H_0^+$.} In addition, the relations
$$
W_\pm(U,U_0,J)^*=W_\pm(U_0,U,J^*)
\quad\hbox{and}\quad
W_\pm(U,U_0,J)\;\!\eta(U_0)=\eta(U)\;\!W_\pm(U,U_0,J)
$$
hold for each bounded Borel function $\eta:\S^1\to\C$. To go further and show that
the strong wave operators coincide with the stationary wave operators, we first need
a lemma on the factorisation of the perturbation $V=JU_0-UJ$. We use the notation
$Q$ for the position operator in $\H$\;\!:
$$
(Q\Psi)(x):=x\;\!\Psi(x),
\quad x\in\Z,~\Psi\in\dom(Q):=\big\{\Psi\in\H\mid\|Q\Psi\|_\H<\infty\big\}.
$$

\begin{Lemma}\label{lemma_V_trace}
Suppose that Assumption \ref{ass_short} holds. Then, the perturbation $V$ factorises
as $V=G^*G_0$, with $G_0\in S_2(\H_0)$ and $G\in S_2(\H,\H_0)$.
\end{Lemma}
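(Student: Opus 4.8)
The plan is to compute $V=JU_0-UJ$ explicitly, split it into a finite-rank summand and a short-range diagonal summand, check that each is trace class, and then obtain the Hilbert--Schmidt factorisation from the polar decomposition of $V$. First I would write $\Psi_0=(\Psi_{0,\ell},\Psi_{0,{\rm r}})\in\H_0=\H\oplus\H$ and use $U=SC$, $U_\star=SC_\star$ and the definition of $J$ to get $V\Psi_0=V_\ell\;\!\Psi_{0,\ell}+V_{\rm r}\;\!\Psi_{0,{\rm r}}$ with $V_\star:=j_\star SC_\star-SCj_\star\in\B(\H)$ for $\star=\ell,{\rm r}$. Since $j_\star$, $C$ and the constant matrix $C_\star$ all act by multiplication, and a scalar cutoff commutes with a constant matrix and with $C$, I would rewrite
\[
V_\star=[j_\star,S]\;\!C_\star+S\;\!j_\star(C_\star-C),\qquad\star=\ell,{\rm r}.
\]

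Next I would treat the two summands. Computing $[j_\star,S]$ directly on a sequence, and using that $S$ moves each of the two $\C^2$-components by exactly one site (in opposite directions), one sees that $[j_\star,S]$ maps into the $\C^2$-valued sequences supported on $\{-1,0\}$ and hence is of finite rank; thus $[j_\star,S]\;\!C_\star$ is finite rank, in particular trace class. For the second summand, $S$ is unitary while $j_\star(C_\star-C)$ is the diagonal (multiplication) operator given by $x\mapsto j_\star(x)\big(C_\star-C(x)\big)$, whose $\B(\C^2)$-norm at site $x$ is $\le\kappa_\star\;\!|x|^{-1-\varepsilon_\star}$ for all but finitely many $x$ by Assumption \ref{ass_short}; since a diagonal operator on $\ell^2(\Z,\C^2)$ is trace class as soon as the $\B(\C^2)$-norms of its entries are summable and $\sum_{x\neq0}|x|^{-1-\varepsilon_\star}<\infty$, the operator $j_\star(C_\star-C)$, hence also $S\;\!j_\star(C_\star-C)$, is trace class. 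Therefore each $V_\star$ is trace class, and consequently $V:\H_0\to\H$ — being $V_\ell$ and $V_{\rm r}$ composed with the two coordinate projections of $\H_0$ onto $\H$ — is trace class.

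Finally, to get the factorisation, I would use the polar decomposition $V=W|V|$ with $W\in\B(\H_0,\H)$ a partial isometry and $|V|=(V^*V)^{1/2}\ge0$; since $V$ is trace class, $|V|$ is trace class on $\H_0$ and $|V|^{1/2}\in S_2(\H_0)$. Setting $G_0:=|V|^{1/2}\in S_2(\H_0)$ and $G:=|V|^{1/2}W^*\in S_2(\H,\H_0)$ (a product of a Hilbert--Schmidt operator with a bounded one), one obtains $G^*G_0=W|V|^{1/2}|V|^{1/2}=W|V|=V$, which is the asserted factorisation — precisely the setting of Example \ref{ex_trace_class}. The only step requiring any care is the algebraic manipulation producing $V_\star$ and the computation of $[j_\star,S]$, where one must keep track of the opposite shift directions of the two components of $S$; the rest is routine.
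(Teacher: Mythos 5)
Your proof is correct, but it takes a genuinely different route from the paper's. The paper simply imports from the proof of \cite[Thm.~3.3]{RST_2019} the explicit factorisation $V=G^*G_0$ with $G_0=\langle Q\rangle^{-s}\oplus\langle Q\rangle^{-s}$ and $G=D^*\langle Q\rangle^{-s}$ for some $s>1/2$ and $D\in\B(\H_0,\H)$, and then only has to check that $\langle Q\rangle^{-s}\in S_2(\H)$, which is the one-line computation $\big\|\langle Q\rangle^{-s}\big\|_{S_2(\H)}^2=2\sum_{x\in\Z}\langle x\rangle^{-2s}<\infty$. You instead compute $V$ from scratch: the identity $V_\star=[j_\star,S]\;\!C_\star+S\;\!j_\star(C_\star-C)$ is correct (it uses that the scalar cutoff $j_\star$ commutes with the multiplication operators $C$ and $C_\star$), the commutator $[j_\star,S]$ indeed has range in the sequences supported on $\{-1,0\}$ and is therefore finite rank, and $j_\star(C_\star-C)$ is a block-diagonal operator whose $\B(\C^2)$-norms are summable by Assumption \ref{ass_short} (with finitely many exceptional sites), hence trace class; so $V$ is trace class, and the polar decomposition $V=W|V|$ gives $G_0:=|V|^{1/2}\in S_2(\H_0)$ and $G:=|V|^{1/2}W^*\in S_2(\H,\H_0)$ with $G^*G_0=V$. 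This is a valid, self-contained proof of the lemma as stated, and it even yields the stronger fact that $V$ itself is trace class, which is exactly the situation of Example \ref{ex_trace_class}. What the paper's choice buys is explicitness: the operators $G_0,G$ of \eqref{def_G_0_G} are concrete weighted operators, and it is these specific $G_0,G$ that enter the representation formulas for the scattering matrix in Theorem \ref{thm_walks}(b); with your abstract polar-decomposition factorisation the hypotheses of Theorems \ref{thm_strong_wave} and \ref{thm_S_matrix} are still satisfied, but the resulting objects $Z_0(\theta,G_0)$ and $B_\pm(\theta)$ are less amenable to explicit computation.
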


\begin{proof}
We know from the proof of \cite[Thm.~3.3]{RST_2019} that there exist $s>1/2$ and
$D\in\B(\H_0,\H)$ such that the operators
\begin{equation}\label{def_G_0_G}
G_0:=\langle Q\rangle^{-s}\oplus\langle Q\rangle^{-s}
\quad\hbox{and}\quad
G:=D^*\langle Q\rangle^{-s}
\end{equation}
satisfy $V=G^*G_0$. Therefore, in order to prove the claim, it is sufficient to show
that $\langle Q\rangle^{-s}\in S_2(\H)$. Let $(v_y)_{y\in\Z}$ be the orthonormal
basis of $\ell^2(\Z)$ given by $v_y(x):=\delta_{xy}$ ($x\in\Z$) and let
$e_1:=\big(\begin{smallmatrix}1\\0\end{smallmatrix}\big)$,
$e_2:=\big(\begin{smallmatrix}0\\1\end{smallmatrix}\big)$ be the standard basis of
$\C^2$. Then, the family $(v_y\otimes e_i)_{y\in\Z,\,i=1,2}$ is an orthonormal basis
of $\H$, and a direct calculation gives
$$
\big\|\langle Q\rangle^{-s}\big\|_{S_2(\H)}^2
=2\sum_{x\in\Z}\langle x\rangle^{-2s}
<\infty
$$
which proves the claim.
\end{proof}

In the next theorem, we show that the strong wave operators coincide with the
stationary wave operators, and we give representation formulas for the stationary wave
operators and the scattering matrix.

\begin{Theorem}\label{thm_walks}
Suppose that Assumption \ref{ass_short} holds and that $a_\star\in(0,1]$.
\begin{enumerate}
\item[(a)] The strong wave operators $W_\pm(U,U_0,J)$ coincide with the stationary
wave operators $w_\pm(U,U_0,J)$, and we have for any
$\Psi_0=(\Psi_{0,\ell},\Psi_{0,\rm r})\in\H_0$ and $\psi\in\H$ the representation
formulas
\begin{align}
&\big\langle w_\pm(U,U_0,J)\psi_0,\psi\big\rangle_\H\label{eq_stat_walk}\\
&=\pm\int_0^{2\pi}\d\theta\,\lim_{\varepsilon\searrow0}g_\pm(\varepsilon)\;\!
\big\langle j_\ell R_\ell\big((1-\varepsilon)^{\pm1}\e^{i\theta}\big)\Psi_{0,\ell}
+j_{\rm r}R_{\rm r}\big((1-\varepsilon)^{\pm1}\e^{i\theta}\big)\Psi_{0,\rm r},
R\big((1-\varepsilon)^{\pm1}\e^{i\theta}\big)\psi\big\rangle_\H.\nonumber
\end{align}

\item[(b)] We have for a.e. $\theta\in\widehat\sigma_0$ the representation formulas
for the scattering matrix
\begin{align*}
S(\theta)
&=u_+(\theta)+2\pi\big(Z_0(\theta,GJU_0)Z_0(\theta,G_0)^*
-Z_0(\theta,G_0)B_+(\theta)Z_0(\theta,G_0)^*\big)\\
S(\theta)
&=u_-(\theta)-2\pi\big(Z_0(\theta,G_0)Z_0(\theta,GJU_0)^*
-Z_0(\theta,G_0)B_-(\theta)Z_0(\theta,G_0)^*\big).
\end{align*}
with $G_0,G$ as in \eqref{def_G_0_G}.
\end{enumerate}
\end{Theorem}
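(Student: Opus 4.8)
The strategy is to reduce both statements to the abstract results of Sections~\ref{section_wave} and~\ref{section_matrix} by means of the factorisation provided by Lemma~\ref{lemma_V_trace}. That lemma gives operators $G_0\in S_2(\H_0)$ and $G\in S_2(\H,\H_0)$ with $V=G^*G_0$, so we are exactly in the Hilbert-Schmidt case of Example~\ref{ex_trace_class}, with auxiliary Hilbert space $\G=\H_0$. Hence the assumptions of Theorem~\ref{thm_strong_wave} hold with the dense set $\D_0=\H_0$; in particular \eqref{eq_free_lim} is satisfied for every $\varphi_0\in\H_0$ and \eqref{eq_B_limit} is verified. Theorem~\ref{thm_strong_wave} then gives at once that $W_\pm(U,U_0,J)$ exist and coincide with the stationary wave operators $w_\pm(U,U_0,J)$, which is the first assertion of~(a) (the existence alone being already known from \cite[Thm.~3.3]{RST_2019}, and $a_\star\in(0,1]$ ensuring $\H_{\rm ac}(U_0)=\H_0$, so that $P_{\rm ac}(U_0)=1_{\H_0}$).

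For the representation formula \eqref{eq_stat_walk}, I would apply Theorem~\ref{thm_stat_wave}, whose hypotheses are contained in those just checked and so hold with $\D_0=\H_0$. It yields, for all $\psi_0\in\H_0$ and $\psi\in\H$, the identity $\big\langle w_\pm(U,U_0,J)\psi_0,\psi\big\rangle_\H=\int_0^{2\pi}\d\theta\,a_\pm(\psi_0,\psi,\theta)$ with $a_\pm$ as in \eqref{del_a_pm}. It then only remains to insert the explicit form of $U_0$ and $J$: since $U_0=U_\ell\oplus U_{\rm r}$ one has $R_0(z)=R_\ell(z)\oplus R_{\rm r}(z)$, and the definition of $J$ gives $JR_0(z)\Psi_0=j_\ell R_\ell(z)\Psi_{0,\ell}+j_{\rm r}R_{\rm r}(z)\Psi_{0,\rm r}$ for $\Psi_0=(\Psi_{0,\ell},\Psi_{0,\rm r})\in\H_0$. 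Substituting this into the defining expression for $a_\pm$ produces \eqref{eq_stat_walk}.

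For part~(b) I would invoke Theorem~\ref{thm_S_matrix}, whose hypotheses are \eqref{eq_free_lim} for a dense set, \eqref{eq_B_limit}, and the weak $U_0$-smoothness of $G_0$. The first two have already been secured through Example~\ref{ex_trace_class}; the last one, which is not stated there, I would check separately. Combining the relation $2\pi\,\delta_0(1-\varepsilon,\theta)=R_0\big((1-\varepsilon)\e^{i\theta}\big)-R_0\big((1-\varepsilon)^{-1}\e^{i\theta}\big)$ (recalled just before \eqref{eq_norm}) with Lemma~\ref{lemma_HS}(b) applied to $T_0=T_1=G_0\in S_2(\H_0)$, the operators $G_0\,\delta_0(1-\varepsilon,\theta)\,G_0^*$ converge in $S_2(\H_0)$, hence weakly in $\B(\H_0)$, as $\varepsilon\searrow0$ for a.e.\ $\theta\in[0,2\pi)$; this is exactly condition \eqref{def_weak} with $\Theta=[0,2\pi)$, so $G_0$ is weakly $U_0$-smooth. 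Theorem~\ref{thm_S_matrix} then delivers \eqref{eq_S_matrix_1}--\eqref{eq_S_matrix_2}, which are precisely the two formulas of~(b) with $G_0,G$ as in \eqref{def_G_0_G}.

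None of these steps should present a genuine difficulty, since the analytic content has been concentrated in Lemma~\ref{lemma_V_trace} and in the abstract theorems of Sections~\ref{sec_res}--\ref{section_matrix}. The only points requiring a little care are the verification that $G_0$ is weakly $U_0$-smooth (carried out above via the Hilbert-Schmidt limits of Lemma~\ref{lemma_HS}) and the routine bookkeeping of the direct-sum structure $\H_0=\H\oplus\H$, $U_0=U_\ell\oplus U_{\rm r}$ when specialising the abstract representation formulas to this model.
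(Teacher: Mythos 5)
Your proposal is correct and follows essentially the same route as the paper: Lemma \ref{lemma_V_trace} plus Example \ref{ex_trace_class} to verify the hypotheses of Theorems \ref{thm_strong_wave}, \ref{thm_stat_wave} and \ref{thm_S_matrix} with $\D_0=\H_0$, then specialisation of \eqref{eq_rep_stat} to the direct-sum structure $U_0=U_\ell\oplus U_{\rm r}$. The only difference is that you spell out, via the identity $2\pi\,\delta_0(1-\varepsilon,\theta)=R_0\big((1-\varepsilon)\e^{i\theta}\big)-R_0\big((1-\varepsilon)^{-1}\e^{i\theta}\big)$ and Lemma \ref{lemma_HS}(b), why $G_0$ is weakly $U_0$-smooth, a step the paper asserts without detail; your justification is correct.
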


\begin{proof}
(a) We know from Lemma \ref{lemma_V_trace} that $V=G^*G_0$ with $G_0\in S_2(\H_0)$ and
$G\in S_2(\H,\H_0)$. So, it follows from Example \ref{ex_trace_class} that the assumptions
of Theorem \ref{thm_strong_wave} are satisfied for the set $\D_0=\H_0$. Since Theorem
\ref{thm_stat_wave} also holds under the assumptions of Theorem \ref{thm_strong_wave},
we obtain that the strong wave operators $W_\pm(U,U_0,J)$ coincide with the stationary
wave operators $w_\pm(U,U_0,J)$, and that
$$
\big\langle w_\pm(U,U_0,J)\psi_0,\psi\big\rangle_\H
=\pm\int_0^{2\pi}\d\theta\,\lim_{\varepsilon\searrow0}g_\pm(\varepsilon)\;\!
\big\langle JR_0\big((1-\varepsilon)^{\pm1}\e^{i\theta}\big)\psi_0,
R\big((1-\varepsilon)^{\pm1}\e^{i\theta}\big)\psi\big\rangle_\H.
$$
Now, this equation implies \eqref{eq_stat_walk} if one takes into account the
definitions of $J$, $R_0=R_\ell\oplus R_{\rm r}$ and $\Psi_0$.

(b) We know from point (a) that the assumptions of Theorem \ref{thm_strong_wave} are
satisfied for the set $\D_0=\H_0$. We also know from Lemma \ref{lemma_V_trace} that
$G_0\in S_2(\H_0)$. Thus, Lemma \ref{lemma_HS}(b) implies that $G_0$ is weakly
$U_0$-smooth. Therefore, all the assumptions of Theorem \ref{thm_S_matrix} are
satisfied, and the claim is a direct consequence of that theorem.
\end{proof}

%--------------------------------------------------------------------------------------
%\bibliography{../bibliographie/bibliographie}
%--------------------------------------------------------------------------------------

\def\cprime{$'$} \def\polhk#1{\setbox0=\hbox{#1}{\ooalign{\hidewidth
  \lower1.5ex\hbox{`}\hidewidth\crcr\unhbox0}}}
  \def\polhk#1{\setbox0=\hbox{#1}{\ooalign{\hidewidth
  \lower1.5ex\hbox{`}\hidewidth\crcr\unhbox0}}}
  \def\polhk#1{\setbox0=\hbox{#1}{\ooalign{\hidewidth
  \lower1.5ex\hbox{`}\hidewidth\crcr\unhbox0}}} \def\cprime{$'$}
  \def\cprime{$'$} \def\polhk#1{\setbox0=\hbox{#1}{\ooalign{\hidewidth
  \lower1.5ex\hbox{`}\hidewidth\crcr\unhbox0}}}
  \def\polhk#1{\setbox0=\hbox{#1}{\ooalign{\hidewidth
  \lower1.5ex\hbox{`}\hidewidth\crcr\unhbox0}}} \def\cprime{$'$}
  \def\cprime{$'$} \def\cprime{$'$}

%--------------------------------------------------------------------------------------

\end{document}